\titleformat*{\section}{\large\bfseries}
\titleformat*{\subsection}{\it}
\newtheorem{thm}{Theorem}
\newtheorem{lem}{Lemma}
\newtheorem{cor}{Corollary}
\newtheorem{prp}{Proposition}
\def\al{{\alpha}}
\def\be{{\beta}}
\def\ga{{\gamma}}
\def\de{{\delta}}
\def\ep{{\varepsilon}}
\def\si{{\sigma}}
\def\om{{\omega}}
\def\th{{\theta}}
\def\Ga{\Gamma}
\def\At{\widetilde{A}}
\def\Bt{\widetilde{B}}
\def\at{{\widetilde a}}
\def\bt{{\widetilde b}}
\def\mut{{\tilde \mu}}
\def\bet{{\tilde \be}}
\def\yt{{\tilde y}}
\def\pit{{\tilde \pi}}
\def\al{{\alpha}}
\def\be{{\beta}}
\def\ga{{\gamma}}
\def\de{{\delta}}
\def\ep{{\varepsilon}}
\def\si{{\sigma}}
\def\om{{\omega}}
\def\th{{\theta}}
\def\Ga{{\Gamma}}
\def\non{{\nonumber}}
\def\Dc{{\cal D}}
\def\tr{{\rm tr\,}}
\title{{\bf Log-Regularly Varying Scale Mixture of Normals for Robust Regression}}
\date{}
\author{}
\begin{document}

\maketitle
\doublespacing

\vspace{-1.5cm}
\begin{center}
Yasuyuki Hamura$^1$, Kaoru Irie$^2$ and Shonosuke Sugasawa$^3$
\end{center}

\noindent
$^1$Graduate School of Economics, The University of Tokyo\\
$^2$Faculty of Economics, The University of Tokyo\\
$^3$Center for Spatial Information Science, The University of Tokyo

\vspace{5mm}
\begin{center}
{\bf \large Abstract}
\end{center}
Linear regression with the classical normality assumption for the error distribution may lead to an undesirable posterior inference of regression coefficients due to the potential outliers. This paper considers the finite mixture of two components with thin and heavy tails as the error distribution, which has been routinely employed in applied statistics. For the heavily-tailed component, we introduce the novel class of distributions;  their densities are log-regularly varying and have heavier tails than those of Cauchy distribution, yet they are expressed as a scale mixture of normal distributions and enable the efficient posterior inference by Gibbs sampler. We prove the robustness to outliers of the posterior distributions under the proposed models with a minimal set of assumptions, which justifies the use of shrinkage priors with unbounded densities for the coefficient vector in the presence of outliers. The extensive comparison with the existing methods via simulation study shows the improved performance of our model in point and interval estimation, as well as its computational efficiency. Further, we confirm the posterior robustness of our method in the empirical study with the shrinkage priors for regression coefficients.

\bigskip\noindent
{\bf Key words}: Robust statistics; Linear regression; Heavily-tailed distribution; Scale mixture of normals; Log-regularly varying density; Gibbs sampler.

\newpage 
\section{Introduction}

The robustness to outliers in linear regression models has been well-studied for its importance, and the research on theory and methodology for robust statistics has been accumulated in the past years. Yet, the modeling of error distributions in practice to accommodate outliers has not advanced significantly from Student's $t$-distribution. 
This is contrary to the situations of modern applied statistics where data are enriched by massive observations and the more extreme outliers are expected to be observed and affect the posterior inference. 
Our research aims to contribute to the development of novel error distributions for outlier-robustness which we believe are still in demand.

In the full posterior inference, the concept of robustness is not limited to the point estimation, but targets the whole posterior distributions of parameters of interest. Also known as outlier-proneness or outlier-rejection, the posterior robustness defines the property of posterior distributions that the difference of posteriors with and without outliers diminishes as the values of outliers become extreme \citep{o1979outlier}. The series of research on posterior robustness has revealed several variations of the (sufficient) conditions for error distributions to achieve the robustness, and provided the specific error distributions that meet such conditions; see the detailed review by \cite{o2012bayesian}. 
The recent studies introduce the concept of regularly varying density functions \citep{andrade2006bayesian,andrade2011bayesian}, which are later extended to log-regularly varying functions \citep{desgagne2015robustness,desgagne2019bayesian}, and provide the robustness conditions for the partial and whole posteriors of interest to be unaffected by outliers. As an error distribution whose density function is log-regularly varying, \cite{Gag2019} proposes log-Pareto truncated normal (LPTN) distribution, which replaces the thin-tails of normal distribution by those of heavily-tailed log-Pareto distribution. Despite its desirable property of robustness, the posterior inference for the regression model with the LPTN error distribution is challenging. The class of LPTN distributions has hyperparameters that are difficult to tune and/or estimate, such as the truncation points of Gaussian tails. 
In addition, several parameters cannot be sampled from their conditional posteriors directly, and one has to rely on Metropolis-Hastings algorithm. These factors may lead to the increased computational cost under the LPTN models, which also limits the use of the LPTN distribution under more general linear models including random effects.

We, in contrast, explore a different class of error distributions that have received less attention in the methodological literature. Following \cite{box1968bayesian}, we model the error distribution by the finite mixture of two components; one has thinner tails such as normal distributions, the other is extremely heavily-tailed to accommodate potential outliers, and both are centered at zero. While remaining in the general class of scale mixture of normals \citep{west1984outlier}, this simple, intuitive approach to the modeling of outliers contrasts the literature listed above, where the error is modeled by a single, continuous distribution. 
The structure of finite mixture helps controlling the effect of outliers on the posteriors of parameters of interest, while allowing the conditional conjugacy for posterior computation. For these practical utilities, the finite mixture models have been routinely practiced in applied statistics (see, for example, \citealt{carter1994gibbs}, \citealt{West1997a}, \citealt{fruhwirth2006finite} \citealt{tak2019robust}, and \citealt{da2020bayesian}). 
In this research, we specifically focus on this class of error distributions in proving the posterior robustness.

For the heavily-tailed distribution that comprises the finite mixture, Student's $t$-distribution is still regarded thin-tailed for its outlier sensitivity. We propose the use of distributions that has been utilized in the robust inference for high-dimensional count data \citep{hamura2019global} for their extremely-heavy tails. This is another scale mixture of normals by the gamma distribution with the hierarchical structure on shape parameters, which enables the posterior inference by a simple but efficient Gibbs sampler. The tails of these distributions are heavier than those of Cauchy distributions. In fact, the density of the proposed error distribution is log-regularly varying, as those of other heavily-tailed distributions considered for posterior robustness, including LPTN distributions. 

The proposed finite mixture of the thinly-tailed and heavily-tailed distributions is named the extremely heavily-tailed error (EH) distribution. We prove the posterior robustness under the linear regression models with the EH distribution. The density tails of the EH distribution play an important role in the proof of tail robustness; in fact, the class of error distributions whose density tails are thinner than those of the EH distribution is unable to attain the posterior robustness. The EH distribution is too heavily tailed to have finite moments, but the posterior means and variances of parameters of interest do exist in most situations. 

The set of assumptions required for the proof of posterior robustness is minimal. The assumptions restrict the available priors for the regression coefficients and observational scale, but do not exclude the use of the unbounded prior densities. The posterior robustness is valid even for advanced shrinkage priors, e.g., horseshoe priors \citep{carvalho2009handling,carvalho2010horseshoe}. As a result, the robustness under shrinkage/variable selection is also in the scope of our research. In the empirical studies, we practice the posterior inference for the linear regression models with both the horseshoe prior and the EH distribution for illustration.

The rest of the paper is organized as follows.
In Section \ref{sec:method}, we introduce the new error distribution and describe its use in linear regression models, followed by the theoretical results on the posterior robustness. 
The algorithm for posterior computation is provided in Section~\ref{sec:comp} with the discussion on its computational efficiency. 
In Section~\ref{sec:sim}, we carry out simulation studies to compare the proposed method with existing models, including $t$-distribution and the finite mixture of normal and $t$-distributions. 
In Section~\ref{sec:app}, we illustrate the proposed method using two famous datasets: Boston housing data and diabetes data. 
The paper is concluded with further discussions in Section~\ref{sec:disc}.
The R code implementing the proposed method is available at GitHub repository (\url{https://github.com/sshonosuke/EHE}).

\section{A new error distribution for robust regression}\label{sec:method}

\subsection{Extremely heavy-tailed error distributions}\label{sec:EH}
Let $y_i$ be a response variable and $x_i$ be an associated $p$-dimensional vector of covariates, for $i=1,\ldots,n$.
We consider a linear regression model, $y_i=x_i^t\beta+\sigma\ep_i$, where $\beta$ is a $p$-dimensional vector of regression coefficients and $\sigma$ is an unknown scale parameter. 
The error terms, $\ep_1,\dots ,\ep_n$, are directly linked to the posterior robustness; modeling those errors simply by Gaussian distributions makes the posterior inference very sensitive to outliers.

To achieve the posterior robustness, we introduce a local scale variable $u_i$ and assume that the error distribution is conditionally Gaussian, as $\ep_i|u_i\sim N(0,u_i)$. 
Under this setting, an outlier is explained solely by the extreme value of the error term generated by the higher value of the local scale variable. 
A typical choice of the distribution of $u_i$ is the inverse-gamma distribution, which leads to the marginal distribution of $\ep_i$ being the $t$-distribution. 
However, as shown in \cite{Gag2019} and our main theorem, this choice does not hold the desirable robustness properties of the posterior distribution, even when the distribution of $\ep_i$ is Cauchy distribution. 

As stated in the introduction, the error distribution in this study is not a single continuous mixture of normals, but the mixture of two components. We introduce latent binary variable $z_i$ and model it by ${\rm Pr}[z_i=1] = 1 - {\rm Pr}[z_i=0] = s$ with mixing probability $s\in (0,1)$. If $z_i=0$, then the error distribution is simply the standard normal distribution, i.e., $\epsilon _i | (u_i,z_i=0) \sim N(0,1)$. If $z_i=1$, then we consider the scale mixture of normals with latent scale $u_i$ as $\ep_i|(u_i,z_i=1)\sim N(0,u_i)$. 
The latent scale follows the newly-introduced, extremely heavily-tailed distribution, $u_i \sim H(\cdot ;\gamma)$, where $H$ is the proper probability distribution on $(0,\infty)$ with parameter $\ga > 0$. The density function of $H$-distribution is given by
\begin{equation}\label{U-dist}
H(u;\gamma)={\gamma \over 1 + u} {1 \over \{ 1 + \log (1 + u ) \} ^{1 + \gamma }}, \ \ \ u>0.
\end{equation}
Preparing two distributions in modeling of the error distribution is based on the same modeling philosophy of \cite{box1968bayesian}; the first component generates non-outlying errors and the second component is supposed to absorb outlying errors.
As the model for the variance of outlying errors, the second component $H(\cdot;\gamma)$ is extremely heavily-tailed since $H(u;\gamma)\approx u^{-1}(\log u)^{-1-\gamma}$ as $u\to \infty$, which is known as log-regularly varying density \citep{desgagne2015robustness}. This property is inherited to the marginal distribution of error term $\ep_i$ and plays an important role in the robustness properties of the posterior distribution.

Under the formulation (\ref{U-dist}), the marginal distribution of $\ep_i$ is obtained as 
\begin{equation}\label{EH}
f_{\rm{EH}} ( \ep _i ) = (1 - s) \phi ( \ep _i ; 0, 1) + s \int_{0}^{\infty } \phi ( \ep _i ; 0, u_i) H(u_i; \ga ) du_i \text{,}
\end{equation}
where $\phi (\ep_i;0,u)$ is the normal density with mean zero and variance $u$. 
The second component is the scale mixture of normals, but does not admit any closed-form expression. 
To handle with this component in posterior computation, as we see later in Section~\ref{sec:post}, we utilize the augmentation of $H$-distribution by a couple of gamma-distributed state variables. By this augmentation, the posterior inference for this model is straightforward.

A notable property of the new error distribution is its extremely heavy tails shown in the following proposition, with the proof left in the Appendix. 

\begin{prp}\label{prop:dens}
The density (\ref{EH}) satisfies
\begin{equation*}
f_{\rm EH}(x)\approx |x|^{-1}(\log|x|)^{-1-\gamma}
\end{equation*} 
for large $|x|$ if $s>0$.
\end{prp}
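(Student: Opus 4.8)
The plan is to discard the Gaussian piece of $f_{\rm EH}$ and reduce the statement to the tail behaviour of the scale-mixture integral. The term $(1-s)\phi(x;0,1)$ decays like $e^{-x^{2}/2}$, hence is $o\bigl(|x|^{-1}(\log|x|)^{-1-\gamma}\bigr)$ and can be ignored, so it suffices to analyse $I(x):=\int_{0}^{\infty}\phi(x;0,u)H(u;\gamma)\,du$ as $|x|\to\infty$. Since the kernel $\phi(x;0,u)=(2\pi u)^{-1/2}e^{-x^{2}/(2u)}$, viewed as a function of $u$, puts its mass at scale $u\asymp x^{2}$, I would substitute $u=x^{2}v$ to obtain $I(x)=|x|\int_{0}^{\infty}(2\pi v)^{-1/2}e^{-1/(2v)}H(x^{2}v;\gamma)\,dv$, equivalently $|x|(\log|x|)^{1+\gamma}I(x)=\int_{0}^{\infty}g_{x}(v)\,dv$ with $g_{x}(v):=(2\pi v)^{-1/2}e^{-1/(2v)}\cdot\dfrac{\gamma x^{2}}{1+x^{2}v}\cdot\Bigl(\dfrac{\log|x|}{1+\log(1+x^{2}v)}\Bigr)^{1+\gamma}$.

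Next I would compute the pointwise limit of $g_{x}(v)$: for each fixed $v>0$ one has $x^{2}/(1+x^{2}v)\to 1/v$ and $\{1+\log(1+x^{2}v)\}/\log|x|\to 2$, so $g_{x}(v)\to \gamma\,2^{-1-\gamma}(2\pi v)^{-1/2}v^{-1}e^{-1/(2v)}$. The limit is integrable — the substitution $w=1/v$ reduces it to a Gamma integral, with total value $\gamma\,2^{-1-\gamma}$ — so, granting the interchange of limit and integral discussed below, $|x|(\log|x|)^{1+\gamma}I(x)\to \gamma\,2^{-1-\gamma}$, i.e.\ $f_{\rm EH}(x)\sim s\,\gamma\,2^{-1-\gamma}\,|x|^{-1}(\log|x|)^{-1-\gamma}$ as $x\to+\infty$; the case $x\to-\infty$ follows from the symmetry of $f_{\rm EH}$. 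This establishes the claimed rate and even pins down an explicit proportionality constant.

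The one step requiring real care is the dominated-convergence argument, that is, exhibiting an integrable majorant of $g_{x}$ that does not depend on $x$ for all large $|x|$. I would split the range of $v$ into three pieces. For $v\ge 1$ the logarithmic factor is bounded by $2^{-1-\gamma}$ and $x^{2}/(1+x^{2}v)\le 1/v$, giving $g_{x}(v)\lesssim v^{-3/2}e^{-1/(2v)}$; for $x^{-1}\le v<1$ the logarithmic ratio is at most $1$ and the same bound applies; for $0<v<x^{-1}$ one uses the super-exponential smallness $e^{-1/(2v)}\le e^{-x/2}$ to absorb the crude polynomial factors $x^{2}(\log|x|)^{1+\gamma}$, leaving a bound of the form $(2\pi v)^{-1/2}e^{-1/(4v)}$. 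Each piece is integrable on its region and free of $x$, so dominated convergence applies and the proof is complete. I expect the region $0<v<x^{-1}$ to be the main obstacle, since there the elementary bounds diverge and one has to exploit the Gaussian decay of the kernel quantitatively.
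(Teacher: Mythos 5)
Your proposal is correct, and its skeleton coincides with the paper's: drop the normal component, substitute $u=x^{2}v$, identify the pointwise limit $\gamma 2^{-1-\gamma}(2\pi)^{-1/2}v^{-3/2}e^{-1/(2v)}$ of the rescaled integrand, and conclude by dominated convergence that the ratio tends to $\gamma/2^{1+\gamma}$ — the same constant the paper obtains. Where you genuinely deviate is in the two supporting steps. For the reduction to the mixture component, you simply observe that $(1-s)\phi(x;0,1)$ is $o\bigl(|x|^{-1}(\log|x|)^{-1-\gamma}\bigr)$, which is cleaner than the paper's route through its Lemma on $\lim_{|z|\to\infty}f_0(z)/f_1(z)=0$ (itself resting on an external asymptotic result of Hamura et al.). For the domination, the paper invokes its Lemma S1(a), the uniform bound $\{1+\log(1+M)\}/\{1+\log(1+Mv)\}\le\max\{1,v^{-1}\}$ proved via an exponential–integral representation, which yields a single majorant of the form $C\{v^{-3/2}+v^{-5/2-\gamma}\}e^{-1/(2v)}$ valid for all $v$ at once; you instead split the $v$-range at $1$ and $|x|^{-1}$, use elementary log comparisons on the two outer pieces, and absorb the factor $x^{2}(\log|x|)^{1+\gamma}$ into the Gaussian decay $e^{-1/(2v)}\le e^{-1/(4v)}e^{-|x|/4}$ on $(0,|x|^{-1})$, leaving an $x$-free majorant such as $C_1v^{-3/2}e^{-1/(2v)}+C_2v^{-1/2}e^{-1/(4v)}$. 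Both arguments are sound; the paper's inequality has the advantage of being reusable (it is also the workhorse in the proof of Theorem 2.1 and the tail-heaviness theorem), while yours is self-contained and more elementary, at the minor cost of tracking the $x$-dependent splitting points (which is harmless, since the regionwise bounds combine into one fixed integrable majorant for all large $|x|$).
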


The above proposition shows that the EH distribution directly inherits the heavy tails of the mixing $H$-distribution in the second component of the density in (\ref{EH}). 
As a result, the density of the EH distribution is a family of log-regularly varying functions. In addition, the tails of the EH density are heavier than those of Cauchy distribution; $f_C(x)\approx |x|^{-2}$. 
Based on this observation, we name the new error distribution in (\ref{EH}) {\it extremely heavily-tailed error (EH) distribution}.

The density function in (\ref{EH}) is plotted in Figure~\ref{fig:dens} for $s=0.05, 0.1$ and $0.2$. 
It is observed that the shape of the EH distribution is very similar to one of the standard normal distribution around the origin, whereas the tails become heavier as the mixture weight $s$ increases.
Figure~\ref{fig:dist} shows the cumulative distribution functions (CDFs) of $H$-distributions and the EH distributions. 
The tails of the proposed EH distributions are heavier than those of Cauchy distribution, as seen in the right panel. This fact is also confirmed via the comparison of CDFs of $H$- and inverse-gamma distributions in the left panel. 
It is the property of the EH density shown in these figures that leads to the robustness properties for the posterior distribution, which we show in Theorem~\ref{thm:normalized}.

\begin{figure}[!htb]
\centering
\includegraphics[width=14cm,clip]{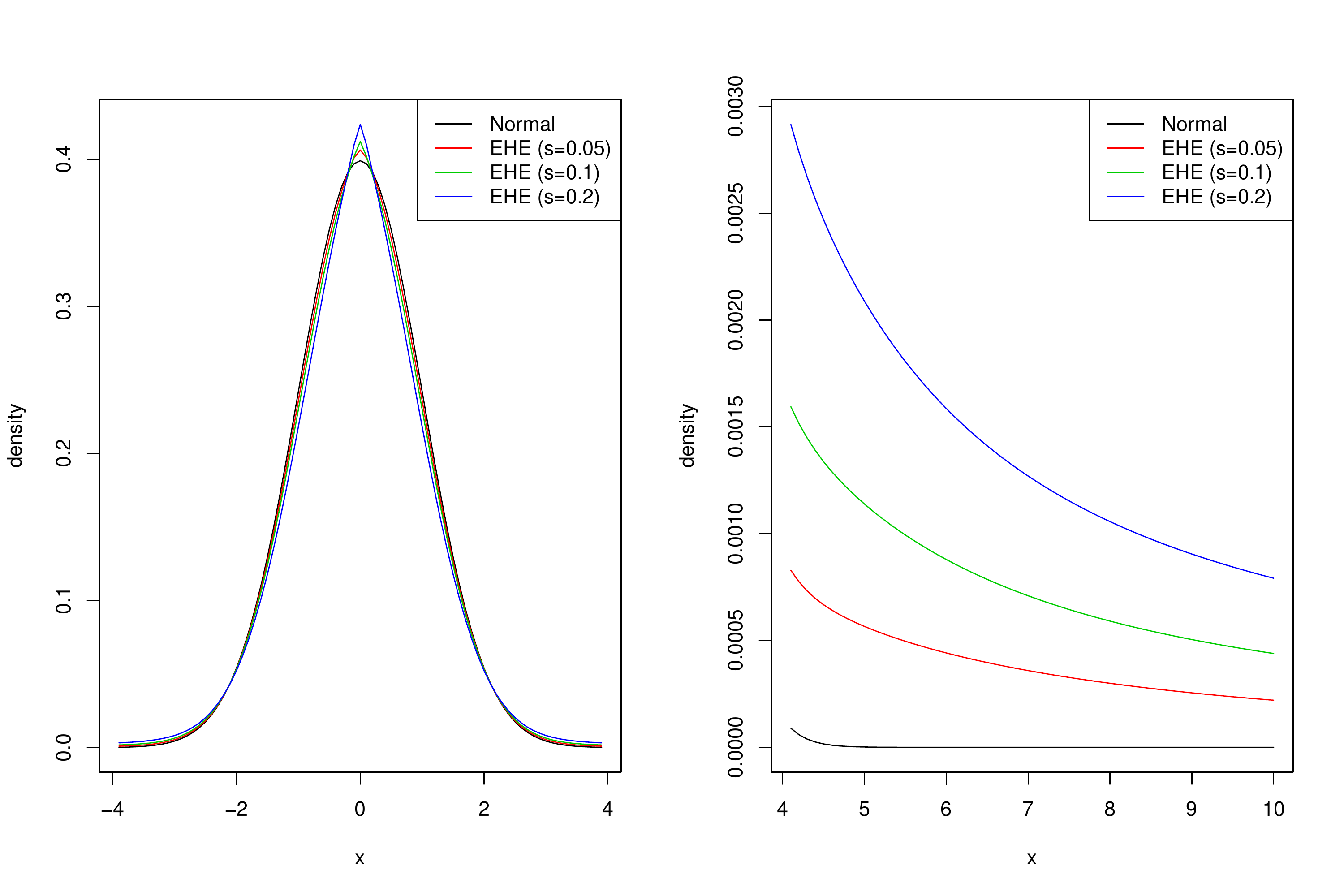}
\caption{Densities of the proposed error distribution with $\gamma=1$ and $s\in \{0.05, 0.1, 0.2\}$ and the standard normal error distribution. 
The intractable integral of the second component is computed by the Monte Carlo integration. 
\label{fig:dens}
}
\end{figure}

\begin{figure}[!htb]
\centering
\includegraphics[width=14cm,clip]{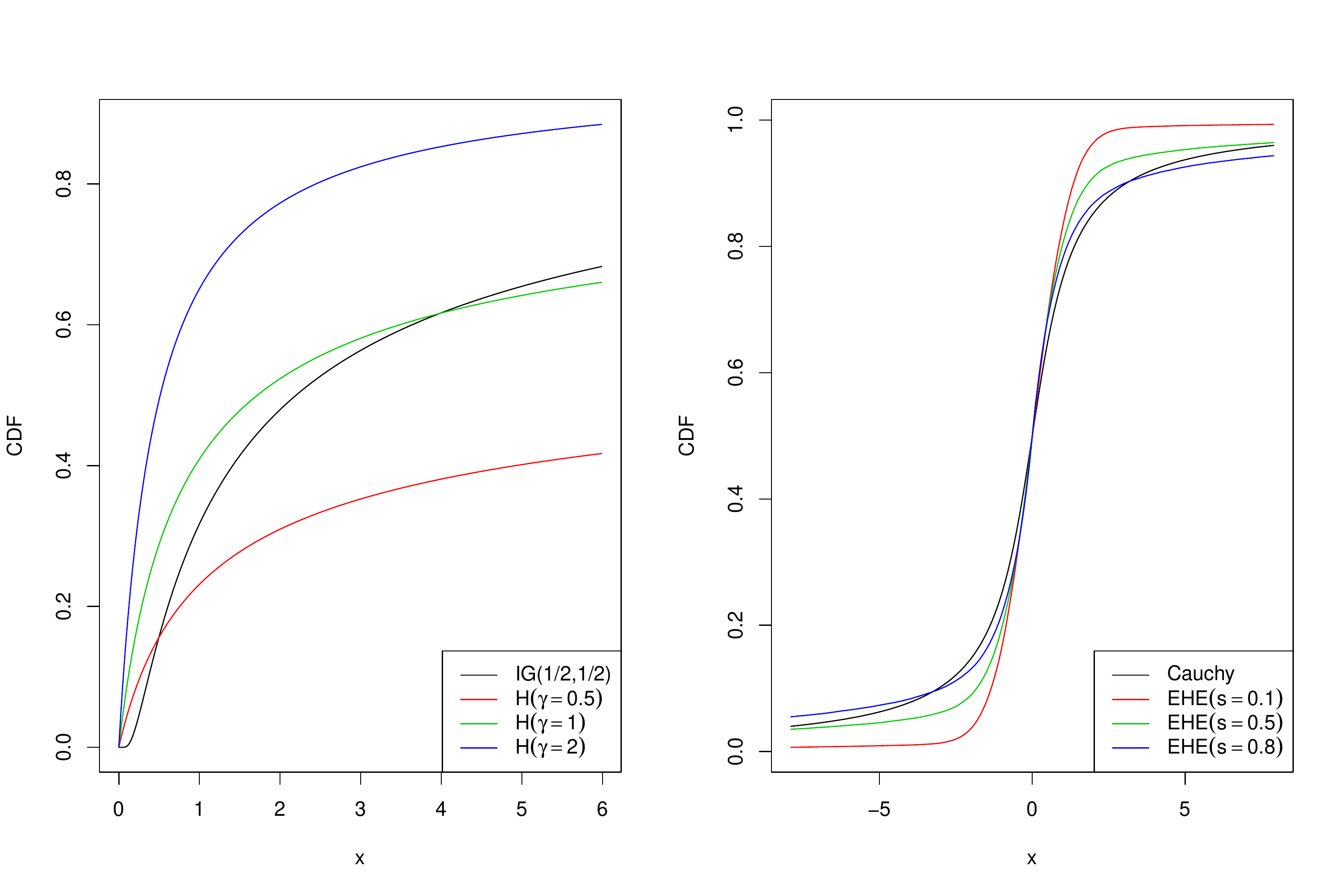}
\caption{Left: Cumulative distribution functions of scale distributions, $H(u;\ga )$ for $\ga \in \{  0.5, 1.0, 2.0 \}$, and the inverse gamma distribution with shape and scale $0.5$. Right: The empirical cumulative distributions of the EH distributions with $\ga = 1$ and $s=0.1,0.5,0.8$ computed by the Monte Carlo integration, compared with the distribution function of Cauchy distribution. 
\label{fig:dist}
}
\end{figure}

\subsection{Definition of outliers}\label{sec:outlier}

We first specify the structure of outliers. Our definition is based on \cite{desgagne2019bayesian}. 
The set of indices for $n$ observations, $\{ 1,\dots ,n \}$, is split into the two disjoint subsets, $\mathcal{K}$ and $\mathcal{L}$, which represent those of the non-outlying and outlying values, respectively. Note that $\mathcal{K} \cup \mathcal{L} = \{ 1,\dots ,n\}$ and $\mathcal{K}\cap \mathcal{L}=\emptyset$. Let $\Dc = \{ y_1,\dots , y_n \}$ be the set of the observed data. The set of the non-outlying observations is defined by $\Dc^{\ast} = \{ y_i | i\in \mathcal{K} \}$.

The concept of (non-)outliers is defined by the observed values specified as, 
\begin{align}
y_i &= \begin{cases} \displaystyle a_i \text{,} & \text{if $i \in \mathcal{K}$} \text{,} \\ \displaystyle a_i + b_i \om \text{,} & \text{if $i \in \mathcal{L}$} \text{,} \end{cases} \non 
\end{align}
where $a_i \in \mathbb{R}$, $b_i \neq 0$ and $\om>0$. We assume that $\om$ is sufficiently large, so that the value of $y_i$ for $i\in \mathcal{L}$ becomes extremely large, either positively or negatively. 
We define the posterior robustness as the limiting behaviors of the posteriors distributions of parameters of interest, $(\beta ,\sigma ^2)$, when $\om$ tends to infinity. 
That is, the model is posterior robust if the two posteriors, one of which is conditioned by the full dataset $\Dc$ and the other of which is conditioned by the dataset without the outliers $\Dc^{\ast}$, are equivalent when $\om \to \infty$. 
To put it in another way, under the posterior robustness, the outlying values are automatically discarded in posterior inference without the knowledge on which observations are outlying.

\subsection{Robustness for the EH prior}\label{sec:robust}

The class of prior distributions for $(\beta ,\sigma ^2)$ for which we prove the posterior robustness is, for $k = 1, \dots , p$, 
\begin{align} \label{prior}
\be _k|\sigma \sim {1 \over \si } \pi_{\beta} \Big( {\be _k \over \si } \Big) \quad \text{and} \quad \si \sim \pi _{\si } ( \si ) \text{,} 
\end{align}
where $\beta_1,\ldots,\beta_p$ are conditionally independent given $\sigma$ and $\pi_{\beta}$ and $\pi _{\sigma}$ are the probability density functions on $\mathbb{R}$ and $(0,\infty)$, respectively.
Let $p( \be , \si | \Dc)$ be the posterior distribution of $(\be,\si)$ under the linear regression model with the EH distribution.  
Under this prior, the following theorem gives sufficient conditions for the posterior with the outliers converges to that without the outliers as $\om \to \infty $. The proof is left in the Supplementary Materials.

\begin{thm}
\label{thm:normalized} 
Assume that there exists $c>0$ such that, 
\begin{itemize}
    \item[(A.1)] $| \mathcal{K} | \ge | \mathcal{L} | + p$, i.e., \#non-outliers $\ge$ \#outliers $+$ \#predictors.
    \item[(A.2)] $\sup_{t \in \mathbb{R}} \{ | t |^c \pi_{\beta} ( t ) \} < \infty $
    \item[(A.3)] The prior moments of $\sigma ^{-|\mathcal{K}|}$, $\sigma ^{c - 1}$ and $\sigma ^{c-n}$ are all finite.
\end{itemize}
Then the linear regression model with the error distribution in (\ref{EH}) and the prior in (\ref{prior}) is posterior robust, i.e., 
\begin{align}
\lim_{\om \to \infty } p( \be , \si | \Dc ) &= p( \be , \si | \Dc^{\ast} ) \non 
\end{align}
for all $( \be , \si ) \in \mathbb{R} ^p \times (0, \infty )$. 
\end{thm}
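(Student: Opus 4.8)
The plan is to write both posteriors as the likelihood times the prior, normalized, and to reduce the statement to a pointwise limit together with a single dominated-convergence argument. Denote the joint prior by $\pi(\be,\si)=\pi_{\si}(\si)\prod_{k=1}^{p}\si^{-1}\pi_{\be}(\be_k/\si)$, the full likelihood by $L_{\om}(\be,\si)=\prod_{i=1}^{n}\si^{-1}f_{\rm EH}\big((y_i-x_i^t\be)/\si\big)$, and the likelihood based on $\Dc^{\ast}$ only by $L^{\ast}(\be,\si)=\prod_{i\in\mathcal{K}}\si^{-1}f_{\rm EH}\big((y_i-x_i^t\be)/\si\big)$, so that $p(\be,\si\mid\Dc)=L_{\om}\pi/m(\Dc)$ and $p(\be,\si\mid\Dc^{\ast})=L^{\ast}\pi/m(\Dc^{\ast})$ with $m(\Dc)=\int L_{\om}(\be,\si)\pi(\be,\si)\,d\be\,d\si$ and $m(\Dc^{\ast})=\int L^{\ast}(\be,\si)\pi(\be,\si)\,d\be\,d\si$. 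I would introduce the deterministic sequence $r_{\om}=\prod_{i\in\mathcal{L}}\{|b_i|\,\om\,(\log\om)^{1+\ga}\}^{-1}$ (with $r_{\om}\equiv1$ in the trivial case $\mathcal{L}=\emptyset$) and establish: (i) $L_{\om}(\be,\si)/r_{\om}\to L^{\ast}(\be,\si)$ for every $(\be,\si)\in\mathbb{R}^p\times(0,\infty)$; and (ii) an integrable $G$, independent of $\om$ for all large $\om$, with $L_{\om}\pi/r_{\om}\le G$. Given (i) and (ii), dominated convergence gives $m(\Dc)/r_{\om}\to m(\Dc^{\ast})$, which is finite (bounded by $\int G$) and positive, so that
\[
p(\be,\si\mid\Dc)=\frac{L_{\om}\pi/r_{\om}}{m(\Dc)/r_{\om}}\ \longrightarrow\ \frac{L^{\ast}\pi}{m(\Dc^{\ast})}=p(\be,\si\mid\Dc^{\ast})
\]
pointwise, which is the assertion.

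Step (i) is where Proposition~\ref{prop:dens} enters. Fixing $(\be,\si)$, for $i\in\mathcal{L}$ the residual $(y_i-x_i^t\be)/\si=(b_i\om+a_i-x_i^t\be)/\si$ diverges in modulus, and the asymptotic $f_{\rm EH}(x)\sim|x|^{-1}(\log|x|)^{-1-\ga}$ of Proposition~\ref{prop:dens} gives $\si^{-1}f_{\rm EH}\big((y_i-x_i^t\be)/\si\big)\sim\{|b_i|\,\om\,(\log\om)^{1+\ga}\}^{-1}$, using $|b_i\om+a_i-x_i^t\be|=|b_i|\om(1+o(1))$ and $\log\big(|b_i\om+a_i-x_i^t\be|/\si\big)=\log\om+O(1)$. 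The crucial point is that this leading rate carries no dependence on $(\be,\si)$; this is precisely why a log-regularly varying tail works, whereas a merely regularly varying (Student-$t$) tail fails, since there the limiting ratio retains a $(\be,\si)$-dependent factor that keeps the normalized likelihood from collapsing onto $L^{\ast}$. Multiplying over $i\in\mathcal{L}$ with the $i\in\mathcal{K}$ factors held fixed yields $L_{\om}/r_{\om}\to L^{\ast}$, hence $L_{\om}\pi/r_{\om}\to L^{\ast}\pi$.

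Step (ii) is where I expect the real work. I would use two bounds: first, $f_{\rm EH}\le C_0$ on $\mathbb{R}$ (it is continuous, vanishes at $\pm\infty$, and is bounded near $0$ since $\int_0^{\infty}(2\pi u)^{-1/2}H(u;\ga)\,du<\infty$); second, by Proposition~\ref{prop:dens}, $f_{\rm EH}(x)\le C_1\{(1+|x|)(1+\log(1+|x|))^{1+\ga}\}^{-1}$ on $\mathbb{R}$. The envelope is built by a case analysis on the size of $\si$ and of the residuals relative to $\om$. On the main region, where $\si$ stays below a slowly growing power of $\om$ and no regression fit absorbs an outlier, i.e.\ $|x_i^t\be|\le\tfrac12|b_i|\om$ for all $i\in\mathcal{L}$, each $i\in\mathcal{L}$ residual has modulus of large order, so the second bound gives $\si^{-1}f_{\rm EH}\big((y_i-x_i^t\be)/\si\big)\le C\{|b_i|\om(\log\om)^{1+\ga}\}^{-1}$, which cancels the matching factor of $r_{\om}^{-1}$ and leaves an $\om$-free bound whose integrability in $(\be,\si)$ comes from the $\mathcal{K}$-factors, the prior tail bound $\pi_{\be}(t)\le M|t|^{-c}$ implied by (A.2), and the moment $\mathrm{E}[\si^{-|\mathcal{K}|}]<\infty$ in (A.3). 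The delicate cases are $\si$ large — controlled by $L_{\om}\le C_0^n\si^{-n}$, since $n=|\mathcal{K}|+|\mathcal{L}|\ge 2|\mathcal{L}|+p$ by (A.1), and the moment $\mathrm{E}[\si^{c-1}]<\infty$ then forces this region's contribution to vanish — and $\si$ small with $\be$ positioned so that $x_i^t\be\approx y_i$ for some outliers, in which case $\|\be\|$ is of order $\om$, so that (using the identifiability of the design) enough non-outlier residuals are large — making those $\mathcal{K}$-factors small by the second bound — and the prior tail (A.2) together with $\mathrm{E}[\si^{-|\mathcal{K}|}]<\infty$ and $\mathrm{E}[\si^{c-n}]<\infty$ from (A.3) closes the estimate. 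Assembling all cases so that every power of $\om$ and of $\si$ balances simultaneously is the technical heart of the proof, and is exactly where (A.1)--(A.3) are consumed.

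Once (ii) is in hand, step (i) and the displayed chain of equalities complete the proof. I expect (ii) to be the main obstacle: one must arrange that the growth $r_{\om}^{-1}\asymp\om^{|\mathcal{L}|}(\log\om)^{|\mathcal{L}|(1+\ga)}$ is cancelled region by region by decay coming either from the log-regularly varying $\mathcal{L}$-factors, from the shrinking $\mathcal{K}$-residuals, or from the prior, with (A.1) supplying the power counting, (A.2) the prior tail, and (A.3) the integrability in $\si$.
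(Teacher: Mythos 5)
Your proposal follows essentially the same route as the paper's proof: normalize the outlying factors by their asymptotic rate (the paper divides by $\prod_{i\in\mathcal{L}}f_{\rm EH}(y_i)$ rather than your deterministic $r_{\om}$, which differ only by the constant implicit in Proposition~\ref{prop:dens} and which cancels in the posterior ratio), establish pointwise convergence from the log-regularly-varying tail, and close with a dominated-convergence envelope built by splitting according to whether $\be$ can grow at rate $\om$, where your ``identifiability of the design'' step is exactly the paper's linear-independence lemma and (A.1)--(A.3) are consumed precisely where you say they are. The structure, the key ideas, and the role of each assumption all match the paper's argument.
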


The three assumptions are met in many examples we encounter in practice. Assumption (A.1) is the requirement for the number of non-outlying observations to be sufficiently large. 
Similar assumptions can be found in the literature (e.g., Theorem~2.1 (ii), \citealt{Gag2019}), but (A.1) is of the simpler form and less restrictive. 
In many situations, the number of the non-outlying observations comprises the majority of the dataset, so that Assumption~(A.1) is satisfied.

Assumption (A.2) limits the choice of priors for $\beta$, but still covers the wide class of probability distributions. For example, this assumption is always satisfied when $\pi_{\beta}(t)$ is bounded and $O(1 / | t |)$ as $| t | \to \infty $. The examples of such prior include the normal and $t$-distributions. Note that, however, (A.2) does not force the prior density $\pi_{\beta}$ to be bounded, unlike the settings of \cite{Gag2019}. As an important example, the horseshoe prior, whose density is unbounded at the origin (Theorem~1, \citealt{carvalho2010horseshoe}), satisfies (A.2) for any $c \in (0, 2]$. As evident in the example of the horseshoe prior, Theorem~\ref{thm:normalized} can be a useful device to check the posterior robustness for the boarder and important class of statistical problems, including the variable selection by the shrinkage priors.

Assumption (A.3) is the moment conditions for observational scale $\sigma$. When the sample size $n$ is large enough and $c\le 1$, then (A.3) is summarized as the existence of negative moments of $\sigma$. In this case, the inverse-gamma distribution for $\sigma ^2$, which is a typical choice of priors in many applications, satisfies (A.3).

\subsection{Tail heaviness for robustness} \label{sec:tail}

Theorem~\ref{thm:normalized} proves the posterior robustness for the linear regression models with the EH distributions, whose density tails are evaluated as $f_{\rm EH}(x) \approx |x|^{-1} (\log |x|)^{-1-\gamma}$, as shown in Proposition~\ref{prop:dens}. These extremely heavy tails are, in fact, the necessary condition for the posterior robustness. To clarify the relationship between the posterior robustness and the tail behavior of the error distributions, we study a wider class of error distributions which includes the proposed distribution as a special case, defined by replacing $H(u;\gamma)$ in (\ref{EH}) with  
\begin{equation}\label{EH2}
H(u;\gamma, \delta)= C(\delta, \ga ) {1\over (1 + u)^{1 + \delta}} {1 \over \{ 1 + \log (1 + u) \} ^{1 + \ga }} \text{,} \quad u > 0 \text{,} \
\end{equation}
where $C(\delta, \ga )$ is a normalizing constant, and $\delta\geq 0$ is an additional shape parameter. Like the degree of freedom of $t$-distributions, the shape parameter $\delta$ is related to the decay of the density tail of (\ref{EH2}), that is, $H(u;\gamma, b)\approx u^{-\delta-1}(\log u)^{-1-\gamma}$. Thus, this class of distributions covers the error distributions whose density tails are lighter than those of the proposed EH distribution in (\ref{EH}), and includes the EH distribution as one with the heaviest tails under $\delta=0$. Note that the density tails become heavier than those of Cauchy distribution if $\delta < 1$. 

It is shown that the choice of hyperparameter that can achieve the posterior robustness is $\delta = 0$ (and arbitrary $\ga > 0$), i.e., the model considered in Theorem~\ref{thm:normalized}. From this observation, we conclude that the tails of the error distribution that are heavier than those of Cauchy distributions is essential for posterior robustness. For details, see the Supplementary Materials.

\subsection{Existence of posterior moments} \label{sec:moment}

The EH distribution is too heavily tailed to have finite moments. However, the posterior of $(\beta ,\sigma ^2)$ has finite means and variances in most situations. We verify this result for the inverse-gamma prior for $\sigma ^2$. 

\begin{prp}\label{prp:moment} Consider the linear regression model with the EH distribution in (\ref{EH}) and the prior for $(\beta ,\sigma)$ given in (\ref{prior}). Furthermore, suppose that the prior for $\sigma ^2$ is an inverse-gamma distribution. 

(a) If (A.2) holds for some $c>0$ and $c \le n$, then $E[ | \be _k |^c | \Dc ] < \infty $ for $k=1,\dots , p$.

(b) If $d\le n$, then $E[ \sigma ^d | \Dc ] < \infty$.
\end{prp}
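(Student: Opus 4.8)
The plan is to write each posterior moment as the ratio of $\int g(\be,\si)\,L(\be,\si)\,\pi(\be,\si)\,d\be\,d\si$ to the marginal likelihood $\int L(\be,\si)\,\pi(\be,\si)\,d\be\,d\si$, where $L(\be,\si)=\prod_{i=1}^n\si^{-1}f_{\rm EH}\big((y_i-x_i^t\be)/\si\big)$ is the likelihood and $\pi(\be,\si)=\pi_\sigma(\si)\prod_{l=1}^p\si^{-1}\pi_\beta(\be_l/\si)$ the prior, and to show the numerator is finite. The denominator is strictly positive because the integrand is positive, and finite because $f_{\rm EH}$ is bounded: writing $M=\sup_{t}f_{\rm EH}(t)<\infty$ we have $L(\be,\si)\le M^n\si^{-n}$, while the inverse-gamma law of $\si^2$ has finite negative moments of all orders, so $E[\si^{-n}]<\infty$. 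Throughout I use that $X$ has full column rank $p$, as is implicit in the model.

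For part (b) the crude bound $L(\be,\si)\le M^n\si^{-n}$ already suffices. Since $\be_l\mid\si$ has the proper density $\si^{-1}\pi_\beta(\cdot/\si)$, integrating out $\be$ gives $\int\si^d L\pi\,d\be\,d\si\le M^n\int_0^\infty\si^{d-n}\pi_\sigma(\si)\,d\si$. For $\si^2$ inverse-gamma with shape $a$ and scale $b$ one has $\pi_\sigma(\si)\propto\si^{-2a-1}e^{-b/\si^2}$: the factor $e^{-b/\si^2}$ makes the integrand vanish faster than any power as $\si\to0$, so there is no obstruction there whatever the sign of $d-n$, and the polynomial tail gives convergence at $\si\to\infty$ whenever $d<n+2a$, which holds under $d\le n$. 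Dividing by the positive, finite marginal likelihood yields $E[\si^d\mid\Dc]<\infty$.

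For part (a) the crude bound is useless, since $\int_{\mathbb{R}}|t|^c\pi_\beta(t)\,dt$ is typically infinite when $\pi_\beta$ attains the tail permitted by (A.2); one must exploit the decay of $L$ in $\be$. I would start from the pointwise bound $f_{\rm EH}(t)\le C\,(1+|t|)^{-1}\{1+\log(1+|t|)\}^{-1-\ga}$, which follows from the continuity and boundedness of $f_{\rm EH}$ together with the tail asymptotics of Proposition~\ref{prop:dens}; hence each likelihood factor obeys $\si^{-1}f_{\rm EH}\big((y_i-x_i^t\be)/\si\big)\le C\,\big(\si+|y_i-x_i^t\be|\big)^{-1}\{1+\log(1+|y_i-x_i^t\be|/\si)\}^{-1-\ga}$. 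Full column rank of $X$ gives a constant $\kappa_X$ with $|\be_k|\le\kappa_X\big(\sum_i|y_i-x_i^t\be|+\|y\|\big)$, so $|\be_k|^c\lesssim\sum_i|y_i-x_i^t\be|^c+\text{const}$, and it is enough to prove $E[\,|y_i-x_i^t\be|^c\,|\,\Dc\,]<\infty$ for each $i$. When $c\le1$ this closes at once: $\sup_t|t|^cf_{\rm EH}(t)<\infty$ by Proposition~\ref{prop:dens}, so $|y_i-x_i^t\be|^c\,\si^{-1}f_{\rm EH}\big((y_i-x_i^t\be)/\si\big)\le\si^{c-1}\sup_t|t|^cf_{\rm EH}(t)$, the remaining $n-1$ likelihood factors are bounded by $M\si^{-1}$, the conditional priors integrate out, and one is left with a $\si$-integral of the type handled in part (b). For general $c\le n$, after the same reduction it suffices to show
\[
\int_{\mathbb{R}^p\times(0,\infty)}\frac{|y_i-x_i^t\be|^c\,\prod_{l=1}^p\pi_\beta(\be_l/\si)}{\si^{p}\prod_{j=1}^n\big(\si+|y_j-x_j^t\be|\big)\big\{1+\log\big(1+|y_j-x_j^t\be|/\si\big)\big\}^{1+\ga}}\,\pi_\sigma(\si)\,d\be\,d\si<\infty,
\]
using (A.2) to control the factors $\pi_\beta(\be_l/\si)$ at infinity, the properness of $\pi_\beta$ to control the directions along the coordinate hyperplanes, and the full-rank of $X$ to ensure that along every ray enough of the affine factors $\si+|y_j-x_j^t\be|$ grow linearly.

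The main obstacle is exactly this last, multidimensional estimate. I would establish it by a polar/conical decomposition of $\be$-space: on a generic cone all $n$ residuals $|y_j-x_j^t\be|$ grow linearly, which already beats the growth $|y_i-x_i^t\be|^c\lesssim\|\be\|^c$ with $c\le n$ by a wide margin; the borderline cones are those adjacent to a coordinate hyperplane, where one prior factor $\pi_\beta(\be_l/\si)$ fails to decay, but then its integral in the transverse angular variable contributes an extra factor of order $\si/\|\be\|$ because $\int\pi_\beta=1$, and the logarithmic factors $\{1+\log(\cdot)\}^{-1-\ga}$ supply the remaining slack precisely at the endpoint $c=n$. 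Once the $\be$-integral is bounded by a power of $\si$ times $\pi_\sigma(\si)$, its integral over $\si\in(0,\infty)$ is finite by the same inverse-gamma argument as in part (b), completing the proof.
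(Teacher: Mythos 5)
Your part (b) is exactly the paper's argument and is fine. Part (a), however, has a genuine gap precisely where you acknowledge it: the "multidimensional estimate" that your reduction leads to for $1<c\le n$ (the case actually needed for the horseshoe corollary, $c=2$) is never established, only sketched via a conical decomposition. As stated, the sketch has a concrete failure mode in the bookkeeping of powers of $\si$: when you invoke (A.2) to make the prior factors decay, each factor $\si^{-1}\pi_{\be}(\be_l/\si)$ is traded for $\si^{c-1}|\be_l|^{-c}\sup_t\{|t|^c\pi_{\be}(t)\}$, so after the $\be$-integration you are left with \emph{positive} powers of $\si$ (up to order $p(c-1)$ on the generic cones), and the likelihood factors, once bounded by $(\si+|y_j-x_j^t\be|)^{-1}\le |y_j-x_j^t\be|^{-1}$ to get decay in $\be$, no longer supply compensating negative powers. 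Your closing claim that "the same inverse-gamma argument as in part (b)" finishes the proof is then false in general: part (b) only uses that the inverse-gamma prior on $\si^2$ has all \emph{negative} moments; its positive moments exist only up to the shape parameter, which is not assumed large. The endpoint claim that the logarithmic factors "supply the remaining slack precisely at $c=n$" is likewise unsubstantiated. You also import a full-column-rank assumption on $X$ that the statement does not make.

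The needed argument is much simpler and avoids all of this (it is the paper's proof). Fix $k$ and an observation $i$ with $x_{i,k}\neq 0$. Pair $|\be_k|^c$ with its \emph{own} conditional prior: by (A.2), $|\be_k|^c\,\si^{-1}\pi_{\be}(\be_k/\si)\le \si^{c-1}\sup_t\{|t|^c\pi_{\be}(t)\}$, which removes the problematic growth without touching the likelihood. Keep exactly one likelihood factor, $\si^{-1}f_{\rm EH}\big((y_i-x_i^t\be)/\si\big)$, and integrate it over $\be_k$ by the change of variables $t=(y_i-x_i^t\be)/\si$ (Jacobian $\si/|x_{i,k}|$), which yields the constant $1/|x_{i,k}|$; bound the remaining $n-1$ likelihood factors by $\sup_z f_{\rm EH}(z)/\si$ and integrate the remaining conditional priors to one. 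This leaves $\int_0^\infty \si^{(c-1)-(n-1)}\pi_{\si}(\si)\,d\si$ up to constants, which is finite because $c\le n$ makes the exponent nonpositive and the inverse-gamma prior has all negative moments. No tail expansion of $f_{\rm EH}$, no rank condition, and no conical decomposition are needed; only boundedness of $f_{\rm EH}$, properness of $\pi_\be$, and (A.2) enter. Your $c\le 1$ shortcut (absorbing $|t|^c$ into one likelihood factor) is correct but does not extend beyond $c=1$, since $\sup_t|t|^cf_{\rm EH}(t)=\infty$ for $c>1$; the fix is to absorb the power into the prior factor as above rather than into the likelihood.
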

It is immediate from (a) that the posterior means and variances of coefficients $\beta$ exist under the horseshoe prior for $\beta$, which is given later in (\ref{shrink}). 

\begin{cor}
If the prior for $\beta$ is horseshoe and $n \ge 2$, then $E[ |\be _k|^2 | y ] < \infty $. 
\end{cor}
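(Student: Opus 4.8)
The plan is to obtain this directly from Proposition~\ref{prp:moment}(a) with the exponent $c=2$. Since the corollary lives in the setting of that proposition, the inverse-gamma prior on $\sigma^2$ is in force, so part~(a) applies once we check that (i) the horseshoe density $\pi_\beta$ satisfies Assumption~(A.2) with $c=2$, that is $\sup_{t\in\mathbb{R}}\{t^2\pi_\beta(t)\}<\infty$, and (ii) $c\le n$, which is precisely the hypothesis $n\ge 2$.

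To verify (i), I would recall the two-sided bound on the horseshoe density (Theorem~1 of \citealt{carvalho2010horseshoe}): there is a constant $K>0$ with $\pi_\beta(t)<K\log(1+2/t^2)$ for every $t\neq 0$. Hence $t^2\pi_\beta(t)<K\,t^2\log(1+2/t^2)$, and writing $v=2/t^2$ and using the elementary inequality $\log(1+v)\ge v/(1+v)$ one sees that $t^2\log(1+2/t^2)$ is monotone in $|t|$ with limit $2$ as $|t|\to\infty$ and limit $0$ as $t\to 0$; it is also continuous, hence bounded on compact sets away from the origin. Therefore $\sup_{t\in\mathbb{R}}\{t^2\pi_\beta(t)\}\le 2K<\infty$, which is (A.2) with $c=2$ --- exactly the statement already recorded in Section~\ref{sec:robust} that the horseshoe prior satisfies (A.2) for every $c\in(0,2]$.

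Combining (i) and (ii): under $n\ge 2$ we have $c=2\le n$, so Proposition~\ref{prp:moment}(a) yields $E[\,|\be_k|^2\mid y\,]<\infty$ for $k=1,\dots,p$, and in particular the posterior means and variances of the regression coefficients are finite. The only point requiring care is the verification of (A.2), because the horseshoe density is unbounded at the origin; but its logarithmic singularity there is annihilated by any positive power $|t|^c$, whereas the $|t|^{-2}$ decay of its tails is matched exactly by the borderline exponent $c=2$ (a larger $c$ would make $|t|^c\pi_\beta(t)$ unbounded), so nothing deeper than this bookkeeping is involved.
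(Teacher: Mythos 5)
Your proof is correct and takes essentially the same route as the paper: the corollary is obtained immediately from Proposition~\ref{prp:moment}(a) with $c=2$, using that the horseshoe density satisfies Assumption~(A.2) for any $c\in(0,2]$ (via the upper bound in Theorem~1 of Carvalho et al., 2010) and that $n\ge 2$ gives $c\le n$. Your only addition is the explicit verification that $t^{2}\log(1+2/t^{2})$ is bounded (by $2$), a detail the paper delegates to the cited bound.
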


The proof is given in the Supplementary Materials. In fact, the existence of posterior moments of $(\beta ,\sigma^2)$ can be discussed for the broad class of error distributions and priors for $(\beta , \sigma)$, not being limited to the linear regression model we particularly consider in this paper. Proposition~\ref{prp:moment} is proved with such generality.

\section{Posterior Computation}\label{sec:comp}

\subsection{Gibbs sampler by augmentation}\label{sec:post}
An important property of the proposed EH distribution (\ref{EH}) is its computational tractability, that is, we can easily construct a simple Gibbs sampling for posterior inference. 
Note that the error distribution contains two unknown parameters, $s$ and $\gamma$, and we can adopt conditionally conjugate priors given by $s\sim {\rm Beta}(a_s, b_s)$ and $\gamma\sim {\rm Ga}(a_{\gamma}, b_{\gamma})$.
The conditionally conjugate priors can also be found for main parameters, $\beta$ and $\sigma^2$, and we use $\beta \sim N(A_{\beta}, B_{\beta})$ and $\sigma^{-2}\sim {\rm Ga}(a_{\sigma}, b_{\sigma})$. 
The multivariate normal prior for $\beta$ can be replaced with the scale mixture of normals, such as shrinkage priors, which is discussed later in Section~\ref{sec:shrink}.

To derive the tractable conditional posteriors, we need to keep the likelihood conditionally Gaussian with latent scale $u_i$. This can be done easily by conditioning the set of latent variables $(z_i,u_i)$. 
The conditional conjugacy for $(\beta , \sigma ^2)$ follows immediately from the conditionally Gaussian likelihoods.

The full conditional distributions of the other parameters and latent variables in the EH distribution are not any well-known distribution. However, we can augment the model with latent parameters by utilizing the following integral expression of density $H(u_{i};\ga)$,
$$
H(u_{i};\gamma)=\iint_{(0,\infty)^2} {\rm Ga}(u_{i};1,v_i){\rm Ga}(v_i; w_i, 1){\rm Ga}(w_i;\gamma, 1)dv_idw_i.
$$ 
Namely, the random variable $u_{i}$ following the density $H(u_{i};\gamma)$ admits the mixture representation: $u_{i}|(v_i,w_i)\sim {\rm Ga}(1, v_i)$, $v_i|w_i\sim {\rm Ga}(w_i, 1)$ and $w_i\sim {\rm Ga}(\gamma,1)$, which enables us to easily generate samples from the full conditional distribution of $(u_{i}|v_i,w_i)$ and $(v_i,w_i|u_{i})$.

The introduction of the two latent states, $(v_i,w_i)$, is useful in deriving the conditional posterior of $u_i$, and the algorithm of Gibbs sampler immediately follows with latent $(v_i,w_i)$ as the part of the Markov chain, although $(v_i,w_i)$ is totally redundant in posterior sampling of the other parameters. We marginalize $(v_i,w_i)$ out when sampling $\gamma$, $s$ and $z_i$'s from their conditional posteriors. This modification of the original Gibbs sampler simplifies the sampling procedure, and even facilitates the mixing, while targeting the same stationary distribution of the original Markov chain (Partially collapsed Gibbs sampler, \citealt{van2008partially}). The algorithm for posterior sampling is summarized as follows.

\vspace{0.5cm}
\noindent
{\bf Summary of the posterior sampling }
\begin{itemize}
\item[-]
Sample $\beta$ from the full conditional distribution $N(\Bt\At, \Bt)$, where 
\begin{align*}
&\Bt^{-1}=B_{\beta}^{-1}+\sigma^{-2}X^tDX, \ \ \ \  \At=B_{\beta}^{-1}A_{\beta}+\sigma^{-2}X^tDY
\end{align*} 
with $D={\rm diag}(u_1^{-z_1},\ldots,u_n^{-z_n})$.

\item[-]
Sample $\sigma^{-2}$ from ${\rm Ga}(\at_\sigma, \bt_\sigma)$, where 
$$
\at_\sigma=a_{\sigma}+n/2,  \ \ \ \ \bt_\sigma=b_{\sigma}+\sum_{i=1}^n(y_i-x_i^t\beta)^2/2u_i^{z_i}
$$

\item[-]
Sample $z_i$ from Bernoulli distribution; the probabilities of $z_i=0$ and $z_i=1$ are proportional to $(1-s)\phi(y_i;x_i^t\beta, \sigma^2)$ and $s\phi(y_i;x_i^t\beta, \sigma^2 u_i)$, respectively. 

\item[-]
The full conditional distributions of $s$ and $\gamma$ are given by ${\rm Beta}(\at_s, \bt_s)$ and ${\rm Ga}(\at_{\gamma}, \bt_{\gamma})$, respectively, where $\at_s=a_s+\sum_{i=1}^nz_i$ and $\bt_s=b_s+n-\sum_{i=1}^nz_i$, $\at_\gamma=a_{\gamma}+n$ and $\bt_\gamma=b_{\gamma}+\sum_{i=1}^n\log\{1+\log(1+u_{i})\}$.

\item[-] 
For each $i$, independently, sample $(v_i,w_i)$ first in a compositional way; sample $w_i$ from ${\rm Ga}(1+\gamma, 1+\log(1+u_{i}))$ and  $(v_i|w_i)$ as ${\rm Ga}(1+w_i, 1+u_{i})$. Then, sample $u_{i}$ from ${\rm GIG}(1/2, 2v_i, (y_i-x_i^t\beta)^2/\sigma^2)$ if $z_i=1$ or from ${\rm Ga}(1, v_i)$ if $z_i=0$.

\end{itemize}

We finally remark the choice of hyperparameters in the priors for $s$ and $\gamma$. 
Despite the EH distribution is log-regularly varying under arbitrary $\gamma>0$, the use of a large value of $\gamma$ is not suitable to capture potential outliers since the tail of EH gets lighter as $\gamma$ increases.
Moreover, the use of different values of $\gamma$ would not considerably affect the posterior result as long as $\gamma$ is not large. 
Hence, instead of using a diffuse prior for $\gamma$, we rather recommend simply using a fixed value.
In particular, we adopt $\gamma=1$ as the default choice, and its sensitivity will be investigated in Section \ref{sec:sim}.
As a more data-dependent way, we also recommend employing an informative prior that prevent large values of $\gamma$ by setting, for example,  $a_{\gamma}=b_{\gamma}=100$, which will be considered in Section \ref{sec:sim}
Regarding the mixing proportion $s$, we adopt $a_s=b_s=1$ resulting the uniform prior for $s$ as a default choice.

\subsection{Efficiency in computation}
A possible reason that the finite mixture has attracted less attention in the past research on posterior robustness is, as mentioned in \cite{desgagne2019bayesian}, the increased number of latent state variables introduced by augmentation, and the concern for the efficiency of posterior computation. It is the same concern seen in Bayesian variable selection \citep{george1993variable}; the finite mixture model for the prior on regression coefficients results in the necessity of stochastic search in the high-dimensional model space, hence causes the slow convergence of Markov chains and the costly computation. 
It is clear in the above algorithm, however, that the use of finite mixture as error distributions is completely different from the variable selection in terms of the model structure and free from such computational problem. 
Unlike the variable selection, the membership of each $i$ to either of the two components in our model is independent of one another, which facilitates the stochastic search in $2^n$ possible combination of the model space. This fact also shows that the sampling of $(z_i,u_i,v_i,w_i)$ can be done completely in parallel across $i$'s, hence our algorithm is scaled and computational feasible for the dataset with extremely large $n$. We continue to discuss the computational efficiency of the finite mixture approach in Section~4 through the extensive comparison with other models by using the simulated dataset.

\subsection{Robust Bayesian variable selection with shrinkage priors}\label{sec:shrink}
When the dimension of $x_i$ is moderate or large, it is desirable to select a suitable subset of $x_i$ to achieve efficient estimation. 
This procedure of variable selection would also be seriously affected by the possible outliers, by which we may fail to select suitable subsets of covariates. 
For a robust Bayesian variable selection procedure, we introduce shrinkage priors for regression coefficients. 
Here we rewrite the regression model to explicitly express an intercept term as $y_i=\alpha+x_i^{t}\beta+\ep_i$, and 
consider a normal prior $\alpha\sim N(0,A_{\alpha})$ with fixed hyperparameter $A_{\alpha}>0$.
For the regression coefficients $\beta$, we consider a class of independent priors expressed as a scale mixture of normals given by 
\begin{equation}\label{shrink}
\pi(\beta)=\prod_{k=1}^p\int_{0}^\infty \phi(\beta_k; 0, \sigma^2\tau^2\xi_k)\pi _{\xi}(\xi_k){\rm d}\xi_k,
\end{equation}
where $\pi _{\xi}(\cdot)$ is a mixing distribution, and $\kappa^2$ is an unknown global parameter that controls the strength of the shrinkage effects. 
Examples of the mixing distribution $\pi _{\xi}(\cdot)$ includes the exponential distribution leading to the Laplace prior of $\beta$ (Bayesian Lasso, \citealt{park2008bayesian}), and the half-Cauchy distribution for $\xi_k^{1/2}$ which results in the horseshoe prior \citep{carvalho2009handling,carvalho2010horseshoe}.
The robustness property of the resulting posterior distributions is guaranteed for those shrinkage priors because Assumption (A.2) of Theorem~\ref{thm:normalized} is satisfied.

In terms of posterior computation, the key property is that the conditional distribution of $\beta_k$ given $\xi_k$ under (\ref{shrink}) is a normal distribution, so the sampling algorithm given in Section~\ref{sec:post} is still valid with minor modification.
Specifically, the sampling from the full conditional distributions of $\alpha$, $\beta$, $\sigma^2$ and $\xi_1,\ldots,\xi_p$ is modified or newly added as follows:

\begin{itemize}
\item[-]
Sample $\alpha$ from $N(\At_{\alpha}^{-1}\Bt_{\alpha}, \At_{\alpha}^{-1})$, where  
\begin{align*}
\At_{\alpha}=A_{\alpha} + \sigma^{-2}\sum_{i=1}^n u_i^{-1}, \ \ \ \ \Bt_{\alpha}=\sigma^{-2}\sum_{i=1}^nu_i^{-1}(y_i-x_i^t\beta).
\end{align*} 

\item[-]
Sample $\beta$ from $N(\At_{\beta}^{-1} X^tD\widetilde{Y}, \sigma^2\At_{\beta}^{-1})$, where
\begin{align*}
\widetilde{Y}=Y-\alpha 1_n, \ \ \ \ \At_{\beta}=\Lambda^{-1}+X^tDX, \ \ \ \mbox{with} \ \ \ \Lambda=\tau^2{\rm diag}(\xi_1,\ldots,\xi_p).
\end{align*}

\item[-]
Sample $\sigma^{-2}$ from ${\rm Ga}(\at_\sigma, \bt_\sigma)$, where 
$$
\at_\sigma=a_{\sigma}+(n+p)/2,  \ \ \ \ \bt_\sigma=b_{\sigma}+\sum_{i=1}^n(y_i-x_i^t\beta)^2/2u_i^{z_i}+\beta^t\Lambda^{-1}\beta.
$$

\item[-]
Sample $\xi_k$ for each $k$ and $\tau^2$ from their full conditionals. Their densities are proportional to $\phi(\beta_k; 0, \sigma^2\tau^2\xi_k)\pi _{\xi}(\xi_k)$ and $\pi_{\tau^2}(\tau^2)\prod_{k=1}^p\phi(\beta_k; 0, \sigma^2\tau^2\xi_k)$, respectively, where $\pi_{\tau^2}(\tau^2)$ is a prior density for $\tau^2$. 

\end{itemize}

The full conditional distributions of $\alpha$ and $\beta$ are familiar forms thanks to the normal mixture representation of the EH distribution and the shrinkage priors. 
The sampling of $\xi_k$ and $\tau^2$ depends on the choice of shrinkage priors, but the existing algorithms in the literature can be directly imported to our method.

In Section~\ref{sec:app}, we adopt the horseshoe prior for regression coefficients with the EH distribution for the error terms. We here provide the details of sampling algorithm under the horseshoe model.
The horseshoe prior assumes that $\sqrt{\xi _k} \sim C^{+}(0,1)$ independently for $k=1,\dots ,p$ and $\tau\sim C^{+}(0,1)$, where $C^{+}(0,1)$ is the standard half-Cauchy distribution with probability density function given by $p(x) =2/\pi(1+x^2)$ for $x>0$.
Note that they admit hierarchical expressions given by $\xi_k|\lambda_k\sim {\rm IG}(1/2, 1/\lambda_k)$ and $\lambda_k\sim {\rm IG}(1/2, 1/2)$ for $\xi_k$, and $\tau^2|\nu\sim {\rm IG}(1/2, 1/\nu)$ and $\nu\sim {\rm IG}(1/2, 1/2)$ for $\tau^2$.
Then, one can sample from each full conditional distribution as follows: 
\begin{itemize}
\item[-]
Sample $\xi_k$ from ${\rm IG}(1, 1/\lambda_k+\beta_k^2/2\tau^2\sigma^2)$.
\item[-]
Sample $\lambda_k$ from ${\rm IG}(1, 1+1/\xi_k)$.
\item[-]
Sample $\tau^2$ from ${\rm IG}((p+1)/2, 1/\nu+\sum_{k=1}^p\beta_k^2/2\xi_k\sigma^2)$.
\item[-]
Sample $\nu$ from ${\rm IG}(1, 1+1/\tau^2)$.
\end{itemize}
These sampling steps can be directly incorporated into the Gibbs sampling algorithm given in Section \ref{sec:post}.

\subsection{Beyond linear regression}\label{sec:extension}
The proposed error distribution can be adopted in more general linear regression models. As an example, we consider a hierarchical model given by 
\begin{equation}\label{general-LM}
y_i=x_i^t\beta+g_i^t b+\ep_i, \ \ \ i=1,\ldots,n,
\end{equation}
where $g_i$ is a $r$-vector of additional covariates and $b$ is a vector of random effects distributed as $b\sim N(0, H(\psi))$ with $r{\times}r$ covariance matrix $H(\psi)$ parametrized by $\psi$. 
To absorb potential effects of outliers, we use the EH distribution for $\ep_i$.
The model structure (\ref{general-LM}) is general enough to represent a wide variety of useful models, as seen in the later sections. 
Even under the model (\ref{general-LM}), the robustness properties for $\beta$ demonstrated in Section~\ref{sec:robust} can be discussed by checking whether the prior for $b$ satisfies Assumption (A.2). 
Moreover, the augmentation strategy for the efficient posterior computation algorithm can still be employed and the full conditional distribution of $b$ is normal.  
We adopt a random intercept model for longitudinal data in our simulation study in Section \ref{sec:sim-RI} 
and a linear regression with spatial effects in our application in Section \ref{sec:boston}.

\section{Simulation studies}\label{sec:sim}

\subsection{Linear regression}
We here carry out simulation studies to investigate the performance of the proposed method together with existing methods.
We generated $n=300$ observations from the linear regression model with $p=20$ covariates, given by 
$$
y_i= \beta_0 +\sum_{k=1}^p \beta_kx_{ik} + \sigma\ep_i, \ \ \ i=1,\ldots,n,
$$
where $\beta_0=0.5, \beta_1=\beta_4=0.3$, $\beta_7=\beta_{10}=2$, $\sigma=0.5$ and the other coefficients are set to $0$.
Here the vector of covariates $(x_{i1},\ldots,x_{ip})$ was generated from a multivariate normal distribution with zero mean vector and variance-covariance matrix whose $(k,\ell)$-entry has $(0.2)^{|k-\ell|}$ for $k,\ell \in \{1,\ldots,p\}$.
Regarding the contamination structure of the error term, we adopted the location-shift model \citep{abraham1978linear};
$$
\ep_i\sim (1-\omega)N(0,1) + \omega N(\mu ,1), \ \ \ i=1,\ldots,n,
$$
where $\omega$ is the contamination ratio and $\mu$ is the location of outliers. 
We considered all the combinations of $\omega\in \{ 0.05, 0.1\}$ and $\mu \in\{5, 10, 15, 20\}$, in addition to the case of no contamination ($\om = 0$), which leads to 9 scenarios in total. Under this setting, we replicate 500 datasets independently.

For each of the 500 simulation datasets, we applied the following robust regression methods. The error distributions we consider include with the EH distribution, the LPTN distribution \citep{Gag2019}, and $t$-distribution with $\nu$ degrees of freedom.
For the hyperparameter $\ga$ in the EH distribution, we fixed $\gamma=1$ (denoted by EH) and estimated $\gamma$ adaptively (aEH) by assigning ${\rm Ga}(100, 100)$ prior distribution. 
For the LPTN distribution, the tuning parameter $\rho\in (2\Phi(1)-1, 1)\approx (0.6827, 1)$ is specified as $\rho=0.9$ and $\rho=0.7$ (LP1 and LP2, respectively).
Regarding the degree of freedom $\nu$ in the $t$-distribution, we specifically selected the results of $\nu=1$ (Cauchy distribution, denoted by C), $\nu=3$ (T3), and an adaptive version (aT) that employs a discrete uniform prior on $\nu\in\{1,2,3,4,5,8,10,15,20,30,50\}$. 
In addition, the two-component mixture of the $t$-distribution with $\nu=1/2$ and the standard normal distributions is considered (MT). 
We also employed the EH distribution with $\gamma=0.5$ and $\gamma=0.2$ to assess its sensitivity, $t$-distribution with $\nu=2.1$, and the MT distribution with $\nu=2.1$.
To save the space, we reported the results of these four methods 
in the Supplementary Material. 
As a standard method, we adopted the normal distribution as the error distribution (denoted by N) that should perform best in the absence of outliers. Note that all the error distributions listed here are ``misspecified'' for missing the location shift of the error term in the data generating process. This setting emphasizes that the posterior robustness verified in this research is valid regardless of the structure of outliers.

The priors for the regression coefficients and observational scale are set as $\beta_k\sim N(0, 1000)$ and $\sigma^{-2}\sim {\rm Ga}(1, 1)$ for all the models. 
To employ the posterior inference, we generated the posterior samples of $(\beta ,\sigma)$ by Gibbs sampler under the EH, $t$ and normal error distributions. For the LPTN distribution, the random-walk Metropolis-Hastings algorithm was adopted as in \cite{Gag2019}, in which the step sizes were set to $0.05$. 
For each of the 9 models, we generated 3000 posterior samples after discarding the first 1000 samples.

Based on the posterior samples, we computed posterior means as well as $95\%$ credible intervals of $\beta_k$ for $k=1,\dots ,p$. 
The performance of the point and interval estimation was assessed by square root of mean squared errors (RMSE), coverage probabilities (CP) and average length (AL) based on the 500 replications of the simulation, and these values were averaged over $\beta_0, \ldots, \beta_p$. 
In addition, we evaluated the efficiency of the sampling schemes by computing the average of inefficient factors (IF) of the posterior samples. 

In Table \ref{tab:sim}, we reported the values of these performance measures in 9 scenarios. 
When $\omega=0$ (no outlier), as easily predicted, the normal error distribution provides the most efficient result in all measures. While the other methods are slightly inefficient, the proposed method (EH and aEH in the table) performs almost in the same way as the normal distribution. This is an empirical evidence that the efficiency loss of the EH distribution is very limited owing to the normal component in the mixture. 
In the other robust methods, MSEs are slightly higher than the that of the normal distribution and CPs are smaller than the nominal level. 

In the other scenarios, where outliers are incorporated in the data generating process, the performance of the normal distribution is significantly lowered, and the robustness property is highlighted in the performance measures of the other models. 
In particular, the EH distribution with fixed $\ga$ (EH) performs quite stably in both point and interval estimation. 
The adaptive version (aEH) also works reasonably well, and the performance is comparable with EH. 
The LPTN model with $\rho = 0.9$ (LP1) shows reasonable performance in point estimation, but its CPs tend to be smaller than the nominal level. The other LPTN model with $\rho = 0.7$ (LP2) greatly worsens the accuracy of point estimation, implying the sensitivity of the choice of hyperparameter $\rho$ to the posteriors. 
The other models (C, T3, aT and MT) also suffer from the larger MSE values, especially in the scenarios of large $\om$ and $\mu$, which emphasizes the lack of posterior robustness under the $t$-distribution family. 
In addition, the interval estimation under the $t$-distributions depends on the degree-of-freedom parameter, as seen in the results of Cauchy and $t_3$-distributions where the credible intervals are too wide and narrow, respectively.

In terms of computational efficiency, it is remarkable that the IF values of the EH models are small and comparable with those of the $t$-distribution methods, which shows the efficiency of the proposed Gibbs sampling algorithm. 
On the other hand, the IFs of the LPTN models are very large due to the use of Metropolis-Hastings algorithm. 
To obtain the reliable posterior analysis under the LPTN models, one needs to increase the number of iterations in the computation by MCMC, or to spend more effort tuning the step-size parameter. We observed that the performance of LPTNs is improved under the simpler settings of less covariates ($p=10$), but the overall result of model comparison remains almost the same. 
See the Supplementary Materials for this additional experiment. 
Moreover, we measured the actual computation time of five methods (EH, LP1, T3, MT and N) under larger sample sizes, which are reported in the Supplementary Materials.

Finally, we evaluate the predictive performance.
We generated $m=20$ additional covariates $x_{j\ast} \ (j=1,\ldots,m)$ from the same multivariate normal distribution, and then generated true response value $y_{j\ast}$ based on the linear regression with $\ep_i\sim N(0,1)$. That is, the predicted response is not contaminated with outliers. 
Accordingly, in prediction with the EH and MT distributions, we construct the sampling model of $y_{j\ast}$ conditional on $z_j=0$ as  
$$
f(y_{j\ast}|\Dc,z_j=0)=\int \phi(y_{j\ast};x_{j\ast}^t\beta, \sigma^2)\pi(\beta,\sigma|\Dc)d\beta d\sigma. 
$$
This predictive distribution reflects our belief that the prediction should be considered only for non-outlying observations. If one believes that the predicted response might also be outlying, then the model in (\ref{EH}) can be used for prediction, being unconditional on $z_j$, at the cost of inflated predictive uncertainty. To handle with the outlying predictive values, however, the models for outlier detection should be more appropriate (e.g., \citealt{desgagne2021efficient}). 
For the LPTN and $t$-distributions, it is difficult to separate non-outliers and outliers in the model. For these models, we used the same error distributions for prediction. We reported the result of T3 model only; the $95\%$ predictive intervals of $y_{j\ast}$ under the LPTN and other $t$ models are extremely wide due to their heavy tails.

To evaluate the predictive performance, we computed MSE of the posterior predictive mean and CP and AL of $95\%$ predictive intervals of $y_{j\ast}$. 
These values were averaged over 500 replications, which are shown in Table \ref{tab:sim-pred}.
First, it can be seen that the model with the Gaussian errors produces worse point predictions and wider interval estimates as more and larger outliers are generated, which is clearly due to the lack of posterior robustness. 
The other robust methods are equally performative in terms of point prediction, but they show a great difference in the uncertainty quantification. 
The T3 method tends to be too conservative, in the sense that the predictive intervals are too wide and show the almost 100\% coverage. 
The EH and MT models have the similar predictive results, while the coverage rates suggest the potential under-coverage of the MT model. 
This result shows the importance of posterior robustness, or the use of error distributions with extremely heavy tails in estimation, for not only posterior inference but also predictive analysis.

\begin{table}[htbp]
\caption{Average values of RMSEs, CPs, ALs and IFs of the proposed extremely-heavy tailed distribution with $\gamma$ fixed (EH) and estimated (aEH), log-Pareto normal distribution with $\rho=0.9$ (LP1) and $\rho=0.7$ (LP2), Cauchy distribution (C), $t$-distribution with 3 degrees of freedom (T3) and estimated degrees of freedom (aT), two component mixture of normal and $t$-distribution with $1/2$ degrees of freedom (MT), and normal linear regression (N), based on 500 replications in 9 combinations of $(100\omega, \mu)$.
All values except for IFs are multiplied by 100.
\label{tab:sim}}
\begin{center}
{\small
\begin{tabular}{ccccccccccccc}
\hline
 &  &  & EH & aEH & LP1 & LP2 & C & T3 & aT & MT & N \\
 \hline
 & (0, --) &  & 6.32 & 6.32 & 6.71 & 7.95 & 7.84 & 6.78 & 6.57 & 6.32 & 6.32 \\
 & (5, 5) &  & 6.90 & 6.93 & 7.11 & 8.31 & 7.95 & 7.13 & 7.38 & 6.82 & 10.67 \\
 & (10, 5) &  & 8.92 & 8.52 & 8.96 & 9.29 & 8.35 & 8.33 & 9.72 & 9.59 & 15.85 \\
 & (5, 10) &  & 6.61 & 6.64 & 6.88 & 8.22 & 7.83 & 6.88 & 7.19 & 6.57 & 18.82 \\
RMSE & (10, 10) &  & 7.05 & 7.17 & 7.12 & 8.34 & 8.04 & 7.43 & 10.05 & 7.68 & 29.68 \\
 & (5, 15) &  & 6.61 & 6.65 & 6.83 & 8.11 & 7.88 & 6.87 & 7.07 & 6.58 & 27.36 \\
 & (10, 15) &  & 7.07 & 7.15 & 7.26 & 8.29 & 8.04 & 7.16 & 10.00 & 8.01 & 43.36 \\
 & (5, 20) &  & 6.54 & 6.58 & 6.76 & 8.08 & 7.86 & 6.79 & 6.93 & 6.51 & 36.10 \\
 & (10, 20) &  & 7.03 & 7.11 & 7.00 & 8.28 & 7.98 & 7.06 & 10.38 & 8.05 & 58.08 \\
 \hline
 & (0, --) &  & 94.9 & 94.8 & 90.2 & 72.3 & 88.6 & 93.2 & 94.5 & 95.0 & 94.8 \\
 & (5, 5) &  & 94.8 & 94.7 & 91.9 & 77.0 & 89.5 & 94.7 & 95.8 & 94.7 & 91.1 \\
 & (10, 5) &  & 93.2 & 93.2 & 91.8 & 79.8 & 90.7 & 94.2 & 94.6 & 93.2 & 90.5 \\
 & (5, 10) &  & 94.9 & 94.9 & 91.9 & 75.7 & 90.4 & 95.5 & 97.8 & 94.8 & 90.2 \\
CP & (10, 10) &  & 94.4 & 94.1 & 93.1 & 78.8 & 91.3 & 96.9 & 98.0 & 94.5 & 90.3 \\
 & (5, 15) &  & 94.7 & 94.6 & 91.5 & 75.5 & 90.2 & 95.5 & 98.4 & 94.9 & 90.1 \\
 & (10, 15) &  & 94.1 & 93.7 & 92.9 & 78.9 & 91.3 & 97.3 & 99.0 & 94.4 & 90.1 \\
 & (5, 20) &  & 94.9 & 94.7 & 92.0 & 77.2 & 90.8 & 96.1 & 98.9 & 94.9 & 90.4 \\
 & (10, 20) &  & 94.4 & 94.2 & 92.9 & 78.4 & 91.9 & 97.7 & 99.5 & 94.6 & 90.5 \\
 \hline
 & (0, --) &  & 24.7 & 24.7 & 23.2 & 18.6 & 24.7 & 24.7 & 25.1 & 24.7 & 24.7 \\
 & (5, 5) &  & 26.7 & 26.7 & 26.1 & 21.1 & 26.1 & 27.6 & 30.4 & 26.4 & 36.3 \\
 & (10, 5) &  & 30.4 & 29.9 & 31.4 & 24.7 & 28.2 & 32.0 & 37.4 & 30.7 & 44.3 \\
 & (5, 10) &  & 25.9 & 26.0 & 25.2 & 20.2 & 26.2 & 28.0 & 34.3 & 25.8 & 59.1 \\
AL & (10, 10) &  & 27.2 & 27.4 & 27.3 & 22.3 & 27.9 & 32.7 & 49.4 & 27.7 & 77.4 \\
 & (5, 15) &  & 25.7 & 25.7 & 24.9 & 20.2 & 26.1 & 27.9 & 36.2 & 25.7 & 83.9 \\
 & (10, 15) &  & 27.0 & 27.0 & 27.0 & 22.2 & 27.7 & 32.6 & 58.9 & 27.6 & 112.3 \\
 & (5, 20) &  & 25.8 & 25.8 & 24.8 & 20.5 & 26.2 & 28.1 & 37.7 & 25.8 & 110.2 \\
 & (10, 20) &  & 26.9 & 26.9 & 26.5 & 21.7 & 27.9 & 32.8 & 69.2 & 27.4 & 149.3 \\
 \hline
 & (0, --) &  & 1.01 & 1.01 & 44.92 & 54.00 & 4.68 & 2.12 & 1.86 & 0.99 & 0.98 \\
 & (5, 5) &  & 2.09 & 2.41 & 43.03 & 53.17 & 4.32 & 1.97 & 1.80 & 1.36 & 0.99 \\
 & (10, 5) &  & 3.34 & 4.26 & 40.52 & 51.95 & 3.98 & 1.86 & 1.82 & 1.96 & 0.98 \\
 & (5, 10) &  & 1.98 & 2.27 & 43.45 & 53.45 & 4.22 & 1.89 & 1.79 & 1.28 & 0.98 \\
IF & (10, 10) &  & 3.05 & 3.68 & 41.72 & 52.64 & 3.84 & 1.70 & 1.95 & 1.68 & 0.98 \\
 & (5, 15) &  & 1.96 & 2.21 & 43.73 & 53.36 & 4.22 & 1.88 & 1.77 & 1.28 & 0.98 \\
 & (10, 15) &  & 3.05 & 3.54 & 42.26 & 52.74 & 3.84 & 1.67 & 2.04 & 1.64 & 0.98 \\
 & (5, 20) &  & 1.97 & 2.19 & 43.58 & 53.44 & 4.23 & 1.87 & 1.75 & 1.28 & 0.98 \\
 & (10, 20) &  & 3.05 & 3.49 & 42.52 & 52.87 & 3.84 & 1.65 & 2.15 & 1.59 & 0.98 \\
\hline
\end{tabular}
}
\end{center}
\end{table}

\begin{table}[htbp]
\caption{Average values of RMSEs of posterior predictive means and CPs and ALs of $95\%$ prediction intervals based on the EH method with $\gamma=1$, $t$-distribution with 3 degrees of freedom (T3), two component mixture of normal and $t$-distribution with $1/2$ degrees of freedom (MT), and the standard normal linear regression (N), based on 500 replications in 9 combinations of $(100\omega, \mu)$.
RMSE and CP are multiplied by 100.
\label{tab:sim-pred}}
\begin{center}
{\small
\begin{tabular}{ccccccccccccc}
\hline
 & $(100\omega, \mu)$  & EH & T3 & MT & N \\
 \hline
 & (0, --) & 50.5 & 50.7 & 50.5 & 50.5 \\
 & (5, 5) & 50.6 & 50.7 & 50.5 & 53.0 \\
 & (10, 5) & 51.3 & 51.3 & 51.7 & 58.2 \\
 & (5, 10) & 51.6 & 51.7 & 51.5 & 61.2 \\
RMSE & (10, 10) & 50.6 & 50.8 & 50.7 & 78.1 \\
 & (5, 15) & 50.5 & 50.6 & 50.5 & 70.1 \\
 & (10, 15) & 51.2 & 51.3 & 51.2 & 99.0 \\
 & (5, 20) & 51.2 & 51.3 & 51.2 & 85.3 \\
 & (10, 20) & 50.8 & 50.9 & 52.2 & 127.7 \\
 \hline
 & (0, --) & 95.5 & 98.9 & 95.6 & 95.8 \\
 & (5, 5) & 95.7 & 99.6 & 95.0 & 99.2 \\
 & (10, 5) & 96.7 & 99.9 & 95.0 & 99.9 \\
 & (5, 10) & 94.6 & 99.4 & 94.0 & 100.0 \\
CP & (10, 10) & 95.6 & 100.0 & 94.5 & 100.0 \\
 & (5, 15) & 95.2 & 99.7 & 94.7 & 100.0 \\
 & (10, 15) & 95.1 & 100.0 & 93.9 & 100.0 \\
 & (5, 20) & 94.6 & 99.7 & 94.2 & 100.0 \\
 & (10, 20) & 95.2 & 99.9 & 94.2 & 100.0 \\
 \hline
 & (0, --) & 2.01 & 2.64 & 2.02 & 2.03 \\
 & (5, 5) & 2.06 & 2.98 & 2.00 & 2.97 \\
 & (10, 5) & 2.21 & 3.46 & 2.14 & 3.62 \\
 & (5, 10) & 2.01 & 3.02 & 1.96 & 4.80 \\
AL & (10, 10) & 2.04 & 3.64 & 1.95 & 6.35 \\
 & (5, 15) & 2.00 & 3.02 & 1.96 & 6.77 \\
 & (10, 15) & 2.02 & 3.62 & 1.94 & 9.12 \\
 & (5, 20) & 1.99 & 3.04 & 1.95 & 8.96 \\
 & (10, 20) & 2.01 & 3.68 & 2.00 & 12.14 \\
\hline
\end{tabular}
}
\end{center}
\end{table}

\subsection{Random intercept models}\label{sec:sim-RI}
Next, we consider simulation studies using the following random intercept model: 
\begin{equation}\label{RI}
y_{jt}= \beta_0 +\sum_{k=1}^p \beta_kx_{jtk} + v_j + \sigma\ep_{jt}, \ \ \ t=1,\ldots,T, \ \ \ j=1,\ldots,m, 
\end{equation}
where $v_j\sim N(0,\tau _v^2)$ is a random effect. 
This is an example of the general model given in Section~\ref{sec:extension}. 
The model of this type is frequently used in longitudinal data analysis \citep[e.g.][]{verbeke2009linear}, where $m$ and $T$ are the numbers of subjects and repeated measurements, respectively, and $v_j$ is regarded as a subject-specific effect. 
Throughout this study, we set $m=50$, $T=10$ and $p=10$. 
We adopted the same values for $\beta_k$'s, and the same generating process for $(x_{jt1},\ldots,x_{jtp})$ and $\ep_{jt}$, as those in the previous simulation study. 
The scale parameters are set as $\tau^2_v=(0.5)^2$ and $\sigma=1$.

We model the distribution of error $\epsilon _{jt}$ in the model (\ref{RI}) by the EH distribution with latent variables $(z_{jt},u_{jt})$. The same data augmentation strategy can be used in the posterior computation for this model, and the full conditional distribution of $v_j$ is given by $N(\bt_j\at_j, \bt_j)$, where 
$$
\bt_j^{-1}=\frac{1}{\tau^2_v}+\frac1{\sigma^2}\sum_{t=1}^T\frac1{u_{jt}^{z_{jt}}}, \ \ \ \at_j=\frac{1}{\sigma^2}\sum_{t=1}^Tu_{jt}^{-z_{jt}}\Big(y_{tj}-\beta_0 -\sum_{k=1}^p \beta_kx_{jtk}\Big).
$$
We use an inverse-gamma prior for $\tau^2_v$, namely, $\tau^2_v\sim {\rm IG}(a_{v}, b_{v})$ with $a_v=b_v=1$, and the full conditional distribution of $\tau^2_v$ is ${\rm IG}(\at_{v},\bt_v)$, where $\at_{v}=a_{v}+m/2$ and $\bt_{v}=b_{v}+\sum_{j=1}^mv_j^2/2$.
Given the random effect $v_j$, the other parameters and latent variables can be easily generated from their full conditional distributions in Section~\ref{sec:post} with the slight modification by replacing the response variable with $y_{jt}-v_j$. 
The other error distributions, such as the normal and $t$-distributions and the finite mixture, can be implemented in the same way by using its representation of scale mixture of normals. 
The only exception is the LPTN distribution; it does not admit representation of scale mixture of normals and is not directly incorporated into the random intercept model.  
In total, we employed six error distributions (EH, aEH, C, aT, MT, N) in this study. 
We evaluated the performance of point and interval estimations by posterior means and 95\% credible intervals for the regression coefficients, using RMSE, CP and AL, as adopted in the previous study. 
The performance of the six models in predicting the random effect is also assessed via square root of mean squared prediction errors (RMSPE) based on 500 replications of the simulations, and these values are averaged over $v_1,\ldots,v_m$.

We report the results in Table \ref{tab:sim-RI}.
Regarding the regression coefficients, almost the same tendency as in Tables \ref{tab:sim} can be observed, which indicates the usefulness of the proposed EH method under more structured models than linear regression. 
It is also observed that the EH method with estimated $\gamma$ does not necessarily work well, thereby our recommendation in this example is simply using the fixed value $\gamma=1$.
In terms of RMSPE, the proposed EH method consistently outperforms the other methods. 
Specifically, the difference between the EH and MT methods is considerable, which also suggests the importance of the posterior robustness shown in Theorem~\ref{thm:normalized}, i.e., the advantage of the proposed error distribution over the conventional finite mixture approach with $t$-distribution.

\begin{table}[htbp]
\caption{Average values of RMSEs, CPs, ALs and RMSPEs of the proposed extremely-heavy tailed distribution with $\gamma$ fixed (EH) and estimated (aEH), Cauchy distribution (C), $t$-distribution with estimated degrees of freedom (aT), two component mixture of normal and $t$-distribution with $1/2$ degrees of freedom (MT), and normal distribution (N) under the random intercept models with 6 combinations of $(100\omega, \mu)$.
All values except for IFs are multiplied by 100.
\label{tab:sim-RI}}
\begin{center}
\begin{tabular}{ccccccccccccc}
\hline
 & $(100\omega, \mu)$ &  & EH & aEH & C & aT & MT & N \\
\hline
\multirow{6}{*}{RMSE} & (5, 5) &  & 5.91 & 5.89 & 6.86 & 6.54 & 5.91 & 10.67 \\
 & (10, 5) &  & 8.45 & 8.88 & 7.12 & 9.39 & 8.52 & 17.51 \\
 & (5, 10) &  & 5.61 & 5.58 & 6.84 & 6.37 & 5.72 & 19.40 \\
 & (10, 10) &  & 5.86 & 5.79 & 6.78 & 9.52 & 6.03 & 33.74 \\
 & (5, 15) &  & 5.47 & 5.45 & 6.65 & 6.10 & 5.58 & 28.23 \\
 & (10, 15) &  & 5.86 & 5.79 & 6.84 & 9.36 & 5.96 & 49.80 \\
\hline
\multirow{6}{*}{CP} & (5, 5) &  & 94.1 & 94.0 & 81.6 & 92.6 & 92.1 & 86.5 \\
 & (10, 5) &  & 92.9 & 93.2 & 84.5 & 90.4 & 91.1 & 85.7 \\
 & (5, 10) &  & 95.1 & 94.7 & 82.3 & 95.2 & 91.9 & 85.9 \\
 & (10, 10) &  & 95.2 & 95.2 & 86.0 & 95.3 & 91.8 & 86.0 \\
 & (5, 15) &  & 94.9 & 94.6 & 83.6 & 96.4 & 91.9 & 86.4 \\
 & (10, 15) &  & 95.5 & 95.4 & 84.8 & 97.3 & 92.5 & 86.5 \\
 \hline
\multirow{6}{*}{AL} & (5, 5) &  & 22.0 & 21.8 & 18.4 & 22.7 & 20.3 & 27.7 \\
 & (10, 5) &  & 25.8 & 26.0 & 19.7 & 28.5 & 23.1 & 33.7 \\
 & (5, 10) &  & 21.4 & 21.1 & 18.5 & 25.7 & 19.7 & 44.7 \\
 & (10, 10) &  & 23.1 & 22.7 & 19.6 & 38.9 & 21.0 & 58.5 \\
 & (5, 15) &  & 21.3 & 21.0 & 18.5 & 27.2 & 19.7 & 63.3 \\
 & (10, 15) &  & 23.0 & 22.6 & 19.6 & 47.2 & 20.9 & 85.1 \\
 \hline
\multirow{6}{*}{RMSPE} & (5, 5) &  & 29.5 & 29.4 & 33.9 & 31.5 & 35.0 & 40.5 \\
 & (10, 5) &  & 33.6 & 33.2 & 34.1 & 37.1 & 38.8 & 44.2 \\
 & (5, 10) &  & 28.8 & 28.8 & 34.2 & 33.3 & 33.7 & 46.9 \\
 & (10, 10) &  & 30.5 & 29.7 & 33.5 & 40.9 & 36.4 & 48.3 \\
 & (5, 15) &  & 28.8 & 28.8 & 34.0 & 33.9 & 33.6 & 48.3 \\
 & (10, 15) &  & 30.3 & 29.5 & 33.4 & 41.8 & 36.3 & 49.2 \\
\hline
\end{tabular}
\end{center}
\end{table}

\section{Real data examples}\label{sec:app}

The posterior robustness of the proposed EH distribution is demonstrated via the analysis of two real datasets: Boston housing data and diabetes data. The goal of statistical analysis here is the variable selection with $p=29$ and $p=64$ predictors in the presence/absence of outliers. Our robustness scheme is a prominent part of such analysis by allowing the use of unbounded prior densities for strong shrinkage effect-- specifically the horseshoe priors we discussed in Section~\ref{sec:shrink}-- while protecting the posteriors from the potential outliers. The former dataset is suspected to be contaminated with outliers, where the difference of the proposed EH distribution and the traditional $t$-distribution is emphasized. In contrast, the latter dataset is free from extreme outliers, and we use this dataset to discuss the possible efficiency loss caused by the use of EH distributions.

In our examples, we consider robust Bayesian inference using the proposed method with taking account of variable selection, since the number of covariates is not small in two cases. 
Specifically, we employed the horseshoe prior as described in Section~\ref{sec:shrink}.
For comparison, we also applied the standard normal distribution and the two-component mixture of normal and $t$-distributions as the error distribution, while using the horseshoe prior for regression coefficients. 
In all the methods, we generated 10000 posterior samples after discarding the first 5000 posterior samples as burn-in.

\subsection{Boston housing data}\label{sec:boston}

We first consider the famous Boston housing dataset \citep{Boston}. 
The response variable is the corrected median value of owner-occupied homes (in 1,000 USD). The covariates in the original datasets consist of 14 continuous-valued variables about the information of houses, such as per capita crime rate 
and accessibility to radial highways, and 1 binary covariate. 
After standardizing the 14 continuous covariates, we also create squared values of those, which results in $p=29$ covariates in our models. 
The sample size is $n=506$. 
The data also contains the longitude and latitude of house $i$, denoted by $t_i$. 
To take account of spatial correlation, we consider the following model: 
\begin{equation}\label{reg-sp}
y_i=x_i^t\beta + g(t_i) + \ep_i, \ \ \ i=1,\ldots,n,
\end{equation}
where $g(t_i)$ is a spatial effect as an unknown function of location information $t_i$.
We assume that $g(t_i)$ follows the standard Gaussian process, namely, $\eta\equiv (g(t_1),\ldots,g(t_n))$ and $\eta \sim N(0, \kappa^2 C(h))$, where $C(h)$ is a variance-covariance matrix whose $(i,j)$-entry is $\exp(-\|s_i-s_j\|^2/2h^2)$ with unknown bandwidth parameter $h$.
The above model can be seen as the spatially varying intercept model, or the spatially varying coefficient model \citep[e.g.][]{gelfand2003}. 
Also, this is another example of the general model in Section~\ref{sec:extension} with $r=n$, $b=\eta$, $g_i$ is the $i$-th standard basis, and $H(\psi) = \kappa ^2 C(h)$ with $\psi = (\kappa, h)$. 
Under the EH distribution for $\ep_i$, the full conditional distribution of $\eta$ is given by $N(\At_{\eta}^{-1}\Bt_{\eta}, \At_{\eta}^{-1})$, where 
$$
\At_{\eta}=\kappa^{-2}C(h)^{-1}+\sigma^{-2}{\rm diag}(u_1^{-z_1},\ldots,u_n^{-z_n}), \ \ \ {\rm and} \ \ \ 
\Bt_{\eta}=(Y-X\beta)/\sigma^2.
$$
A similar sampling strategy can be used for the two component mixture of a normal and $t$-distribution with $1/2$ degrees of freedom (denoted by MT), as adopted in the simulation study in Section \ref{sec:sim}. 
We employ the conjugate inverse gamma prior ${\rm IG}(1, 1)$ for $\tau^2$, and a uniform prior, $U(0,h_M)$, for $h$, where $h_M$ is the median of all the pair-wise distances of the sampling locations. 
The random-walk Metropolis-Hastings algorithm can be used for sampling from the full conditional distribution of $h$.

As the exploratory analysis, we first applied the model (\ref{reg-sp}) with normal error, $\ep_i\sim N(0,\sigma^2)$, and computed the standardized residuals by using the posterior mean of the model parameters to visualize the potential outliers. The computed residuals are shown in the left panel of Figure~\ref{fig:Boston}. 
Despite the normal error model is sensitive to outliers, there are still large residuals seen in the figure, which implies the extremity of the outliers in this dataset. 
In the proposed error distribution, the existence of extreme outliers is implied by the posterior of mixture weight $s$, i.e., the proportion of the extremely heavy-tailed distribution in the finite mixture. 
The trace plot of posterior samples of mixture weight $s$ under the EH model is presented in the right panel of Figure~\ref{fig:Boston}. 
Since all the sampled values are bounded away from $0$, it suggests that a certain proportion of the heavy-tailed distribution to take account of the outliers shown in the left panel. 
As the prior sensitivity analysis, we also applied more informative priors, ${\rm Beta}(1,5)$ and ${\rm Beta}(1,9)$, in addition to the default prior $s\sim {\rm Beta}(1,1)$, based on the prior belief that $s$ should be small. 
However, the posteriors computed with the three beta priors are almost identical.

The estimated spatial effects, $g(t_i)$, under the EH and normal models are presented in Figure~\ref{fig:Boston-sp}.
The EH model produces spatially smoothed estimates, while the estimates of the normal model are volatile across the sampling area. This finding also evidences the effect of outliers on the posterior inference for the regression coefficients or, in this example, the random intercept terms. 

The posterior means and $95\%$ credible intervals of the regression coefficients based on the three methods are shown in Figure~\ref{fig:Boston-coef}. 
It shows that the results of the normal error model are quite different from those of the MT and EH distributions. The difference of estimates becomes visually clear especially for the significant covariates-- if we define the significance in the sense that the $95\%$ credible intervals do not contain zero-- as the result of proneness/sensitivity to the representative outliers observed in Figure~\ref{fig:Boston}. 
The difference between the posteriors of the EH and MT models does exist, but is not as visually clear as the difference from the normal error model.

Finally, we computed the deviance information criterion \citep{spiegelhalter2002bayesian} of the three models.
The obtained values were 2628 for the normal error model, 2339 for the MT error model, and 2325 for the proposed EH error model, which shows the best fit of the EH error model to the data.

\begin{figure}[!htb]
\centering
\includegraphics[width=14cm,clip]{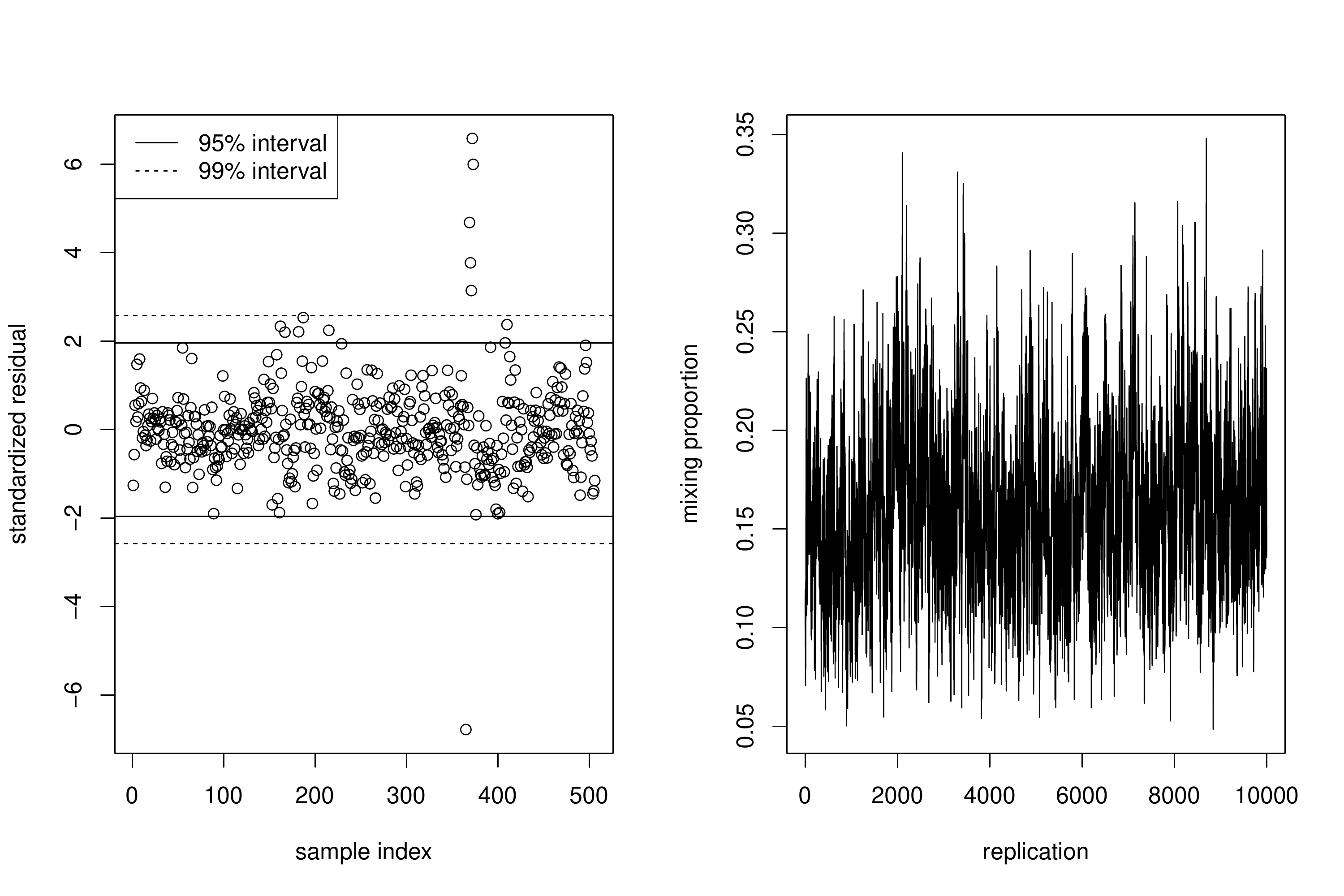}
\caption{Standardized residuals (left) and trace plot of $s$ (mixing proportion) in the proposed EH distribution (right), obtained form the Boston housing data.
The posterior mean and the $95\%$ credible interval of $s$ are $0.160$ and $(0.087, 0.249)$, respectively.  
\label{fig:Boston}
}
\end{figure}

\begin{figure}[!htb]
\centering
\includegraphics[width=13cm,clip]{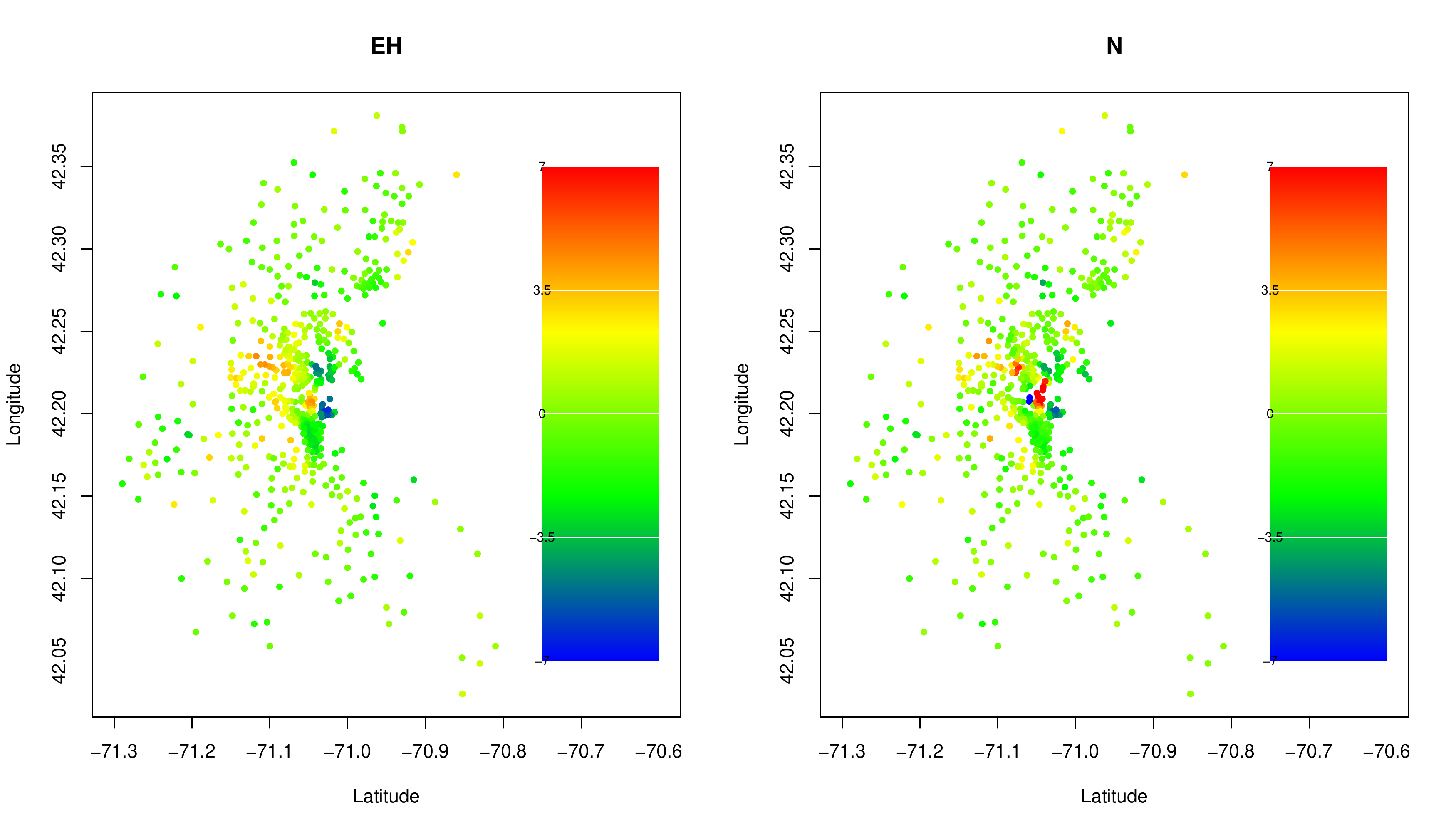}
\caption{Posterior means of the spatial effects based on the EH and the normal (N) distribution. 
\label{fig:Boston-sp}
}
\end{figure}

\begin{figure}[!htb]
\centering
\includegraphics[width=12cm,clip]{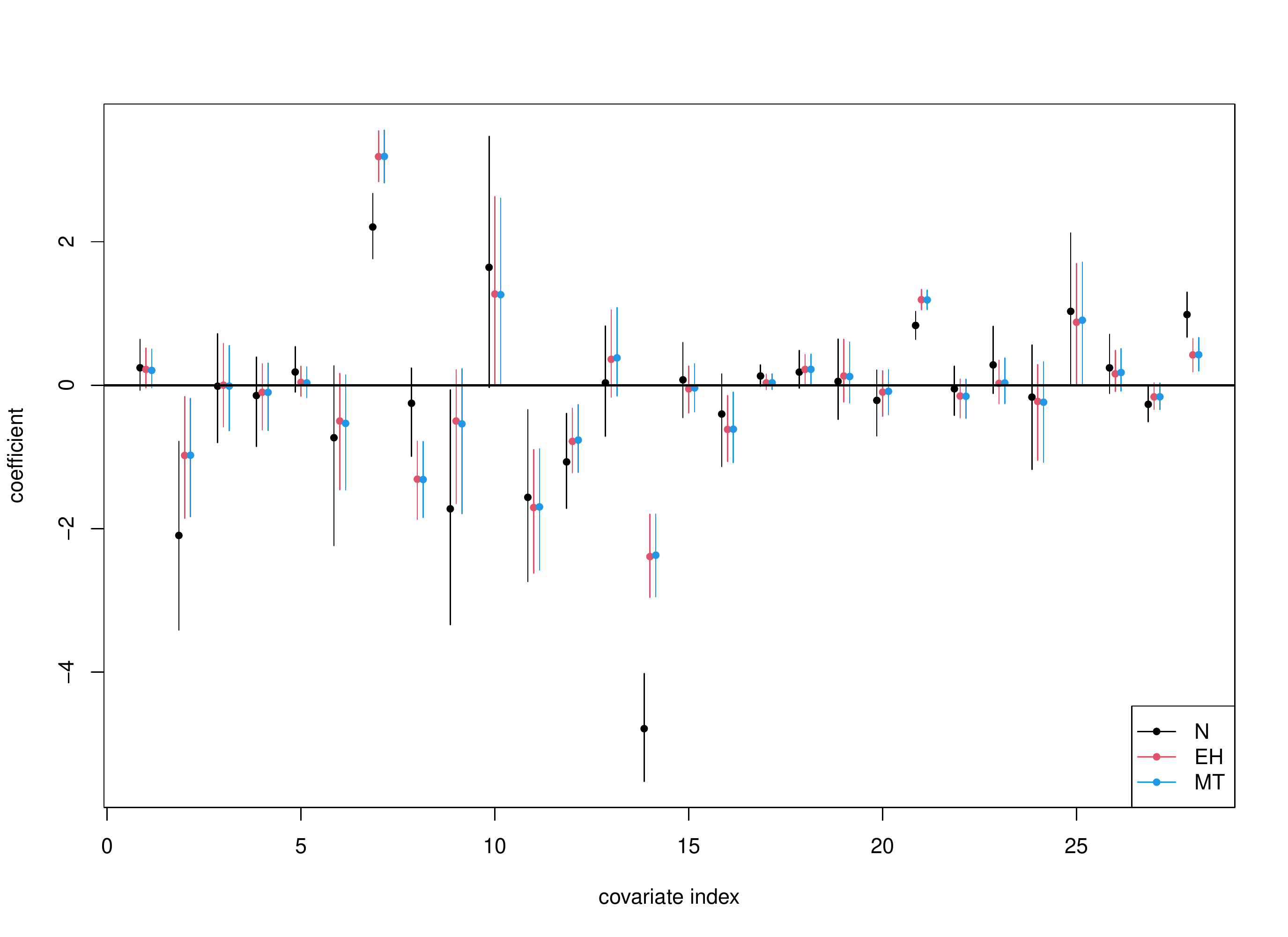}
\caption{Posterior means and $95\%$ credible intervals of the regression coefficients in the normal regression with normal distribution error (N), the proposed EH distribution, and the tow-component mixture of normal and $t$-distribution with $1/2$ degrees of freedom (MT), applied to the Boston housing data. 
\label{fig:Boston-coef}
}
\end{figure}

\subsection{Diabetes data}
We next consider another famous dataset known as Diabetes data \citep{Efron2004}.
The data contains information of $442$ individuals and 10 covariates regarding the personal information and related medical measures of the individuals.
We consider the same formulation of linear regression model as in \cite{Efron2004}; the set of predictors consists of the original 10 variables, 45 interactions, and 9 squared values, which results in $p=64$ predictors in the model.
For this dataset, the regression models with horseshoe prior and three error distributions (N, EH and MT) adopted in Section \ref{sec:boston} are applied.

Similarly to the analysis of Boston housing data, we check the standardized residuals computed under the standard linear regression model, which was presented in the left panel of Figure~\ref{fig:Diabetes}.
Few outliers are confirmed in the dataset as most of residuals are contained in the $99\%$ interval, which strongly supports the standard normal assumption in this example. 
The right panel of Figure~\ref{fig:Diabetes} shows the trace plot of posterior samples of mixture $s$ under the EH distribution. 
All the sampled values are very close to zero, implying that most error terms should be generated from the first component of the mixture, i.e., the standard normal distribution. In this case, the heavy-tailed component might be regarded ``redundant'' for this dataset. 
The same sensitivity analysis on the choice of priors for $s$ is done as in the previous section, but we find no significant change to the results.

To see the possible inefficiency of using the EH models for the dataset without outliers, the posterior means and $95\%$ credible intervals of the regression coefficients are reported in Figure~\ref{fig:Diabetes-coef}. 
The results of the three models are comparable; the predictors selected by significance are almost the same under the three models. 
The only notable difference is that  the credible intervals produced by the $t$-distribution model is slightly larger than those of the other two methods. This indicates the loss of efficiency in using the $t$-distribution method under no outliers, as also confirmed in the simulation results in Section~\ref{sec:sim}. In contrast, the difference in the credible intervals of the Gaussian and EH models is hardly visible in the figure. 
That is, even if no outlier exists, the efficiency loss in estimation under the EH model is minimal.

We also computed the deviance information criterion of the three models.
The obtained values were 4794 for the normal error model and 4795 for both the MT and EH error models, which shows the comparable fit of the three models.

\begin{figure}[!htb]
\centering
\includegraphics[width=14cm,clip]{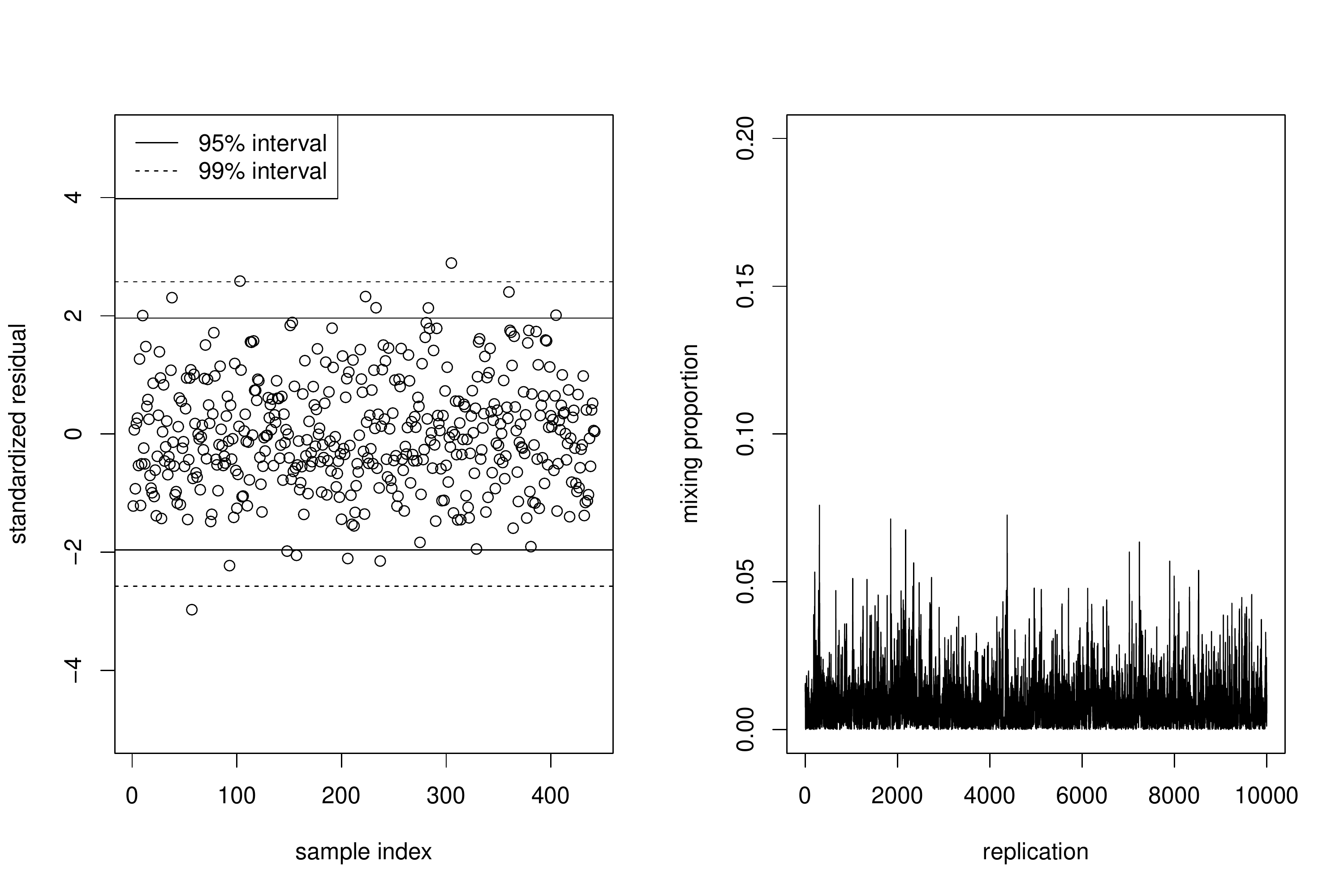}
\caption{Standardized residuals (left) and trace plot of $s$ (mixing proportion) in the proposed EH distribution (right), obtained form the Diabetes data.
The posterior mean and the $95\%$ credible interval of $s$ are $0.008$ and $(0.000, 0.032)$, respectively.  
\label{fig:Diabetes}
}
\end{figure}

\begin{figure}[!htb]
\centering
\includegraphics[width=12cm,clip]{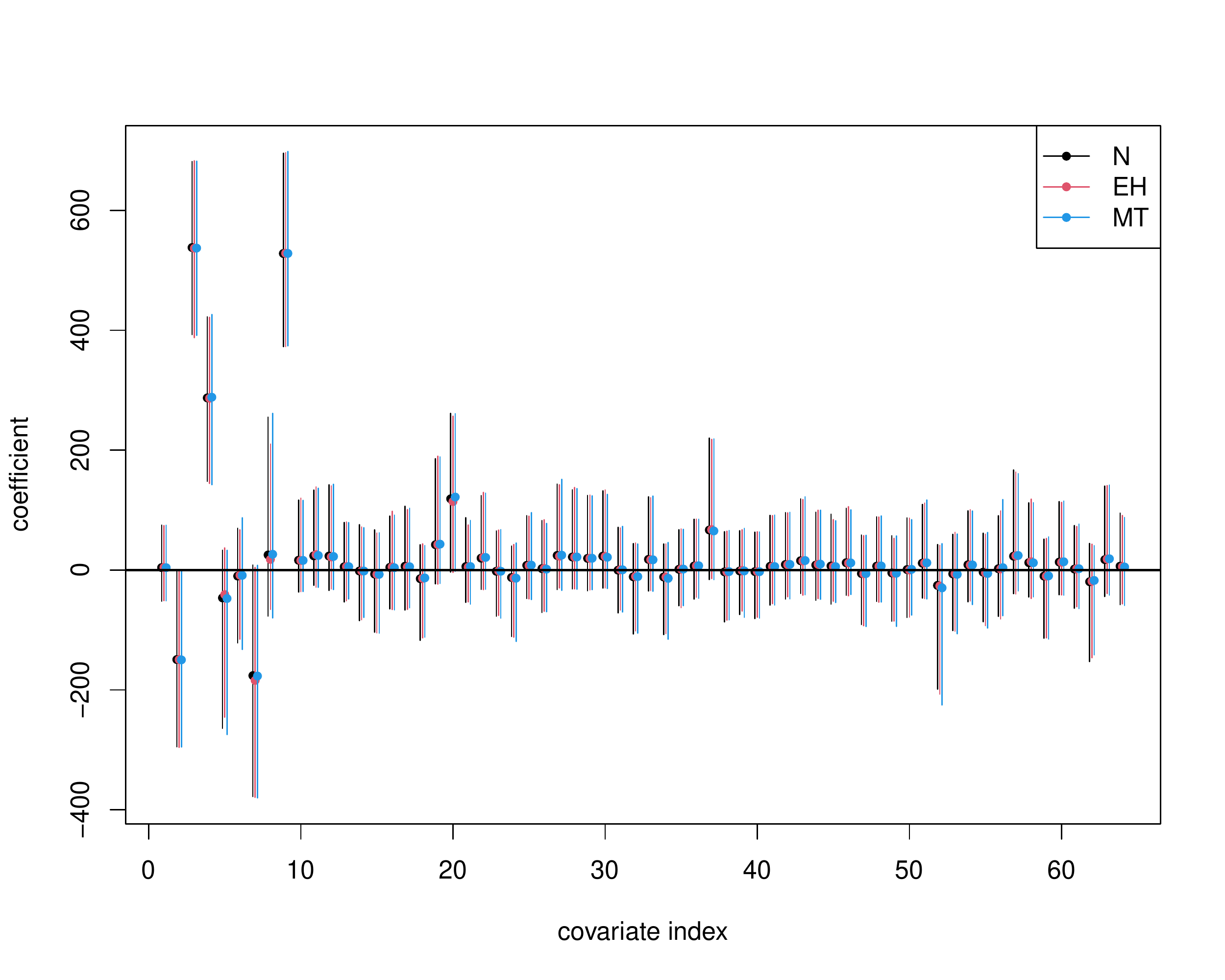}
\caption{Posterior means and $95\%$ credible intervals of the regression coefficients in the normal regression with normal distribution error (N), the proposed EH distribution, and the $t$-distribution (T) with estimated degrees of freedom, applied to the Diabetes data. 
\label{fig:Diabetes-coef}
}
\end{figure}

\section*{Acknowledgement}
This work is supported by the Japan Society for the Promotion of Science (grant number: 18K12757, 17K17659 and 18H00835).

\section{Discussions}\label{sec:disc}

While the focus of this research is on the inference for the regression coefficients and scale parameter, it is also of great interest to employ the predictive analysis based on the proposed model. Because $H$-distribution, as well as many log-regularly varying distributions, is too heavily-tailed to have finite moments, the posterior predictive moments under the EH models do not exist. It is common in practice to have predictive distributions with no finite moments \citep{West2020decisionconstraints} and it is worth investigating the predictive properties under the EH models, especially about the impact of the heavy tails on the predictive uncertainty.

The proposed method is not limited to the analysis of the linear regression models, but can be immediately customized for any conditionally Gaussian models, as we practiced in the analysis of the random intercept model in Section~\ref{sec:sim-RI} and the spatially varying intercept model in Section~\ref{sec:boston}. Other examples include graphical models and dynamic linear models, which can be the topics of the promising future research. 
The efficient posterior computation algorithm presented in this research can be used for these highly-structured models as well by utilizing the hierarchical representation of the proposed error distribution. 
The similar theoretical robustness properties may also be confirmed for those models.

Finally, we note that the assumption (A.1) in Theorem \ref{thm:normalized} misses the high-dimensional regression with small sample size ($n<p$), which means that posterior robustness is not necessarily achieved in this challenging situation. 
Therefore, substantial work will be required to develop the theory and methodology for ``robust high-dimensional regression,'' which we left to an interesting future research topic.

\section*{Supplementary Material}
Proofs of all the propositions and theorems, and additional simulation results are given in the online supplementary material.


\newpage
\begin{center}
{\Large
{\bf 
Supplemental Materials for\\
 ``Log-Regularly Varying Scale Mixture of Normals for Robust Regression''
}
}
\end{center}

\setcounter{equation}{0}
\renewcommand{\theequation}{S\arabic{equation}}
\setcounter{section}{0}
\renewcommand{\thesection}{S\arabic{section}}
\setcounter{table}{0}
\renewcommand{\thetable}{S\arabic{table}}
\setcounter{figure}{0}
\renewcommand{\thefigure}{S\arabic{figure}}
\setcounter{lem}{0}
\renewcommand{\thelem}{S\arabic{lem}}
\setcounter{thm}{0}
\renewcommand{\thethm}{S\arabic{thm}}
\setcounter{prp}{0}
\renewcommand{\theprp}{S\arabic{prp}}

\vspace{1cm}
This Supplementary Material provides proofs of Proposition~2.1, Theorem~2.1, Proposition~2.2 and Corollary~2.1 and additional simulation results.

\section{Lemmas}
In this section, we provide lemmas used in the proofs.

\begin{lem}
\label{lem:IL} 
Let $M, v > 0$. Then we have
\begin{align}
(a) \hspace{30pt} &{1 + \log (1 + M) \over 1 + \log (1 + M v)} \le \max \{ 1, v^{- 1} \} \text{,} \non \\
& \non \\
(b) \hspace{30pt} &\lim_{M \to \infty } {1 + \log (1 + M) \over 1 + \log (1 + M v)} = 1 \text{.} \non 
\end{align}
\end{lem}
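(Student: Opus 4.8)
The plan is to prove (a) by a short case analysis on whether $v\ge 1$ or $v<1$, and to prove (b) by a direct asymptotic expansion. Throughout, note that $M,v>0$ gives $\log(1+M)>0$ and $\log(1+Mv)>0$, so both quantities are positive and the ratio is well defined.

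For (a), first suppose $v\ge 1$. Then $Mv\ge M$, hence $\log(1+Mv)\ge\log(1+M)$, so the ratio is at most $1=\max\{1,v^{-1}\}$. Next suppose $0<v<1$, so that $\max\{1,v^{-1}\}=v^{-1}$ and it suffices to establish $v\{1+\log(1+M)\}\le 1+\log(1+Mv)$. The key observation is that $1+Mv=(1-v)\cdot 1+v\cdot(1+M)$ is a convex combination of $1$ and $1+M$; by concavity of the logarithm,
\[
\log(1+Mv)\ge (1-v)\log 1+v\log(1+M)=v\log(1+M).
\]
Adding the inequality $1\ge v$ to both sides yields $1+\log(1+Mv)\ge v+v\log(1+M)=v\{1+\log(1+M)\}$, which is exactly what is needed. (Alternatively one can set $g(M)=1+\log(1+Mv)-v\{1+\log(1+M)\}$, note $g(0)=1-v>0$, and check $g'(M)=v\{(1+Mv)^{-1}-(1+M)^{-1}\}>0$ for $v<1$; but the concavity route avoids the calculus.)

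For (b), write $1+Mv=M(v+M^{-1})$ and $1+M=M(1+M^{-1})$, so that
\[
\frac{1+\log(1+M)}{1+\log(1+Mv)}=\frac{1+\log M+\log(1+M^{-1})}{1+\log M+\log(v+M^{-1})}.
\]
As $M\to\infty$, both numerator and denominator diverge like $\log M$, while their difference $\log(1+M^{-1})-\log(v+M^{-1})\to-\log v$ stays bounded; hence the ratio tends to $1$. (Equivalently, divide numerator and denominator by $\log M$ and pass to the limit.)

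The argument is entirely elementary and there is no real obstacle; the only step requiring a moment's thought is recognizing that $1+Mv$ is a convex combination of $1$ and $1+M$, which turns the $v<1$ case of (a) into a one-line application of concavity of $\log$ rather than a case-specific monotonicity computation.
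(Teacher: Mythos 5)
Your proof is correct, and it takes a genuinely different route from the paper. The paper proves both parts from a single device: it writes the ratio as
\begin{align}
{1 + \log (1 + M) \over 1 + \log (1 + M v)} = \exp \Big\{ \int_{v}^{1} {1 \over 1 + \log (1 + M t)} \, {M \over 1 + M t} \, dt \Big\} \text{,} \non
\end{align}
bounds the integrand by $1/t$ to get part (a) (giving $\exp(\int_v^1 t^{-1}dt) = v^{-1}$ when $v<1$), and then lets $M \to \infty$ inside the integral via the dominated convergence theorem to get part (b). You instead handle (a) by observing that $1+Mv = (1-v)\cdot 1 + v\,(1+M)$ and invoking concavity of $\log$, then adding $1 \ge v$; and you handle (b) by factoring $1+Mv = M(v+M^{-1})$, $1+M = M(1+M^{-1})$, so that numerator and denominator both diverge like $\log M$ while their difference converges to $-\log v$, forcing the ratio to $1$. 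Both arguments are complete; yours is more elementary (no integral representation, no dominated convergence), while the paper's single exponential-integral formula treats the two parts uniformly and in the same style it reuses in later lemmas. One small point in your favor: your limit argument for (b) visibly works for every $v>0$ without worrying about the orientation of the integral when $v>1$, a detail the paper's representation handles implicitly.
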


\begin{proof}
The inequality in part (a) is trivial when $v \ge 1$; the left-hand-side is bounded by $1$. For the case of $v < 1$, first observe that 
\begin{align}
{1 + \log (1 + M) \over 1 + \log (1 + M v)} &= \exp \Big( \int_{v}^{1} \Big[ {\partial \over \partial t} \log \{ 1 + \log (1 + M t) \} \Big] dt \Big) \non \\
&= \exp \Big\{ \int_{v}^{1} {1 \over 1 + \log (1 + M t)} {M \over 1 + M t} dt \Big\} \non 
\end{align}
for all $v > 0$. 
Then it is immediate from this expression that 
\begin{align}
{1 + \log (1 + M) \over 1 + \log (1 + M v)} &\le \exp \Big( \int_{v}^{1} {1 \over t} dt \Big) = v^{- 1} \non 
\end{align}
for $v < 1$. For part (b), we use the same expression to obtain 
\begin{align}
&\lim _{M\to \infty} {1 + \log (1 + M) \over 1 + \log (1 + M / v)} \non \\ 
&= \exp \Big\{ \lim _{M\to \infty} \int_{v}^{1} {1 \over 1 + \log (1 + M t)} {M \over 1 + M t} dt \Big\} \non \\
&= 1 \non 
\end{align}
by the dominated convergence theorem.
\end{proof}

\begin{lem}
\label{lem:asymptotic} 
For $\ga > 0$ and $\de \ge 0$, let $H(u; \ga , \de )$, $u \in (0, \infty )$, be the proper density proportional to 
\begin{align}
H(u; \ga , \de ) &\propto {1\over (1 + u)^{1 + \delta}} {1 \over \{ 1 + \log (1 + u) \} ^{1 + \ga }} \text{.} \non 
\end{align}
Then we have 
\begin{align}
\int_{0}^{\infty } {\rm{N}} (z | 0, u) H(u; \ga , \de ) du \sim {C^{- 1} \over \sqrt{2 \pi }} \Big( {z^2 \over 2} \Big) ^{- 1 / 2 - \de } \Ga \Big( {1 \over 2} + \de \Big) S \Big( {z^2 \over 2} \Big) \non 
\end{align}
as $|z| \to \infty $, where 
\begin{align}
S(u) &= \Big( {u \over 1 + u} \Big) ^{1 + \de } {1 \over \{ 1 + \log (1 + u) \} ^{1 + \ga }} \non 
\end{align}
for $u \in (0, \infty )$ and $C = \int_{0}^{\infty } u^{- \de - 1} S(u) du$. 
\end{lem}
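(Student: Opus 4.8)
The plan is to collapse the integral to a standard Gamma integral by a change of variables that peels off the Gaussian kernel, and then to justify the resulting limit by dominated convergence, with Lemma~\ref{lem:IL} doing the work of taming the slowly varying logarithmic factor. First I would write $I(z):=\int_{0}^{\infty}{\rm N}(z\mid 0,u)H(u;\ga,\de)\,du$ explicitly using ${\rm N}(z\mid 0,u)=(2\pi u)^{-1/2}e^{-z^2/(2u)}$, put $A=z^2/2$, and substitute $s=A/u$ so the exponential becomes the clean $e^{-s}$. Using $1+u=(s+A)/s$ and factoring $(s+A)^{1+\de}=A^{1+\de}(1+s/A)^{1+\de}$, the powers of $A$, the Jacobian, and the leftover powers of $s$ (a net $s^{-3/2}\cdot s^{1+\de}=s^{-1/2+\de}$) combine, and I expect to arrive at
\begin{align}
I(z)=\frac{C^{-1}}{\sqrt{2\pi}}\,A^{-1/2-\de}\int_{0}^{\infty}\frac{s^{-1/2+\de}e^{-s}}{(1+s/A)^{1+\de}\{1+\log(1+A/s)\}^{1+\ga}}\,ds. \non
\end{align}
Since the target expression is $(C^{-1}/\sqrt{2\pi})A^{-1/2-\de}\Gamma(1/2+\de)S(A)$ once we recall $A=z^2/2$, the whole statement reduces to showing the integral above is asymptotically $\Gamma(1/2+\de)S(A)$ as $A\to\infty$.

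For that I would divide the integrand by $S(A)$. Using $1/S(A)=(1+1/A)^{1+\de}\{1+\log(1+A)\}^{1+\ga}$, the rescaled integrand is
\begin{align}
h_A(s)=s^{-1/2+\de}e^{-s}\,\frac{(1+1/A)^{1+\de}}{(1+s/A)^{1+\de}}\Bigl(\frac{1+\log(1+A)}{1+\log(1+A/s)}\Bigr)^{1+\ga}. \non
\end{align}
For each fixed $s>0$ the second factor tends to $1$ trivially, and the third tends to $1$ by Lemma~\ref{lem:IL}(b) applied with $M=A$ and $v=1/s$; hence $h_A(s)\to s^{-1/2+\de}e^{-s}$ pointwise. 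For the domination I would bound $(1+s/A)^{-1-\de}\le 1$ and $(1+1/A)^{1+\de}\le 2^{1+\de}$ for $A\ge 1$, and invoke Lemma~\ref{lem:IL}(a) to get $\{(1+\log(1+A))/(1+\log(1+A/s))\}^{1+\ga}\le\max\{1,s\}^{1+\ga}\le(1+s)^{1+\ga}$, producing the $A$-independent envelope $2^{1+\de}s^{-1/2+\de}(1+s)^{1+\ga}e^{-s}$; this is integrable on $(0,\infty)$ because $\de\ge 0$ makes $-1/2+\de>-1$ at the origin and the exponential controls the tail. Dominated convergence then yields $\int_{0}^{\infty}h_A(s)\,ds\to\int_{0}^{\infty}s^{-1/2+\de}e^{-s}\,ds=\Gamma(1/2+\de)$, which is precisely the claimed equivalence.

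The \emph{only} real obstacle is the uniform handling of the logarithmic factor $\{1+\log(1+A/s)\}^{-1-\ga}$: it is slowly varying and does not behave like a power of $s$, so neither the pointwise convergence nor an integrable majorant is automatic — and this is exactly what the two parts of Lemma~\ref{lem:IL} are tailored to provide (part (b) for the limit, part (a) for the bound). Once the substitution is in place everything else is routine. I would nonetheless be careful with the constant bookkeeping in the change of variables, in particular the identity $|z|/\sqrt{2}=\sqrt{A}$ and the cancellation $A^{1/2}\cdot A^{-1-\de}=A^{-1/2-\de}$, since that is where arithmetic slips are most likely, and I would note that the statement as written, $S(z^2/2)$ in the conclusion versus $S(A)$ in the argument, is the same object.
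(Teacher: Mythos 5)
Your proof is correct, and it is genuinely different from what the paper does: the paper's ``proof'' of Lemma~\ref{lem:asymptotic} is a one-line citation to (the proof of) part~3 of Theorem~1 of Hamura et al.\ (2020), with no argument given in the supplement. You instead supply a self-contained derivation: the substitution $s=A/u$ with $A=z^2/2$ correctly produces the factor $A^{-1/2-\de}$ and the rescaled integrand $h_A(s)$, the pointwise limit $h_A(s)\to s^{-1/2+\de}e^{-s}$ follows from Lemma~\ref{lem:IL}(b) with $M=A$, $v=1/s$, and your majorant $2^{1+\de}s^{-1/2+\de}(1+s)^{1+\ga}e^{-s}$ (valid for $A\ge 1$ via Lemma~\ref{lem:IL}(a)) is integrable because $\de\ge 0$ and the Gamma kernel controls both endpoints, so dominated convergence gives $\int h_A\to\Ga(1/2+\de)$ and hence the stated equivalence, noting also that your $C$ is exactly the normalizing constant of $H(\cdot;\ga,\de)$ since $u^{-\de-1}S(u)=(1+u)^{-1-\de}\{1+\log(1+u)\}^{-1-\ga}$. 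In effect you have generalized to arbitrary $\de\ge 0$ the very technique the paper uses for its proof of Proposition~2.1 (change of variables plus Lemma~\ref{lem:IL} plus dominated convergence), which is the $\de=0$ case of this computation. What your route buys is self-containedness of the supplement, removing the dependence on an external preprint; what the paper's route buys is brevity. No gaps.
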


\begin{proof}
The result follows by (the proof of) part 3 of Theorem 1 of Hamura et al. (2020). 
\end{proof}

\begin{lem}
\label{lem:f0f1f} 
For $z \in \mathbb{R}$, let $f_0 (z) = {\rm{N}} (z | 0, 1)$ and 
\begin{align}
f_1 (z) &= \int_{0}^{\infty } {\rm{N}} (z | 0, u) H(u; \ga ) du \text{,} \non 
\end{align}
where $H(u; \ga ) = H(u; \ga , \de = 0)$, $u \in (0, \infty )$, is the H-distribution.  
\begin{enumerate}
\item[{\rm{(i)}}]
For all $z \in \mathbb{R}$, we have $0 < f_0 (z) = f_0 (|z|) \le f_0 (0) < \infty $, $0 < f_1 (z) = f_1 (|z|) \le f_1 (0) < \infty $, and $0 < f(z) = f(|z|) \le f(0) < \infty $. 
\item[{\rm{(ii)}}]
$f_0 (r)$, $f_1 (r)$, and $f(r)$ are continuous and nonincreasing functions of $r$ for $r \ge 0$. 
\item[{\rm{(iii)}}]
$\lim_{|z| \to \infty } \{ f_0 (z) / f_1 (z) \} = 0$ and $\sup_{z \in \mathbb{R}} \{ f_0 (z) / f_1 (z) \} < \infty $. 
\item[{\rm{(iv)}}]
There exists $C_1 > 0$ such that for all $z \neq 0$, we have 
\begin{align}
f_1 (z) \le {C_1 \over |z|} {1 \over \{ 1 + \log (1 + |z|^2 ) \} ^{1 + \ga }} \text{.} \non 
\end{align}
\item[{\rm{(v)}}]
There exists $C_2 > 0$ such that for all $z \in \mathbb{R} \setminus (- 1, 1)$, we have 
\begin{align}
f_1 (z) \ge {C_2 \over |z|} {1 \over \{ 1 + \log (1 + |z|^2 ) \} ^{1 + \ga }} \text{.} \non 
\end{align}
\item[{\rm{(vi)}}]
There exists $C_3 > 0$ such that for all $z \in \mathbb{R}$, we have 
\begin{align}
f_1 (z) \le {C_3 \over 1 + |z|} \text{.} \non 
\end{align}
\item[{\rm{(vii)}}]
There exists $C_4 > 0$ such that for all $\yt , \mut \in \mathbb{R}$ and all $\si \in (0, \infty )$, we have 
\begin{align}
| \yt | f_1 \Big( {\yt - \mut \over \si } \Big) &\le | \yt | f_1 \Big( {| \yt | - | \mut | \over \si } \Big) \le C_4 ( \si + | \mut |) \text{.} \non 
\end{align}
\end{enumerate}
\end{lem}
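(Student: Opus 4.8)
The plan is to prove (i)--(vii) in sequence, drawing on two ingredients: elementary monotonicity and integrability properties of the Gaussian kernel ${\rm N}(z \mid 0, u)$, and the sharp tail behaviour of $f_1$ supplied by Lemma~\ref{lem:asymptotic} (taken with $\de = 0$) together with the logarithmic comparison estimates of Lemma~\ref{lem:IL}. Throughout I will use that $f(z) = (1-s) f_0(z) + s f_1(z)$ with $s \in (0,1)$, since $f$ is a convex combination of $f_0$ and $f_1$ every upper bound and every monotonicity or continuity statement for $f_0$ and $f_1$ passes to $f$, while $f(z) \ge s f_1(z) > 0$ gives strict positivity. For (i): symmetry holds because ${\rm N}(z \mid 0, u)$ depends on $z$ only through $|z|$; the bounds $f_0(z) \le f_0(0)$ and $f_1(z) \le f_1(0)$ follow by integrating ${\rm N}(z \mid 0, u) \le {\rm N}(0 \mid 0, u)$ against $H(u; \ga)$; and $f_1(0) = \int_0^\infty (2\pi u)^{-1/2} H(u; \ga)\, du < \infty$ because the integrand behaves like $u^{-1/2}$ near $0$ and like $u^{-3/2}(\log u)^{-1-\ga}$ near $\infty$, both integrable. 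Part (ii) is then routine: monotonicity in $r = |z|$ is inherited from the monotonicity of ${\rm N}(r \mid 0, u)$ for each fixed $u$, and continuity of $f_1$ (hence of $f$) follows from dominated convergence with the $z$-free dominating function $(2\pi u)^{-1/2}$.

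The analytic heart is (iii)--(v). For (iii), Lemma~\ref{lem:asymptotic} with $\de = 0$ yields $f_1(z) \asymp |z|^{-1}\{1 + \log(1 + z^2/2)\}^{-1-\ga}$ as $|z| \to \infty$; since $f_0(z) = (2\pi)^{-1/2} e^{-z^2/2}$ decays faster than any power, $f_0(z)/f_1(z) \to 0$, and as $f_0/f_1$ is continuous and positive on $\mathbb{R}$ (using $f_1 > 0$ everywhere) and tends to $0$ at $\pm\infty$, it is bounded. For (iv)--(v), the same asymptotic produces constants $0 < c_1 \le c_2$ with $c_1 \rho(z) \le f_1(z) \le c_2 \rho(z)$ for $|z| \ge R$, where $\rho(z) = |z|^{-1}\{1 + \log(1 + z^2)\}^{-1-\ga}$; passing from $\{1 + \log(1 + z^2/2)\}$ to $\{1 + \log(1 + z^2)\}$ costs only a factor $2^{1+\ga}$ by Lemma~\ref{lem:IL}(a) with $M = z^2$, $v = 1/2$. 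On the remaining region the argument is compactness: on $\{|z| \le R\}$ (for (iv)) and on $\{1 \le |z| \le R\}$ (for (v)), $f_1$ is continuous, bounded and bounded away from $0$, while $\rho$ is pinched between positive constants, so enlarging $C_1$ and shrinking $C_2$ completes the estimates; the blow-up of $\rho$ at the origin is what forces (v) to exclude $(-1,1)$ and is harmless for the upper bound (iv).

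The last two parts are bookkeeping. For (vi), combine $f_1(z) \le C_1/|z|$ (a weakening of (iv)) with $f_1(z) \le f_1(0)$: since $\min\{f_1(0), C_1/|z|\} \le C_3/(1+|z|)$ once $C_3$ is large, the bound follows. For (vii), the first inequality is the reverse triangle inequality $|\yt - \mut| \ge \bigl| |\yt| - |\mut| \bigr|$ combined with the symmetry and monotonicity of $f_1$ from (i)--(ii) (both sides equal after inserting absolute values inside $f_1$); the second is a two-case estimate of $a f_1\bigl((a-b)/\si\bigr)$ with $a = |\yt| \ge 0$, $b = |\mut| \ge 0$. If $a \le 2b$, bound $f_1$ by $f_1(0)$ to get $a f_1(\cdot) \le 2 f_1(0) b \le 2 f_1(0)(\si + b)$; if $a > 2b$, then $(a-b)/\si > a/(2\si) > 0$, so (vi) gives $f_1\bigl((a-b)/\si\bigr) \le C_3 \si/(a-b) \le 2 C_3 \si/a$ and hence $a f_1(\cdot) \le 2 C_3 \si \le 2 C_3(\si + b)$. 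Taking $C_4 = \max\{2 f_1(0), 2 C_3\}$ finishes the proof.

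I expect the single genuine difficulty to be the exact logarithmic exponent in (iv)--(v). A direct split of the defining integral at $u \asymp z^2$ gives only the weaker rate $|z|^{-1}\{1 + \log(1 + z^2)\}^{-\ga}$, because $\int_{z^2}^\infty H(u; \ga)\, du = \{1 + \log(1 + z^2)\}^{-\ga}$; recovering the additional logarithmic factor $\{1 + \log(1 + z^2)\}^{-1}$ truly requires the refined Laplace-type analysis behind Lemma~\ref{lem:asymptotic}, after which everything reduces to the constant-chasing and compactness arguments above.
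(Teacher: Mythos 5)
Your proof is correct, and for the two substantive parts it takes a genuinely different route from the paper. For (iv)--(v) the paper never invokes Lemma~\ref{lem:asymptotic}: it substitutes $u \mapsto |z|^2 u$ inside the mixture integral, which factors out $|z|^{-1}\{1+\log(1+|z|^2)\}^{-1-\ga}$ exactly, and then bounds the remaining integral above and below by constants using only Lemma~\ref{lem:IL}(a); this yields the upper bound for \emph{all} $z\neq 0$ and the lower bound for all $|z|\ge 1$ in one stroke, with no compactness patch. You instead get two-sided bounds for $|z|\ge R$ from the sharp asymptotics of Lemma~\ref{lem:asymptotic} (the passage between $\log(1+z^2/2)$ and $\log(1+z^2)$ via Lemma~\ref{lem:IL}(a) is handled correctly, and the direction is even free for the lower bound) and close the bounded region by continuity and positivity of $f_1$, noting correctly that the blow-up of the comparison function at the origin only helps the upper bound; since Lemma~\ref{lem:asymptotic} is available, this is a legitimate and conceptually cleaner argument, though your closing remark that the exact logarithmic exponent ``truly requires'' the Laplace-type analysis is not quite right --- the paper's elementary rescaling already delivers it. For (vii) the paper treats the case $|\yt| > \si + |\mut|$ by a direct estimate of the integral defining $f_1$, whereas you deduce it from (vi) via the split $|\yt|\le 2|\mut|$ versus $|\yt| > 2|\mut|$ (with the degenerate case $|\yt|=0$ trivial); your derivation is a genuine simplification and is correct, as are your (i)--(iii) and (vi), which match the paper's arguments (dominated convergence with dominating function $(2\pi u)^{-1/2}H(u;\ga)$ for continuity, Lemma~\ref{lem:asymptotic} plus continuity for (iii), and $\min\{f_1(0), C_1/|z|\}\le C_3/(1+|z|)$ for (vi)).
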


\begin{proof}
Parts (i) and (ii) are trivial. 
Since $\lim_{z \to 0} \{ f_0 (z) / f_1 (z) \} = f_0 (0) / f_1 (0) < \infty $ by parts (i) and (ii) and since $\lim_{|z| \to \infty } \{ f_0 (z) / f_1 (z) \} = 0$ by Lemma \ref{lem:asymptotic}, part (iii) follows. 

Note that for all $z \neq 0$, 
\begin{align}
f_1 (z) &= \int_{0}^{\infty } {\rm{N}} (z | 0, u) H(u; \ga ) du \non \\
&= \int_{0}^{\infty } {1 \over \sqrt{2 \pi }} {1 \over u^{1 / 2}} e^{- z^2 / (2 u)} {\ga \over 1 + u} {1 \over \{ 1 + \log (1 + u) \} ^{1 + \ga }} du \non \\
&= \int_{0}^{\infty } {|z| \over \sqrt{2 \pi }} {1 \over u^{1 / 2}} e^{- 1 / (2 u)} {\ga \over 1 + |z|^2 u} {1 \over \{ 1 + \log (1 + |z|^2 u) \} ^{1 + \ga }} du \non \\
&= {\ga |z| \over \sqrt{2 \pi }} {1 \over \{ 1 + \log (1 + |z|^2 ) \} ^{1 + \ga }} \int_{0}^{\infty } {1 \over u^{1 / 2}} e^{- 1 / (2 u)} {1 \over 1 + |z|^2 u} \Big\{ {1 + \log (1 + |z|^2 ) \over 1 + \log (1 + |z|^2 u)} \Big\} ^{1 + \ga } du \text{.} \non 
\end{align}
Then, by part (a) of Lemma \ref{lem:IL}, we have 
\begin{align}
f_1 (z) &\le {\ga |z| \over \sqrt{2 \pi }} {1 \over \{ 1 + \log (1 + |z|^2 ) \} ^{1 + \ga }} {1 \over |z|^2} \int_{0}^{\infty } {1 \over u^{3 / 2}} e^{- 1 / (2 u)} ( \max \{ 1, 1 / u \} )^{1 + \ga } du \non \\
&\le {\ga \over \sqrt{2 \pi }} {1 \over |z|} {1 \over \{ 1 + \log (1 + |z|^2 ) \} ^{1 + \ga }} \int_{0}^{\infty } \Big\{ {e^{- 1 / (2 u)} \over u^{3 / 2}} + {e^{- 1 / (2 u)} \over u^{5 / 2 + \ga }} \Big\} du < \infty \non 
\end{align}
for all $z \neq 0$, which prove part (iv). 
Also, for all $z \in \mathbb{R} \setminus (- 1, 1)$, 
\begin{align}
f_1 (z) &\ge {\ga |z| \over \sqrt{2 \pi }} {1 \over \{ 1 + \log (1 + |z|^2 ) \} ^{1 + \ga }} {1 \over |z|^2} \int_{0}^{\infty } {1 \over u^{1 / 2}} e^{- 1 / (2 u)} {1 \over 1 + u} ( \min \{ 1, 1 / u \} )^{1 + \ga } du \non 
\end{align}
by part (a) of Lemma \ref{lem:IL}. 
This proves part (v). 
It follows from parts (i) and (iv) that $f_1 (z) \le \max \{ 2 f_1 (0) / (1 + |z|), 2 C_1 / (1 + |z|) \} $ for all $z \in \mathbb{R}$. 
Therefore, part (vi) is obtained. 
For part (vii), let $\yt , \mut \in \mathbb{R}$ and $\si \in (0, \infty )$. 
Then $f_1 (( \yt - \mut ) / \si ) \le f_1 ((| \yt | - | \mut |) / \si )$ by parts (i) and (ii). 
Suppose first that $| \yt | \le \si + | \mut |$. 
Then $| \yt | f_1 ((| \yt | - | \mut |) / \si ) \le f_1 (0) ( \si + | \mut |)$ by part (i). 
Next, suppose that $| \yt | > \si + | \mut |$. 
Then, since $| \yt | - | \mut | \ge 0$ and $(| \yt | - | \mut |) / \si \ge 1$, 
\begin{align}
&| \yt | f_1 \Big( {| \yt | - | \mut | \over \si } \Big) \non \\
&= {\ga \over \sqrt{2 \pi }} \int_{0}^{\infty } {1 \over u^{1 / 2}} e^{- 1 / (2 u)} {| \yt | (| \yt | - | \mut |) / \si \over 1 + \{ (| \yt | - | \mut |) / \si \} ^2 u} {1 \over (1 + \log [1 + \{ (| \yt | - | \mut |) / \si \} ^2 u])^{1 + \ga }} du \non \\
&\le {\ga \over \sqrt{2 \pi }} \int_{0}^{\infty } {1 \over u^{1 / 2}} e^{- 1 / (2 u)} {1 \over \si } {(| \yt | - | \mut |)^2 + | \mut | (| \yt | - | \mut |) \over 1 + \{ (| \yt | - | \mut |) / \si \} ^2 u} du \non \\
&\le {\ga \over \sqrt{2 \pi }} \int_{0}^{\infty } {1 \over u^{1 / 2}} e^{- 1 / (2 u)} {1 \over \si } \Big( {\si ^2 \over u} + {\si | \mut | \over u} \Big) du = {\ga \over \sqrt{2 \pi }} ( \si + | \mut | ) \int_{0}^{\infty } {1 \over u^{3 / 2}} e^{- 1 / (2 u)} du < \infty \text{.} \non 
\end{align}
This completes the proof of part (vii). 
\end{proof}

\begin{lem}
\label{lem:linear} 
Let $m, p \in \mathbb{N}$. 
Let $w = ( w_1 , \dots , w_m )^t \in \mathbb{R} ^m$. 
Let $Z = ( z_1 , \dots , z_m )^t$ be an $m \times p$ matrix of observations (such that any set of its $p$ distinct row vectors is linearly independent). 
Suppose that $m \ge p$. 
Then there exist $R > 0$ and $\de > 0$ (which may depend on $w$ and $Z$) such that 
\begin{align}
\prod_{i = 1}^{m} {1 \over 1 + | w_i - z_{i}^{t} \be |} \le {1 \over (1 + \de | \be | )^{m - p + 1}} \non 
\end{align}
for all $\be \in \mathbb{R} ^p$ satisfying $| \be | \ge R$. 
\end{lem}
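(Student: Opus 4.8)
The plan is to reduce the claim to a uniform statement about the linear forms $\be \mapsto z_i^t \be$. Writing a nonzero $\be$ as $\be = |\be|\, v$ with $v$ in the unit sphere $S^{p-1}$ gives $z_i^t \be = |\be|\, z_i^t v$, so for large $|\be|$ the size of the product is governed by how many of the numbers $|z_i^t v|$ stay bounded away from zero, uniformly over $v \in S^{p-1}$.

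The main step, and the only one that is not routine, is to prove the following: there exists $\eta > 0$ such that for every $v \in S^{p-1}$ one has $\#\{ i : |z_i^t v| \ge \eta \} \ge m - p + 1$. I would argue by contradiction. If this fails, then for each $k \in \mathbb{N}$ there is $v_k \in S^{p-1}$ with $\#\{ i : |z_i^t v_k| < 1/k \} \ge p$; choosing a $p$-element subset of such indices for each $k$ and applying the pigeonhole principle (there are only $\binom{m}{p}$ such subsets), I can pass to a subsequence along which one fixed $p$-element set $S$ satisfies $|z_i^t v_k| < 1/k$ for all $i \in S$. By compactness of $S^{p-1}$, pass to a further subsequence with $v_k \to v_* \in S^{p-1}$; then $z_i^t v_* = 0$ for every $i \in S$. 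Since $|S| = p$ and the corresponding rows of $Z$ are distinct, this says those $p$ row vectors annihilate the nonzero vector $v_*$, i.e.\ they are linearly dependent, contradicting the hypothesis on $Z$.

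Granting the claim, I would finish as follows. Set $\de = \eta / 2$ and $R = \max\{ 1,\ 2 \max_{1 \le i \le m} |w_i| / \eta \}$, so $R, \de > 0$. Fix $\be$ with $|\be| \ge R$ and let $v = \be / |\be|$. By the claim there is a set $I \subseteq \{1,\dots,m\}$ with $|I| \ge m - p + 1$ and $|z_i^t v| \ge \eta$ for $i \in I$, so for each $i \in I$,
\[
1 + |w_i - z_i^t \be| \ge 1 + |\be|\,|z_i^t v| - |w_i| \ge 1 + \eta |\be| - \max_{j} |w_j| \ge 1 + \tfrac{\eta}{2} |\be| = 1 + \de |\be| ,
\]
where the last inequality uses $|\be| \ge R \ge 2 \max_j |w_j| / \eta$. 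Since every factor $1/(1 + |w_i - z_i^t \be|)$ is at most $1$, dropping the factors with indices outside $I$ yields
\[
\prod_{i=1}^{m} \frac{1}{1 + |w_i - z_i^t \be|} \le \prod_{i \in I} \frac{1}{1 + |w_i - z_i^t \be|} \le \frac{1}{(1 + \de |\be|)^{|I|}} \le \frac{1}{(1 + \de |\be|)^{m - p + 1}} ,
\]
the final inequality holding because $|I| \ge m - p + 1$ and $1 + \de |\be| \ge 1$. This is exactly the asserted bound.

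The hard part is the uniformity in the main step: pointwise, linear independence only tells us that at most $p - 1$ of the values $z_i^t v$ vanish for each fixed $v$, but the gap $\eta$ could a priori shrink to $0$ as $v$ ranges over the sphere. The compactness-plus-pigeonhole argument is precisely what rules out this degeneration; once $\eta$ is fixed, the rest is elementary.
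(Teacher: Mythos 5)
Your proof is correct, but it takes a genuinely different route from the paper. You first establish a uniform directional statement — there is an $\eta>0$ such that every unit vector $v$ satisfies $|z_i^t v|\ge\eta$ for at least $m-p+1$ indices — by contradiction, combining the pigeonhole principle over the finitely many $p$-element index sets with compactness of the sphere, and only afterwards absorb the fixed vector $w$ by taking $|\be|$ large; the constant $\eta$ (hence $\de$) is obtained non-constructively. The paper instead argues directly and explicitly: for each $p$-subset $i$ of indices it uses invertibility of $Z(i)$ to get $|\be|\le 2\sqrt{p}\,|Z(i)^{-1}|\max_k|z_{i_k}^t\be-w_{i_k}|$ once $|\be|\ge R=1+2\max_i|Z(i)^{-1}w(i)|$, which shows every $p$-subset contains at least one index with $|z_i^t\be-w_i|\ge\de|\be|$ for the explicit $\de=\min_i\{1/(2\sqrt{p}\,|Z(i)^{-1}|)\}$, and then a greedy selection extracts $m-p+1$ distinct such indices. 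Both arguments rest on the same structural fact — at most $p-1$ of the affine forms can be small simultaneously, so at least $m-p+1$ factors grow linearly in $|\be|$ — but the paper's version buys explicit constants $R$ and $\de$ with no limiting argument, while yours is conceptually lighter (no matrix-norm bookkeeping) at the price of an unspecified $\eta$; either is fully adequate for how the lemma is used.
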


\begin{proof}
In this proof, if $A$ is a matrix, we write $|A| = \sqrt{\tr ( A^t A)}$. 
Let $I = \{ ( i_k )_{k = 1}^{p} | 1 \le i_1 < \dots < i_p \le m \} $. 
For $i = ( i_k )_{k = 1}^{p} \in I$, let $w(i) = ( w_{i_1} , \dots , w_{i_p} )^t$ and $Z(i) = ( z_{i_1} , \dots , z_{i_p} )^t$. 
Let $R = 1 + 2 \max_{i \in I} | {Z(i)}^{- 1} w(i)| > 0$ and $\de = \min_{i \in I} \{ 1 / (2 \sqrt{p} | {Z(i)}^{- 1} | ) \} > 0$. 
Let $\be \in \mathbb{R} ^p$ be such that $| \be | \ge R$. 
Then for all $i = ( i_k )_{k = 1}^{p} \in I$, we have that $| \be - {Z(i)}^{- 1} w(i) | \ge | \be | - R / 2 \ge | \be | / 2$ and hence that $| \be | \le 2 | \be - {Z(i)}^{- 1} w(i) | \le 2 | {Z(i)}^{- 1} | | Z(i) \be - w(i) | \le 2 \sqrt{p} | {Z(i)}^{- 1} | \max_{1 \le k \le p} | z_{i_k}^{t} \be - w_{i_k} |$, which implies that there exists $k = 1, \dots , p$ such that $\de | \be | \le | z_{i_k}^{t} \be - w_{i_k} |$. 
Therefore, we can choose distinct indices $i^{(1)} , \dots , i^{(m - p + 1)} = 1, \dots , m$ so that for all $j = 1, \dots , m - p + 1$, we have $\de | \be | \le | z_{i^{(j)}}^{t} \be - w_{i^{(j)}} |$. 
Indeed, for $j = 2, \dots , m - p + 1$, given $i^{(1)} , \dots , i^{(j - 1)}$, we can choose $i_1 < \dots < i_p$ from $\{ 1, \dots , m \} \setminus \{ i^{(1)} , \dots , i^{(j - 1)} \} $ and then $k$ with $\de | \be | \le | z_{i_k}^{t} \be - w_{i_k} |$ from $\{ 1, \dots , p \} $ and set $i^{(j)} = i_k$. 
Thus, $\prod_{i = 1}^{m} (1 + | w_i - z_{i}^{t} \be |) \ge \prod_{j = 1}^{m - p + 1} (1 + | w_{i^{(j)}} - z_{i^{(j)}}^{t} \be |) \ge (1 + \de | \be | )^{m - p + 1}$. 
\end{proof}

\begin{lem}
\label{lem:ratio} 
Let $\al ( \cdot )$ and $\be ( \cdot )$ be continuous, positive, and integrable functions defined on $(0, \infty )$. 
Suppose that $\lim_{u \to \infty } \be (u) / \al (u) = \rho \in [0, \infty ]$. 
Then 
\begin{align}
\lim_{z \to \infty } \int_{0}^{\infty } {\rm{N}} (z | 0, u) \be (u) du \Big/ \int_{0}^{\infty } {\rm{N}} (z | 0, u) \al (u) du = \rho \text{.} \non 
\end{align}
\end{lem}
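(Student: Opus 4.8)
The plan is to reduce the statement to a single negligibility estimate for the mass of ${\rm N}(z | 0,u)$ carried by bounded values of $u$, and then to run a standard $\ep$-sandwich argument. Write $I_\al(z)=\int_0^\infty {\rm N}(z | 0,u)\al(u)\,du$ and $I_\be(z)=\int_0^\infty {\rm N}(z | 0,u)\be(u)\,du$; both are finite for $z\neq0$ since $\sup_{u>0}{\rm N}(z | 0,u)=e^{-1/2}/(\sqrt{2\pi}\,|z|)$ and $\al,\be$ are integrable. First I would treat the generic case $\rho\in(0,\infty)$. Given $\ep\in(0,\rho)$, the hypothesis $\be(u)/\al(u)\to\rho$ yields $M>0$ with $(\rho-\ep)\al(u)\le\be(u)\le(\rho+\ep)\al(u)$ for all $u\ge M$. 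Splitting $I_\be(z)=\int_0^M+\int_M^\infty$, applying these inequalities on the tail, and using both $\int_M^\infty {\rm N}(z | 0,u)\al(u)\,du\le I_\al(z)$ and $\int_M^\infty {\rm N}(z | 0,u)\al(u)\,du=I_\al(z)-\int_0^M {\rm N}(z | 0,u)\al(u)\,du$, I would obtain after division by $I_\al(z)$ the two-sided bound
\[
(\rho-\ep)\Bigl(1-\frac{\int_0^M {\rm N}(z | 0,u)\al(u)\,du}{I_\al(z)}\Bigr)\le\frac{I_\be(z)}{I_\al(z)}\le(\rho+\ep)+\frac{\int_0^M {\rm N}(z | 0,u)\be(u)\,du}{I_\al(z)}.
\]
So the whole thing reduces to showing that for every fixed $M>0$ and every $g$ continuous and integrable on $(0,M)$, $\int_0^M {\rm N}(z | 0,u)|g(u)|\,du\big/I_\al(z)\to0$ as $z\to\infty$; letting $\ep\downarrow0$ then settles this case.

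For that negligibility estimate I would bound numerator and denominator by explicit Gaussians in $z$. For $u\in(0,M)$ and $|z|\ge1$, factor ${\rm N}(z | 0,u)=\bigl(e^{-1/(2u)}/\sqrt{2\pi u}\bigr)e^{-(z^2-1)/(2u)}\le K e^{-(z^2-1)/(2M)}$, where $K=\sup_{u>0}\{e^{-1/(2u)}/\sqrt{2\pi u}\}<\infty$ and I used $1/u\ge1/M$ on $(0,M)$; integrating gives $\int_0^M {\rm N}(z | 0,u)|g(u)|\,du\le K\|g\|_{L^1(0,M)}e^{-(z^2-1)/(2M)}$. For a lower bound on the denominator, restrict the integral to $u\in[2M,3M]$, where $\al(u)\ge c_0:=\min_{[2M,3M]}\al>0$ by continuity and positivity on a compact set, and ${\rm N}(z | 0,u)\ge(6\pi M)^{-1/2}e^{-z^2/(4M)}$; hence $I_\al(z)\ge c_0\sqrt{M/(6\pi)}\,e^{-z^2/(4M)}$. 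The ratio is therefore $O\bigl(e^{-(z^2-1)/(2M)+z^2/(4M)}\bigr)=O\bigl(e^{-z^2/(4M)}\bigr)\to0$, which is exactly the claim, and it applies with $g=\al$ or $g=\be$ restricted to $(0,M)$, so both quotients in the displayed bound above vanish.

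The remaining cases are short. If $\rho=0$, then for each $\ep>0$ I pick $M$ with $\be(u)\le\ep\,\al(u)$ for $u\ge M$; the same split gives $0\le I_\be(z)/I_\al(z)\le\ep+\int_0^M {\rm N}(z | 0,u)\be(u)\,du/I_\al(z)\to\ep$, and $\ep\downarrow0$ forces the limit to be $0$. If $\rho=\infty$, then $\al(u)/\be(u)\to0$, and since the hypotheses on $\al$ and $\be$ are symmetric, the $\rho=0$ case gives $I_\al(z)/I_\be(z)\to0$, i.e., $I_\be(z)/I_\al(z)\to\infty$. I expect the only genuine obstacle to be the negligibility estimate: both $\int_0^M {\rm N}(z | 0,u)|g(u)|\,du$ and $I_\al(z)$ tend to $0$, so one must exhibit a real gap between their exponential decay rates (the rate for the bounded-$u$ part is governed by the right endpoint of the carrying interval, whereas mass kept beyond $M$ decays strictly more slowly), and one must neutralize the possible blow-up of $g$, equivalently of $\al$ and $\be$, near $u=0$ — which is achieved by peeling off the bounded factor $e^{-1/(2u)}/\sqrt{2\pi u}$. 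Everything else is routine estimation.
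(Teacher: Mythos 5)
Your proof is correct, and its overall skeleton (truncate the $u$-integral at a level $M$ chosen from the hypothesis $\be(u)/\al(u)\to\rho$, sandwich the ratio, and dispose of $\rho=\infty$ by swapping the roles of $\al$ and $\be$) matches the paper's. Where you genuinely diverge is in the key negligibility step. The paper first makes a without-loss-of-generality reduction: it shows, via a covariance-inequality argument, that $\ga(u)$ may be replaced by $e^{-\de/u}\ga(u)$ at negligible cost, so that $u^{-1/2}\ga(u)$ can be assumed integrable; it then bounds $\int_0^M {\rm N}(z|0,u)\ga(u)\,du\big/\int_{M+1}^\infty {\rm N}(z|0,u)\ga(u)\,du$ by $\{e^{1/(M+1)}/e^{1/M}\}^{z^2/2}$ times a constant, which tends to $0$, and finally controls the tail ratio uniformly in $z$ by $\sup_{u>M}|\be(u)/\al(u)-\rho|$ before letting $M\to\infty$. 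You bypass the preliminary reduction entirely: by peeling off the bounded factor $e^{-1/(2u)}/\sqrt{2\pi u}$ you neutralize any non-integrability of $u^{-1/2}\ga(u)$ near $0$ and get the explicit upper rate $e^{-(z^2-1)/(2M)}$ for the $(0,M)$ contribution, while anchoring the denominator from below by restricting to the compact window $[2M,3M]$, where continuity and strict positivity of $\al$ give a rate no worse than $e^{-z^2/(4M)}$; the gap between $1/(2M)$ and $1/(4M)$ then does the work. Your route is more elementary and self-contained (no covariance inequality, no change of the integrand), and it yields explicit exponential rates; the paper's route buys the intermediate asymptotic statement $\int_0^\infty {\rm N}(z|0,u)\ga(u)\,du \approx \int_M^\infty {\rm N}(z|0,u)\ga(u)\,du$ and a bound on the truncated ratio that is uniform in $z$, which makes the final passage $M\to\infty$ particularly clean. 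Both arguments are complete; your case analysis at $\rho=0$ and $\rho=\infty$ is also handled exactly as in the paper.
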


\begin{proof}
We can assume that $\rho < \infty $; if $\rho = \infty$, then we can exchange the definitions of $\alpha (\cdot )$ and $\beta (\cdot )$, and this reduces to the case of $\rho = 0$. 
Let $\ga (\cdot )$ be either $\al (\cdot )$ or $\be (\cdot )$. 
We can also assume without loss of generality that $u^{- 1 / 2} \al (u)$ and $u^{- 1 / 2} \be (u)$ are integrable. To see this, observe that, for any $\eta > 0$, there exist $\ep > 0$ satisfying 
\begin{align}
0 \le \frac{ \int_{0}^{\ep } {\rm{N}} (1 | 0, u) \ga (u) du }{ \int_{0}^{\infty } {\rm{N}} (1 | 0, u) \ga (u) du } < \eta / 2 \non 
\end{align}
and, for these $\eta$ and $\epsilon$, there also exists $\delta >0$ such that $0 \le 1 - e^{- \de / \ep } < \eta / 2$. Hence, for all $z \ge 1$, the covariance inequality implies
\begin{align}
\frac{ \int_{0}^{\ep } {\rm{N}} (z | 0, u) \ga (u) du }{ \int_{0}^{\infty } {\rm{N}} (z | 0, u) \ga (u) du } &= E[ \chi _{(0, \ep )} ( U_z ) ] \non \\
&\le {E[ \exp \{ ( z^2 - 1) / (2 U_z ) \} \chi _{(0, \ep )} ( U_z ) ] \over E[ \exp \{ ( z^2 - 1) / (2 U_z ) \} ]} \non \\
&= \frac{ \int_{0}^{\ep } {\rm{N}} (1 | 0, u) \ga (u) du }{ \int_{0}^{\infty } {\rm{N}} (1 | 0, u) \ga (u) du } \non 
\end{align}
where $\chi _{(0, \ep )} (x)$ is the indicator function ($\chi _{(0, \ep )} (x) = 1$ if $x\in (0, \ep )$ and $0$ otherwise) and the density of random variable $U_z$ is proportional to ${\rm{N}} (z | 0, u) \ga (u)$. Finally, we have 
\begin{align}
\Big| \frac{ \int_{0}^{\infty } {\rm{N}} (z | 0, u) \ga (u) e^{- \de / u} du }{ \int_{0}^{\infty } {\rm{N}} (z | 0, u) \ga (u) du } - 1 \Big| &\le \frac{ \int_{0}^{\ep } {\rm{N}} (z | 0, u) \ga (u) du }{ \int_{0}^{\infty } {\rm{N}} (z | 0, u) \ga (u) du } + \frac{ \int_{\ep }^{\infty } {\rm{N}} (z | 0, u) \ga (u) (1 - e^{- \de / u} ) du }{ \int_{\ep }^{\infty } {\rm{N}} (z | 0, u) \ga (u) du } \non \\
&\le \frac{ \int_{0}^{\ep } {\rm{N}} (1 | 0, u) \ga (u) du }{ \int_{0}^{\infty } {\rm{N}} (1 | 0, u) \ga (u) du } + 1 - e^{- \de / \ep } \non \\
&< \eta \text{,} \non 
\end{align}
which shows the difference of $\ga (u)$ and $e^{-\delta /u} \ga (u)$ is ignorable in $u\to \infty$. This result verifies that, if $u^{-1/2}\ga (u)$ is not integrable, then we can replace $\ga (u)$ by $e^{-\delta /u} \ga (u)$. 

Again, assume $\rho < \infty$ and both $u^{-1/2}\alpha (u)$ and $u^{-1/2}\beta (u)$ are integrable. Let $M > 0$. Then we have 
\begin{align}
\Big| \frac{ \int_{0}^{\infty } {\rm{N}} (z | 0, u) \ga (u) du }{ \int_{M}^{\infty } {\rm{N}} (z | 0, u) \ga (u) du } - 1 \Big| &\le \frac{ \int_{0}^{M} {\rm{N}} (z | 0, u) \ga (u) du }{ \int_{M + 1}^{\infty } {\rm{N}} (z | 0, u) \ga (u) du } \non \\
&\le \Big\{ {e^{1 / (M + 1)} \over e^{1 / M}} \Big\} ^{z^2 / 2} \frac{ \int_{0}^{M} u^{- 1 / 2} \ga (u) du }{ \int_{M + 1}^{\infty } u^{- 1 / 2} \ga (u) du } \non \\
&\to 0 \non 
\end{align}
as $z \to \infty $ since $u^{- 1 / 2} \ga (u)$ is assumed to be integrable on $(0, \infty )$. 
Therefore, 
\begin{align}
\frac{ \int_{0}^{\infty } {\rm{N}} (z | 0, u) \be (u) du }{ \int_{0}^{\infty } {\rm{N}} (z | 0, u) \al (u) du } \approx \frac{ \int_{M}^{\infty } {\rm{N}} (z | 0, u) \be (u) du }{ \int_{M}^{\infty } {\rm{N}} (z | 0, u) \al (u) du } \label{lratiop1} 
\end{align}
as $z \to \infty $. 
Furthermore, uniformly in $z$, 
\begin{align}
\Big| \frac{ \int_{M}^{\infty } {\rm{N}} (z | 0, u) \be (u) du }{ \int_{M}^{\infty } {\rm{N}} (z | 0, u) \al (u) du } - \rho \Big| &\le \frac{ \int_{M}^{\infty } | \be (u) / \al (u) - \rho | {\rm{N}} (z | 0, u) \al (u) du }{ \int_{M}^{\infty } {\rm{N}} (z | 0, u) \al (u) du } \non \\
&\le \sup_{u > M} \Big| {\be (u) \over \al (u)} - \rho \Big| \non \\
&\to 0 \label{lratiop2} 
\end{align}
as $M \to \infty $ by assumption. 
Combining (\ref{lratiop1}) and (\ref{lratiop2}) gives the desired result. 
\end{proof}

\section{Proof of Proposition 2.1}

Here we prove Proposition 2.1. 
We show that 
\begin{align}
\lim_{|x| \to \infty } {f_{\rm EH} (x) \over |x|^{-1}(\log|x|)^{-1-\gamma}} = A \non 
\end{align}
for some constant $A > 0$. 
Since 
\begin{align}
\lim_{|x| \to \infty } \frac{ {\rm{N}} (x | 0, 1) }{ \int_{0}^{\infty} {\rm{N}} (x | 0, u) H(u; \ga ) du } &= 0 \non 
\end{align}
by part (iii) of Lemma \ref{lem:f0f1f}, we can assume $s = 1$. 
Then we have for sufficiently large $|x|$ 
\begin{align}
{f_{\rm EH} (x) \over |x|^{-1} (\log|x|)^{-1-\gamma}} &= \int_{0}^{\infty} {{\rm{N}} (x | 0, u) H(u; \ga ) \over |x|^{-1}(\log|x|)^{-1-\gamma}} du \non \\
&= \int_{0}^{\infty} {1 \over \sqrt{2 \pi }} {1 \over \sqrt{u}} e^{- x^2 / (2 u)} {\ga |x| \over 1 + u} \Big\{ {\log |x| \over 1 + \log (1 + u)} \Big\} ^{1 + \ga } du \non \\
&= \int_{0}^{\infty} {1 \over \sqrt{2 \pi }} {1 \over \sqrt{v}} e^{- 1 / (2 v)} {\ga x^2 \over 1 + x^2 v} \Big\{ {\log |x| \over 1 + \log (1 + x^2 v)} \Big\} ^{1 + \ga } dv \text{,} \non 
\end{align}
where the last equality follows by making the change of variables $u = x^2 v$. 
Now, by part (a) of Lemma \ref{lem:IL}, the integrand is bounded by 
\begin{align}
&{1 \over \sqrt{2 \pi }} {1 \over \sqrt{v}} e^{- 1 / (2 v)} {\ga \over v} \Big\{ {\log |x| \over 1 + \log (1 + x^2 )} {1 + \log (1 + x^2 ) \over 1 + \log (1 + x^2 v) } \Big\} ^{1 + \ga } \non \\
&\le {\ga \over \sqrt{2 \pi }} {e^{- 1 / (2 v)} \over v^{3 / 2}} \Big( {1 \over 2} \max \{ 1, v^{- 1} \} \Big) ^{1 + \ga } = {\ga / 2^{1 + \ga } \over \sqrt{2 \pi }} {e^{- 1 / (2 v)} \over v^{3 / 2}} \max \{ 1, v^{- (1 + \ga )} \} \non \\
&\le {\ga / 2^{1 + \ga } \over \sqrt{2 \pi }} \{ v^{- 3 / 2} e^{- 1 / (2 v)} + v^{- 5 / 2 - \ga } e^{- 1 / (2 v)} \} \text{,} \non 
\end{align}
where the right-hand side is an integrable function of $v \in (0, \infty )$ which does not depend on $x$. 
By part (b) of Lemma \ref{lem:IL}, the integrand converges to 
\begin{align}
{1 \over \sqrt{2 \pi }} {1 \over \sqrt{v}} e^{- 1 / (2 v)} {\ga \over v} \Big\{ \lim_{|x| \to \infty } {\log |x| \over 1 + \log (1 + x^2 )} {1 + \log (1 + x^2 ) \over 1 + \log (1 + x^2 v) } \Big\} ^{1 + \ga } = {\ga / 2^{1 + \ga } \over \sqrt{2 \pi }} v^{- 3 / 2} e^{- 1 / (2 v)} \non 
\end{align}
as $|x| \to \infty $ for each $v \in (0, \infty )$. 
Thus, by the dominated convergence theorem, we obtain 
\begin{align}
\lim_{|x| \to \infty } {f_{\rm EH} (x) \over |x|^{-1} (\log|x|)^{-1-\gamma}}
&= \int_{0}^{\infty} {\ga / 2^{1 + \ga } \over \sqrt{2 \pi }} v^{- 3 / 2} e^{- 1 / (2 v)} dv = {\ga \over 2^{1 + \ga }} \text{.} \non 
\end{align}
This complete the proof.

\section{Proof of Theorem 2.1}
\label{sec:proof_thm_normalized}

In this section, we prove Theorem 2.1. 
For $z \in \mathbb{R}$, we let 
\begin{align}
&f_0 (z) = {\rm{N}} (z | 0, 1) \text{,} \non \\
&f_1 (z) = \int_{0}^{\infty } {\rm{N}} (z | 0, u) H(u; \ga ) du \text{,} \quad \text{and} \non \\
&f(z) = (1 - s) f_0 (z) + s f_1 (z) \text{.} \non 
\end{align}

\begin{proof}[Proof of Theorem 2.1]
By Lemma \ref{lem:asymptotic} and 
part (iii) of Lemma \ref{lem:f0f1f}, we have for any $( \be , \si ) \in \mathbb{R} ^p \times (0, \infty )$ and any $i \in \mathcal{L}$, 
\begin{align}
{f( \ep _i ) / \si \over f( y_i )} &= {f_1 ( \ep _i ) / \si \over f_1 ( y_i )} {(1 - s) f_0 ( \ep _i ) / f_1 ( \ep _i ) + s \over (1 - s) f_0 ( y_i ) / f_1 ( y_i ) + s} \to 1 \non 
\end{align}
as $\om \to \infty $, where we write $\ep _i = ( y_i - x_{i}^{t} \be ) / \si $. 
Therefore, for any $( \be , \si ) \in \mathbb{R} ^p \times (0, \infty )$, 
\begin{align}
{p( \be , \si | \mathcal{D} ) \over p( \be , \si | \mathcal{D} ^{*} )} &= {p( \mathcal{D} ^{*} ) \prod_{i \in \mathcal{L}} f( y_i ) \over p( \mathcal{D} )} \prod_{i \in \mathcal{L}} {f( \ep _i ) / \si \over f( y_i )} \sim p( \mathcal{D} ^{*} ) / {p( \mathcal{D} ) \over \prod_{i \in \mathcal{L}} f( y_i )} \non 
\end{align}
as $\om \to \infty $. 
Now 
\begin{align}
{p( \mathcal{D} ) \over \prod_{i \in \mathcal{L}} f( y_i )} &= \int_{\mathbb{R} ^p \times (0, \infty )} \pi _{\si } ( \si ) \Big\{ \prod_{k = 1}^{p} {1 \over \si } \pi \Big( {\be _k \over \si } \Big) \Big\} \Big\{ \prod_{i \in \mathcal{K}} {f( \ep _i ) \over \si } \Big\} \Big\{ \prod_{i \in \mathcal{L}} {f( \ep _i ) / \si \over f( y_i )} \Big\} d( \be , \si ) \text{.} \non 
\end{align}
Then, by Lemma \ref{lem:DCT} below and by the dominated convergence theorem, 
\begin{align}
\lim_{\om \to \infty } {p( \mathcal{D} ) \over \prod_{i \in \mathcal{L}} f( y_i )} &= \int_{\mathbb{R} ^p \times (0, \infty )} \pi _{\si } ( \si ) \Big\{ \prod_{k = 1}^{p} {1 \over \si } \pi \Big( {\be _k \over \si } \Big) \Big\} \Big\{ \prod_{i \in \mathcal{K}} {f( \ep _i ) \over \si } \Big\} d( \be , \si ) = p( \mathcal{D} ^{*} ) \non 
\end{align}
and the result follows. 
\end{proof}

\begin{lem}
\label{lem:DCT} 
Under the assumptions of Theorem 2.1, there exists an integrable function $\overline{h} ( \be , \si )$ of $( \be , \si )$ which does not depend on $\om $ such that 
\begin{align}
\pi _{\si } ( \si ) \Big\{ \prod_{k = 1}^{p} {1 \over \si } \pi \Big( {\be _k \over \si } \Big) \Big\} \Big\{ \prod_{i \in \mathcal{K}} {f( \ep _i ) \over \si } \Big\} \prod_{i \in \mathcal{L}} {f( \ep _i ) / \si \over f( y_i )} \le \overline{h} ( \be , \si ) \non 
\end{align}
for all $( \be , \si ) \in \mathbb{R} ^p \times (0, \infty )$ for sufficiently large $\om $, where $\ep _i = ( y_i - x_{i}^{t} \be ) / \si $ for $i = 1, \dots , n$. 
\end{lem}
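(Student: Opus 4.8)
The plan is to exhibit $\overline{h}$ as a product of three $\om$-free upper bounds, one for each block in the displayed integrand --- the prior $\pi_{\si}(\si)\prod_{k}\si^{-1}\pi_{\be}(\be_k/\si)$, the non-outlier likelihood $\prod_{i\in\mathcal{K}}f(\ep_i)/\si$, and the outlier ratios $\prod_{i\in\mathcal{L}}\{f(\ep_i)/\si\}/f(y_i)$ --- and then to verify that their product is integrable on $\mathbb{R}^p\times(0,\infty)$, handling the regions $\{|\be|\ge R\}$ and $\{|\be|<R\}$ separately and invoking (A.1) on the former and (A.2)--(A.3) on the latter.

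The core of the argument, and the only place where $\om$ being sufficiently large is used (so that $|y_i|\ge1$ for every $i\in\mathcal{L}$), is the outlier block. First, part~(iii) of Lemma~\ref{lem:f0f1f} gives $K<\infty$ with $f_0\le Kf_1$, so $sf_1(z)\le f(z)\le(K+1)f_1(z)$, reducing $\{f(\ep_i)/\si\}/f(y_i)$ to a bounded multiple of $\{f_1(\ep_i)/\si\}/f_1(y_i)$. For each $i\in\mathcal{L}$ I would then combine the sandwich bounds in parts~(i),(iv),(v) of Lemma~\ref{lem:f0f1f} with elementary estimates (and, where a ratio of logarithmic factors must be controlled, part~(a) of Lemma~\ref{lem:IL}), splitting on the size of $|y_i-x_i^t\be|$ relative to $|y_i|$. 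When $|y_i-x_i^t\be|\ge|y_i|/2$ one obtains $\{f_1(\ep_i)/\si\}/f_1(y_i)\lesssim\{(1+\log(1+y_i^2))/(1+\log(1+\ep_i^2))\}^{1+\ga}$ with $\ep_i=(y_i-x_i^t\be)/\si$, and since then $\ep_i^2\ge y_i^2/(4\si^2)$ a direct comparison of logarithms (using $\log(1+y_i^2)\le\log(1+4\si^2)$ when $|y_i|\le2\si$, and boundedness of the ratio otherwise) makes the $\om$-driven growth of $1/f_1(y_i)$ cancel, leaving an $\om$-free function of $\si$ that grows at most like a power of $\log\si$. When $|y_i-x_i^t\be|<|y_i|/2$ one has $|x_i^t\be|>|y_i|/2$, hence $|y_i|\le2|x_i^t\be|\le2|x_i|\,|\be|$; using instead the crude bound $f_1(\ep_i)\le f_1(0)$ of part~(i), the $\om$-growth is again absorbed, now into a factor of the form $c_i|\be|\{1+\log(1+c_i|\be|^2)\}^{1+\ga}/\si$. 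Thus $\prod_{i\in\mathcal{L}}\{f(\ep_i)/\si\}/f(y_i)\le\prod_{i\in\mathcal{L}}g_i(\be,\si)$ for all large $\om$, with each $g_i$ an $\om$-free function growing at most poly-logarithmically in $\si$ away from the hyperplane $x_i^t\be=y_i$, at most linearly (up to logarithmic factors) in $|\be|$ near it, and with only an integrable singularity as $\si\to0$.

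For the non-outlier block, parts~(i),(iv),(vi) of Lemma~\ref{lem:f0f1f} together with the Gaussian tail of $f_0$ give $f(z)\le C/[(1+|z|)\{1+\log(1+z^2)\}^{1+\ga}]$, so $f(\ep_i)/\si\le C/(\si+|a_i-x_i^t\be|)$ for $i\in\mathcal{K}$. Since $|\mathcal{K}|\ge p$ by (A.1), Lemma~\ref{lem:linear} applied to the linear forms $a_i-x_i^t\be$, $i\in\mathcal{K}$, yields $R,\de>0$ with $\prod_{i\in\mathcal{K}}f(\ep_i)/\si\lesssim\min(\si,1)^{-|\mathcal{K}|}(1+\de|\be|)^{-(|\mathcal{K}|-p+1)}$ for $|\be|\ge R$. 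Now (A.1) also gives $|\mathcal{K}|-p+1\ge|\mathcal{L}|+1$, so this negative power of $|\be|$ strictly beats the $|\be|^{|\mathcal{L}|}$-growth of $\prod_{i\in\mathcal{L}}g_i$; since $\int_{\mathbb{R}^p}\prod_{k}\si^{-1}\pi_{\be}(\be_k/\si)\,d\be=1$ and $(\log r)^m/r$ is bounded for large $r$, the integral over $\{|\be|\ge R\}$ is finite, the $\si$-dependence being absorbed by finitely many moments of $\pi_{\si}$ controlled via (A.3) (after extracting a $\si^{c-1}$ from the prior through (A.2) where some $|\be_k|$ is large). On the bounded set $\{|\be|<R\}$, only finitely many of the hyperplanes $a_i-x_i^t\be=0$ are met and, for $\om$ large, none of the $\mathcal{L}$-hyperplanes $x_i^t\be=y_i$; integrating out $\be$ first converts the $\prod_{i\in\mathcal{K}}(\si+|a_i-x_i^t\be|)^{-1}$ factor into a power of $\log(1/\si)$, and the remaining $\si$-integral is finite by the conditions on $\si^{-|\mathcal{K}|}$, $\si^{c-1}$, $\si^{c-n}$ in (A.3), together with (A.2) for the decay of $\pi_{\be}$. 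Assembling these bounds gives the required $\om$-free integrable $\overline{h}$.

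I expect the main obstacle to be exactly this uniformity in $\om$: $1/f(y_i)$ diverges like $|y_i|\{1+\log(1+y_i^2)\}^{1+\ga}$, so this growth must be absorbed by $f(\ep_i)$, and a careless use of Lemma~\ref{lem:IL} turns it into a spurious power of $\si$. Away from the receding hyperplane $\{x_i^t\be=y_i\}$ one needs the sharper logarithmic comparison above to keep the $\si$-dependence mild; near that hyperplane one has to notice that $|y_i|\asymp|x_i^t\be|\le|x_i|\,|\be|$, trading the forbidden $\om$-dependence for $|\be|$-growth, which the non-outlier block absorbs --- and it can do so only because (A.1) provides at least $p$ more non-outliers than outliers. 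The secondary difficulty is to keep the resulting $\si$-integral within the reach of (A.3); this is helped by observing that the $\si^{-|\mathcal{K}|}$ singularity of the $\mathcal{K}$-block (supported near a bounded $\be$ fitting all non-outliers) and the $\si\to0$ singularities of the $\mathcal{L}$-block (supported along the large-$\be$ receding hyperplanes) are never active at the same point of the parameter space, so their contributions do not compound.
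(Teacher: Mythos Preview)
Your overall architecture is sound and genuinely different from the paper's: you split at a \emph{fixed} radius $R$ coming from Lemma~\ref{lem:linear}, whereas the paper splits at the $\om$-\emph{dependent} box $[-\ep\om,\ep\om]^p$. Your two-case treatment of each outlier ratio (cases (a) and (b)) is correct, and the observation that in case~(b) one may trade the forbidden $\om$-dependence for $|\be|$-growth via $|y_i|\le 2|x_i|\,|\be|$ is exactly the right idea. The difficulty is where you yourself locate it: keeping the $\si$-integral within the reach of (A.3).

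The concrete gap is in your non-outlier bound on $\{|\be|\ge R\}$. You write $\prod_{i\in\mathcal{K}}f(\ep_i)/\si\lesssim\min(\si,1)^{-|\mathcal{K}|}(1+\de|\be|)^{-(|\mathcal{K}|-p+1)}$, obtained from $1/(\si+t)\le\max(1,\si^{-1})/(1+t)$ and then Lemma~\ref{lem:linear}. Combined with the worst case of the outlier block (all factors in case~(b), each contributing $|\be|/\si$), this gives a small-$\si$ singularity of order $\si^{-|\mathcal{K}|}\cdot\si^{-|\mathcal{L}|}=\si^{-n}$, and (A.3) only supplies $E[\si^{c-n}]<\infty$. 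Your parenthetical suggestion to ``extract a $\si^{c-1}$ via (A.2)'' does not close this: after replacing $\si^{-1}\pi_\be(\be_{k_0}/\si)$ by $C\si^{c-1}|\be_{k_0}|^{-c}$ you must still integrate $\be_{k_0}$, and if you do so using the leftover $(1+\de|\be|)^{-1}$ decay you arrive at $\si^{c-1-n}$, which is \emph{worse}. Your final remark that the $\si^{-|\mathcal{K}|}$ and $\si^{-|\mathcal{L}|}$ singularities ``are never active at the same point'' is exactly the right intuition, but it has to be made quantitative: on $\{|\be|\ge R\}$, the proof of Lemma~\ref{lem:linear} actually yields $|\mathcal{K}|-p+1$ indices $i\in\mathcal{K}$ with $|a_i-x_i^t\be|\ge\de|\be|$, so those factors contribute $(\de|\be|)^{-1}$ rather than $\si^{-1}$. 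This sharper bound $\prod_{i\in\mathcal{K}}1/(\si+|a_i-x_i^t\be|)\le\si^{-(p-1)}(\de|\be|)^{-(|\mathcal{K}|-p+1)}$ reduces the small-$\si$ singularity to $\si^{-(p-1+|\mathcal{L}|)}$, and (A.1) gives $p-1+|\mathcal{L}|\le|\mathcal{K}|-1<|\mathcal{K}|$, which is covered by (A.3). With this refinement your route goes through and the (A.2) extraction becomes unnecessary on $\{|\be|\ge R\}$.

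For comparison, the paper avoids this bookkeeping altogether by a different device. Splitting at $\ep\om$ guarantees that on the inner region every outlier is automatically in your case~(a), so the outlier block costs only a power of $1+\log(1+4\si^2)$ there and the non-outlier block can be bounded crudely by $(f(0)/\si)^{|\mathcal{K}|}$. On the outer region $\{|\be_{k_0}|>\ep\om\}$ the paper performs a ``swap'': one non-outlier likelihood factor $(|x_{i_0,k_0}|/\si)f(\ep_{i_0})$ is used to integrate out $\be_{k_0}$ (this is the role of $h_{2,k_0,1}$), freeing the prior factor $\si^{-1}\pi_\be(\be_{k_0}/\si)$ to be bounded by (A.2) as $\si^{c-1}/(\ep\om)^c$. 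The $\om^{-c}$ kills the log-growth of $\prod_{i\in\mathcal{L}}\{1+\log(1+y_i^2)\}^{1+\ga}$, and only $n-1$ likelihood factors remain to contribute $\si^{-1}$ each, landing exactly on $\si^{c-n}$. This swap is the step your proposal does not have, and it is what makes the paper's cruder pointwise bounds sufficient.
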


\begin{proof}
Let $\ep > 0$ be such that 
\begin{align}
\ep < {| b_i | \over 4 ( | x_{i, 1} | + \dots + | x_{i, p} | )} \non 
\end{align}
for all $i \in \mathcal{L}$. 
For $( \be , \si ) \in \mathbb{R} ^p \times (0, \infty )$ and $\om $, let 
\begin{align}
h_1 ( \be , \si ; \om ) &= \pi _{\si } ( \si ) \Big\{ \prod_{k = 1}^{p} {1 \over \si } \pi \Big( {\be _k \over \si } \Big) \Big\} \Big\{ \prod_{i \in \mathcal{K}} {f( \ep _i ) \over \si } \Big\} \Big\{ \prod_{i \in \mathcal{L}} {f( \ep _i ) / \si \over f( y_i )} \Big\} \chi _{[ - \ep \om , \ep \om ]^p} ( \be ) \text{,} \non 
\end{align}
where $\chi _{[ - \ep \om , \ep \om ]^p} ( \be ) = 1$ if $\be \in [ - \ep \om , \ep \om ]^p$ and $= 0$ otherwise, and similarly let 
\begin{align}
h_{2, k_0} ( \be , \si ; \om ) &= \pi _{\si } ( \si ) \Big\{ \prod_{k = 1}^{p} {1 \over \si } \pi \Big( {\be _k \over \si } \Big) \Big\} \Big\{ \prod_{i \in \mathcal{K}} {f( \ep _i ) \over \si } \Big\} \Big\{ \prod_{i \in \mathcal{L}} {f( \ep _i ) / \si \over f( y_i )} \Big\} \chi _{\mathbb{R} \setminus [ - \ep \om , \ep \om ]} ( \be _{k_0} ) \non 
\end{align}
for $k_0 = 1, \dots , p$. 
Then 
\begin{align}
\pi _{\si } ( \si ) \Big\{ \prod_{k = 1}^{p} {1 \over \si } \pi \Big( {\be _k \over \si } \Big) \Big\} \Big\{ \prod_{i \in \mathcal{K}} {f( \ep _i ) \over \si } \Big\} \prod_{i \in \mathcal{L}} {f( \ep _i ) / \si \over f( y_i )} \le h_1 ( \be , \si ; \om ) + \sum_{k_0 = 1}^{p} h_{2, k_0} ( \be , \si ; \om ) \label{lDCTp1} 
\end{align}
for all $( \be , \si ) \in \mathbb{R} ^p \times (0, \infty )$ and all $\om $.

First, we consider the first term in (\ref{lDCTp1}). 
For all $\be \in \mathbb{R} ^p$ and $\om $ satisfying $\be \in [ - \ep \om , \ep \om ]^p$ and all $i \in \mathcal{L}$, we have 
\begin{align}
\si | \ep _i | = | y_i - x_{i}^{t} \be | &\ge {| y_i | \over 2} + {| b_i | \om - | a_i | \over 2} - \sum_{k = 1}^{p} | x_{i, k} | | \be _k | \ge {| y_i | \over 2} + {| b_i | \om \over 4} - \sum_{k = 1}^{p} | x_{i, k} | \ep \om \ge {| y_i | \over 2} \text{.} \non 
\end{align}
Therefore, by parts (i) and (ii) of Lemma \ref{lem:f0f1f}, 
\begin{align}
h_1 ( \be , \si ; \om ) &\le \pi _{\si } ( \si ) \Big\{ \prod_{k = 1}^{p} {1 \over \si } \pi \Big( {\be _k \over \si } \Big) \Big\} \Big\{ {f(0) \over \si } \Big\} ^{| \mathcal{K} |} \Big\{ \prod_{i \in \mathcal{L}} {f(| y_i | / (2 \si )) / \si \over f(| y_i |)} \Big\} \chi _{[ - \ep \om , \ep \om ]^p} ( \be ) \non 
\end{align}
for all $( \be , \si ) \in \mathbb{R} ^p \times (0, \infty )$ and $\om $. 
Furthermore, by parts (iii), (iv), and (v) of Lemma \ref{lem:f0f1f}, 
\begin{align}
\prod_{i \in \mathcal{L}} {f(| y_i | / (2 \si )) / \si \over f(| y_i |)} &\le \prod_{i \in \mathcal{L}} {f_1 (| y_i | / (2 \si )) [(1 - s) \sup_{z \in \mathbb{R}} \{ f_0 (z) / f_1 (z) \} + s] / \si \over s f_1 (| y_i |)} \non \\
&\le \prod_{i \in \mathcal{L}} \Big( {2 \over s} {C_1 \over C_2} \Big[ {1 + \log (1 + | y_i |^2 ) \over 1 + \log \{ 1 + | y_i |^2 / (2 \si )^2 \} } \Big] ^{1 + \ga } \Big\{ (1 - s) \sup_{z \in \mathbb{R}} {f_0 (z) \over f_1 (z)} + s \Big\} \Big) \non \\
&\le M_1 (1 + \si ) < \infty \non 
\end{align}
for all $\si \in (0, \infty )$ and $\om $ for some $C_1 , C_2 , M_1 > 0$, where the last inequality follows 
since 
\begin{align}
{1 + \log (1 + r) \over 1 + \log \{ 1 + r / (2 \si )^2 \} } &\le {1 + \log \{ 1 + r / (2 \si )^2 \} + \log \{ 1 + (2 \si )^2 \} \over 1 + \log \{ 1 + r / (2 \si )^2 \} } \le 1 + \log \{ 1 + (2 \si )^2 \} \non 
\end{align}
for all $r \ge 0$. 
Thus, 
\begin{align}
h_1 ( \be , \si ; \om ) &\le \pi _{\si } ( \si ) \Big\{ \prod_{k = 1}^{p} {1 \over \si } \pi \Big( {\be _k \over \si } \Big) \Big\} \Big\{ {f(0) \over \si } \Big\} ^{| \mathcal{K} |} M_1 (1 + \si ) \non 
\end{align}
for all $( \be , \si ) \in \mathbb{R} ^p \times (0, \infty )$ and $\om $, which is an integrable function of $( \be , \si )$ since by assumption the prior mean of $\si ^{- | \mathcal{K} |}$ is finite and $| \mathcal{K} | \ge p \ge 1$. 

Next, we consider the second term in (\ref{lDCTp1}). 
Fix $k_0 = 1, \dots , p$. 
Let $i_0 = \min \mathcal{K}$. 
Let, for $( \be , \si ) \in \mathbb{R} ^p \times (0, \infty )$ and $\om $, 
\begin{align}
h_{2, k_0 , 1} ( \be , \si ) &= \pi _{\si } ( \si ) \Big\{ \prod_{k \in \{ 1, \dots , p \} \setminus \{ k_0 \} } {1 \over \si } \pi \Big( {\be _k \over \si } \Big) \Big\} {| x_{i_0 , k_0} | \over \si } f( \ep _{i_0} ) \non 
\end{align}
and 
\begin{align}
h_{2, k_0 , 2} ( \be , \si ; \om ) &= {1 \over \si } \pi \Big( {\be _{k_0} \over \si } \Big) {1 \over | x_{i_0 , k_0} |} \Big\{ \prod_{i \in \mathcal{K} \setminus \{ i_0 \} } {f( \ep _i ) \over \si } \Big\} \Big\{ \prod_{i \in \mathcal{L}} {f( \ep _i ) / \si \over f( y_i )} \Big\} \chi _{\mathbb{R} \setminus [ - \ep \om , \ep \om ]} ( \be _{k_0} ) \text{.} \non 
\end{align}
Then for all $( \be , \si ) \in \mathbb{R} ^p \times (0, \infty )$ and $\om $, 
\begin{align}
h_{2, k_0} ( \be , \si ; \om ) &= h_{2, k_0 , 1} ( \be , \si ) h_{2, k_0 , 2} ( \be , \si ; \om ) \text{.} \label{lDCTp3} 
\end{align}
We have that 
\begin{align}
\int_{\mathbb{R} ^p} h_{2, k_0 , 1} ( \be , \si ) d\be = \int_{\mathbb{R} ^{p - 1}} \Big\{ \int_{0}^{\infty } h_{2, k_0 , 1} ( \be , \si ) d{\be _{k_0}} \Big\} d( \be \setminus \be _{k_0} ) = \pi _{\si } ( \si ) \label{lDCTp4} 
\end{align}
for all $\si \in (0, \infty )$. 
On the other hand, by assumption (A.2) and by parts (iii) and (v) of Lemma \ref{lem:f0f1f}, 
\begin{align}
h_{2, k_0 , 2} ( \be , \si ; \om ) &\le {\sup_{\th \in \mathbb{R}} \{ | \th |^c \pi ( \th ) \} \over \si ^{1 - c} | \be _{k_0} |^c | x_{i_0 , k_0} |} \Big\{ \prod_{i \in \mathcal{K} \setminus \{ i_0 \} } {f_1 ( \ep _i ) \over \si } \Big\} \Big\{ \prod_{i \in \mathcal{L}} {f_1 ( \ep _i ) / \si \over s f_1 ( y_i )} \Big\} \non \\
&\quad \times \Big\{ (1 - s) \sup_{z \in \mathbb{R}} {f_0 (z) \over f_1 (z)} + s \Big\} ^{n - 1} \chi _{\mathbb{R} \setminus [ - \ep \om , \ep \om ]} ( \be _{k_0} ) \non \\
&\le {\sup_{\th \in \mathbb{R}} \{ | \th |^c \pi ( \th ) \} \over \si ^{1 - c} \ep ^c \om ^c | x_{i_0 , k_0} |} \Big\{ (1 - s) \sup_{z \in \mathbb{R}} {f_0 (z) \over f_1 (z)} + s \Big\} ^{n - 1} \Big\{ \prod_{i \in \mathcal{K} \setminus \{ i_0 \} } {f_1 ( \ep _i ) \over \si } \Big\} \non \\
&\quad \times \Big[ \prod_{i \in \mathcal{L}} {| y_i | \{ 1 + \log (1 + | y_i |^2 ) \} ^{1 + \ga } \over {C_2}' s} \Big] \prod_{i \in \mathcal{L}} {f_1 ( \ep _i ) \over \si } \non \\
&\le {\sup_{\th \in \mathbb{R}} \{ | \th |^c \pi ( \th ) \} \over \si ^{1 - c} \ep ^c | x_{i_0 , k_0} |} \Big\{ (1 - s) \sup_{z \in \mathbb{R}} {f_0 (z) \over f_1 (z)} + s \Big\} ^{n - 1} \non \\
&\quad \times \Big[ \sup_{\substack{\text{$\om $: sufficiently large} \\ \text{($\om \ge 1$, for example)}}} {\prod_{i \in \mathcal{L}} \{ 1 + \log (1 + | a_i + b_i \om |^2 ) \} ^{1 + \ga } \over ( {C_2}' s)^{| \mathcal{L} |} \om ^c} \Big] \non \\
&\quad \times \Big\{ \prod_{i \in \mathcal{K} \setminus \{ i_0 \} } {f_1 ( \ep _i ) \over \si } \Big\} \prod_{i \in \mathcal{L}} {| y_i | f_1 ( \ep _i ) \over \si } \non \\
&\le {M_2 \over \si ^{1 - c}} \Big( \prod_{i \in \mathcal{K} \setminus \{ i_0 \} } {1 \over \si + | y_i - x_{i}^{t} \be |} \Big) \prod_{i \in \mathcal{L}} {\si + | x_{i}^{t} \be | \over \si } \non \\
&\le {M_2 \over \si ^{1 - c}} \Big( \max \Big\{ 1, {1 \over \si ^{n - 1}} \Big\} \Big) \Big( \prod_{i \in \mathcal{K} \setminus \{ i_0 \} } {1 \over 1 + | y_i - x_{i}^{t} \be |} \Big) \prod_{i \in \mathcal{L}} (1 + | x_{i}^{t} \be |) \non 
\end{align}
for all $( \be , \si ) \in \mathbb{R} ^p \times (0, \infty )$ and $\om $ for some ${C_2}' , M_2 > 0$, where the fourth inequality follows from parts (vi) and (vii) of Lemma \ref{lem:f0f1f}. 
Thus, by Lemma \ref{lem:linear} (applied to $\prod_{i \in \mathcal{K} \setminus \{ i_0 \} } \{ 1 / (1 + | y_i - x_{i}^{t} \be |) \} $) and by assumption (A.1), 
\begin{align}
&h_{2, k_0 , 2} ( \be , \si ; \om ) / \Big\{ {M_2 \over \si ^{1 - c}} \Big( \max \Big\{ 1, {1 \over \si ^{n - 1}} \Big\} \Big)\Big\} \non \\
&\le \sup_{\be \in \{ \bet \in \mathbb{R} ^p | | \bet | \le R \} } {\prod_{i \in \mathcal{L}} (1 + | x_{i}^{t} \be |) \over \prod_{i \in \mathcal{K} \setminus \{ i_0 \} } (1 + | y_i - x_{i}^{t} \be |)} + {\{ 1 + ( \max_{1 \le i \le n} | x_i | ) | \be | \} ^{| \mathcal{L} |} \over (1 + \de | \be | )^{| \mathcal{K} | - p}} \non \\
&\le \sup_{\be \in \{ \bet \in \mathbb{R} ^p | | \bet | \le R \} } {\prod_{i \in \mathcal{L}} (1 + | x_{i}^{t} \be |) \over \prod_{i \in \mathcal{K} \setminus \{ i_0 \} } (1 + | y_i - x_{i}^{t} \be |)} + \Big( 1 + {\max_{1 \le i \le n} | x_i | \over \de } \Big) ^{| \mathcal{L} |} \label{lDCTp5} 
\end{align}
for all $( \be , \si ) \in \mathbb{R} ^p \times (0, \infty )$ and $\om $ for some $R > 0$ and $\de > 0$. 
Hence, combining (\ref{lDCTp3}) and (\ref{lDCTp5}), we obtain 
\begin{align}
h_{2, k_0} ( \be , \si ; \om ) &\le h_{2, k_0 , 1} ( \be , \si ) {M_2 \over \si ^{1 - c}} \Big( \max \Big\{ 1, {1 \over \si ^{n - 1}} \Big\} \Big) \non \\
&\quad \times \Big\{ \sup_{\be \in \{ \bet \in \mathbb{R} ^p | | \bet | \le R \} } {\prod_{i \in \mathcal{L}} (1 + | x_{i}^{t} \be |) \over \prod_{i \in \mathcal{K} \setminus \{ i_0 \} } (1 + | y_i - x_{i}^{t} \be |)} + \Big( 1 + {\max_{1 \le i \le n} | x_i | \over \de } \Big) ^{| \mathcal{L} |} \Big\} \non 
\end{align}
for all $( \be , \si ) \in \mathbb{R} ^p \times (0, \infty )$ and $\om $, which is an integrable function of $( \be , \si )$ by (\ref{lDCTp4}) and assumption (A.3). 
\end{proof}

\section{Tail heaviness and posterior robustness}
We here consider robustness properties for the wider class of error distributions defined by replacing $H(u;\gamma)$ in Section~2.4 of the main text with $H(u; \ga , \de )$ given in the finite mixture (2.2). The density of $H(u;\ga,\delta)$, which is given in (2.4) of the main text, is shown below; 
\begin{equation}\label{EHE2}
H(u;\gamma, \delta)= C(\delta, \ga ) {1\over (1 + u)^{1 + \delta}} {1 \over \{ 1 + \log (1 + u) \} ^{1 + \ga }} \text{,} \quad u > 0 \text{.} \
\end{equation}
Note that the distribution in (\ref{EHE2}) reduces to $H(u;\gamma)$ used in the proposed distribution under $\delta=0$. The parameter $\delta$ is related to the decay of the density tail of (\ref{EHE2}), that is, $H(u;\gamma, b)\approx u^{-\delta-1}(\log u)^{-1-\gamma}$.
Hence, the tail gets heavier as $\delta$ decreases, and the EHE distribution, in fact, has the heaviest tail in this class of distributions. 
We show later in Theorem~\ref{thm:robust1} that, among the general class given by (\ref{EHE2}), only the proposed error distribution that is realized by setting $\delta = 0$ could attain the exact robustness property.

To discuss the posterior robustness, we target the {\it unnormalized} posterior distribution of $(\beta,\sigma)$ given by 
\begin{align}
\pit _{\de } ( \be , \si | \mathcal{D} ) &= \pi( \be , \si ) \prod_{i = 1}^{n} \Big\{ {1 \over \si } f \Big( {y_i - x_i^t \be \over \si } \Big) \Big\} \text{,} \label{post} 
\end{align}
where $\pi ( \be , \si )$ is a prior density and where for $z \in \mathbb{R}$, $f (z) = (1 - s) f_0 (z) + s f_1 (z)$ and $f_0 (z) = {\rm{N}} (z | 0, 1)$ as in Section \ref{sec:proof_thm_normalized}, but now 
\begin{align}
&f_1 (z) = \int_{0}^{\infty } {\rm{N}} (z | 0, u) H(u; \ga , \de ) du \text{.} \non 
\end{align}
If $s=0$, the heavily-tailed component disappears and the model is obviously sensitive to outliers, hence suppose $s>0$ in the following. 
Next, we assume that each outlier goes to infinity at its own specific rate. 
More precisely, the observed values of responses are parametrized by $\omega$ as $y_i=y_i(\omega)$, and $|y_i(\om)| \to \infty$ as $\omega \to \infty$ for $i \in \mathcal{L}$ while $y_i(\om)$ is constant for $i \in \mathcal{K} = \{ 1, \dots , n \} \setminus \mathcal{L}$. 
The posterior robustness considered here is defined as the property that the unnormalized posterior conditional on $\mathcal{D}$ approaches that based on $\mathcal{D} ^{*}$ as $\om\to\infty$.

\begin{thm}
\label{thm:robust1} 
For any compact set $K \subset \mathbb{R} ^p \times (0, \infty )$, we have 
\begin{equation}\label{conv}
{\pit _{\de } ( \be , \si | \mathcal{D} ) \over \pit _{\de } ( \be , \si | \mathcal{D} ^{*} )} / \prod_{i \in \mathcal{L}} f( y_i ) \to \si ^{2 | \mathcal{L} | \de } \non 
\end{equation}
uniformly in $(\beta,\sigma)\in K$ as $\om \to \infty $. 
In particular, the unnormalized posterior is robust if and only if $\delta=0$. 
\end{thm}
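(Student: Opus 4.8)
The plan is to note that the ratio of unnormalized posteriors telescopes to a finite product of per-observation likelihood ratios over the outlying set, and then to evaluate each of those factors using the tail expansion of $f_1$ supplied by Lemma~\ref{lem:asymptotic}. First I would write $\pit _{\de } ( \be , \si | \mathcal{D} ) = \pit _{\de } ( \be , \si | \mathcal{D} ^{*} ) \prod_{i \in \mathcal{L}} \{ \si ^{-1} f(( y_i - x_i^t \be ) / \si ) \}$, so that
\[
\frac{ \pit _{\de } ( \be , \si | \mathcal{D} ) }{ \pit _{\de } ( \be , \si | \mathcal{D} ^{*} ) } \Big/ \prod_{i \in \mathcal{L}} f( y_i ) \;=\; \prod_{i \in \mathcal{L}} \frac{ f( \ep _i ) / \si }{ f( y_i ) } \text{,} \qquad \ep _i = \frac{ y_i - x_i^t \be }{ \si } \text{.}
\]
Since $\mathcal{L}$ is finite, it suffices to show that, for each fixed $i \in \mathcal{L}$, $f( \ep _i ) / \{ \si f( y_i ) \} \to \si ^{2 \de }$ uniformly on $K$. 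Compactness of $K$ bounds $| x_i^t \be |$ and keeps $\si$ in a fixed interval $[ \underline{\si } , \overline{\si } ] \subset (0, \infty )$; combined with $| y_i ( \om ) | \to \infty $ this forces $| \ep _i | \to \infty $ and $| y_i | \to \infty $ uniformly on $K$, and also $\ep _i^2 / y_i^2 = (1 - x_i^t \be / y_i )^2 / \si ^2 \to \si ^{-2}$ uniformly.

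Next I would peel off the thin-tailed component. With $f = (1 - s) f_0 + s f_1$,
\[
\frac{ f( \ep _i ) }{ f( y_i ) } \;=\; \frac{ f_1 ( \ep _i ) }{ f_1 ( y_i ) } \cdot \frac{ (1 - s) f_0 ( \ep _i ) / f_1 ( \ep _i ) + s }{ (1 - s) f_0 ( y_i ) / f_1 ( y_i ) + s } \text{.}
\]
Lemma~\ref{lem:asymptotic} shows $f_1 (z)$ decays only like $|z|^{-1-2\de}$ times a slowly varying factor, whereas $f_0 (z) = {\rm N} (z | 0, 1)$ decays like $e^{-z^2/2}$, so $f_0 (z) / f_1 (z) \to 0$ as $|z| \to \infty $ (the general-$\de$ counterpart of Lemma~\ref{lem:f0f1f}(iii)); since $| \ep _i |$ and $| y_i |$ diverge uniformly and $s \in (0,1)$ is fixed, the second factor tends to $1$ uniformly on $K$. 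It therefore remains to prove $f_1 ( \ep _i ) / f_1 ( y_i ) \to \si ^{1 + 2 \de }$ uniformly on $K$.

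For this core step I would use the uniform form of the equivalence in Lemma~\ref{lem:asymptotic}: writing $g(z) = \frac{ C^{-1} }{ \sqrt{2 \pi } } (z^2/2)^{-1/2-\de} \Ga (1/2 + \de ) S(z^2/2)$, for every $\eta > 0$ there is $Z_0$ with $| f_1 (z) / g(z) - 1 | < \eta $ whenever $|z| \ge Z_0$, and for $\om $ large $| \ep _i | , | y_i | \ge Z_0$ for all $( \be , \si ) \in K$, so $f_1 ( \ep _i ) / f_1 ( y_i ) = (1 + o(1)) \, g( \ep _i ) / g( y_i )$ uniformly. Now
\[
\frac{ g( \ep _i ) }{ g( y_i ) } \;=\; \Big( \frac{ \ep _i^2 }{ y_i^2 } \Big) ^{-1/2-\de} \frac{ S( \ep _i^2 / 2) }{ S( y_i^2 / 2) } \text{,}
\]
and the first factor tends to $( \si ^{-2} )^{-1/2-\de} = \si ^{1 + 2 \de }$ uniformly. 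In $S(u) = \big( u / (1 + u) \big) ^{1 + \de } \{ 1 + \log (1 + u) \} ^{-1-\ga}$ the prefactors $\big( u / (1 + u) \big) ^{1 + \de }$ tend to $1$ since $\ep _i^2 / 2 , y_i^2 / 2 \to \infty $ uniformly; writing $\ep _i^2 / 2 = v_\om \, ( y_i^2 / 2)$ with $v_\om = \ep _i^2 / y_i^2 \to \si ^{-2}$ uniformly (hence eventually bounded away from $0$ and $\infty $), one gets $\log \{ (1 + \ep _i^2 / 2) / (1 + y_i^2 / 2) \} \to \log \si ^{-2}$, so that $(1 + \log (1 + \ep _i^2 / 2)) / (1 + \log (1 + y_i^2 / 2)) = 1 + O(1) / (1 + \log (1 + y_i^2 / 2)) \to 1$ uniformly --- this is exactly the uniform version of Lemma~\ref{lem:IL}(b). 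Hence $S( \ep _i^2 / 2) / S( y_i^2 / 2) \to 1$, so $f_1 ( \ep _i ) / f_1 ( y_i ) \to \si ^{1 + 2 \de }$ and $f( \ep _i ) / \{ \si f( y_i ) \} \to \si ^{2 \de }$ uniformly on $K$; multiplying over $i \in \mathcal{L}$ gives the claimed limit $\si ^{2 | \mathcal{L} | \de }$.

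The last claim is then immediate: $\si ^{2 | \mathcal{L} | \de }$ is identically equal to $1$ on $(0, \infty )$ if and only if $| \mathcal{L} | \de = 0$, i.e. (when at least one outlier is present, $\mathcal{L} \neq \emptyset $) if and only if $\de = 0$; for $\de > 0$ the limiting unnormalized posterior is $\pit _{\de } ( \be , \si | \mathcal{D} ^{*} )$ multiplied by the non-constant factor $\si ^{2 | \mathcal{L} | \de }$, so robustness fails. I do not expect a deep obstacle here --- the one point requiring care is the uniformity in $( \be , \si ) \in K$, and that rests entirely on the fact that $\ep _i$, $y_i$, $\ep _i^2 / 2$ and $y_i^2 / 2$ all diverge uniformly over the compact set $K$, which is precisely what is needed to apply the uniform versions of Lemmas~\ref{lem:asymptotic} and~\ref{lem:IL}.
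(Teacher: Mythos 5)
Your proof is correct, and it reaches the conclusion by a genuinely different route from the paper for the core step. You and the paper both telescope the ratio to $\prod_{i\in\mathcal{L}}\{f(\ep_i)/\si\}/f(y_i)$ and reduce to showing each factor tends to $\si^{2\de}$ uniformly on $K$; but for the key limit $f_1(\ep_i)/f_1(y_i)\to\si^{1+2\de}$ you invoke Lemma~\ref{lem:asymptotic} as a black box, replace $f_1$ by its explicit asymptote $g$ (the constants cancel in the ratio), and finish with an elementary slowly-varying computation, whereas the paper never uses Lemma~\ref{lem:asymptotic} for this ratio: it redoes the change of variables $u=(\ep_i^2)v$, resp.\ $u=y_i^2 v$, inside the scale-mixture integrals and applies the dominated convergence theorem with explicit envelopes $F_1,F_2$ chosen so the bound is independent of $(\be,\si)\in K$. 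Your handling of the mixture correction is also simpler: since $f_0(z)/f_1(z)\to 0$ and $|\ep_i|\ge(|y_i|-M)/\overline{\si}\to\infty$ uniformly, you bound $f_0(\ep_i)/f_1(\ep_i)$ by $\sup_{|z|\ge(|y_i|-M)/\overline{\si}}f_0(z)/f_1(z)\to 0$, avoiding the paper's detour through monotonicity of $f_0,f_1$ and Lemma~\ref{lem:ratio} to compare $f_1$ at two differently scaled arguments. What your approach buys is brevity and transparency; what the paper's buys is self-containedness of the uniformity argument via concrete dominating functions. One remark on your only delicate point: the ``uniform form'' of Lemma~\ref{lem:asymptotic} requires no extra content, since the pointwise equivalence already means $\sup_{|z|\ge Z_0}|f_1(z)/g(z)-1|$ is small for large $Z_0$, and uniform divergence of $|\ep_i|$ and $|y_i|$ over $K$ (plus uniform convergence of $\ep_i^2/y_i^2$ to $\si^{-2}$ within a compact subset of $(0,\infty)$, so that $t\mapsto t^{-1/2-\de}$ is applied on a set where it is uniformly continuous) is exactly what is needed — your proposal states this correctly.
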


We again note that the general error distribution with $\delta=0$ is exactly the proposed EHE distribution, so that the above theorem indicates that the desirable robustness property is achieved only under the proposed EHE distribution among the general class of error distributions with the mixing distribution in (\ref{EHE2}). 
The asymptotic ratio $\si ^{2 | \mathcal{L} | \de }$ is obtained for the $t$-distribution with $\de $ 
degrees of freedom. 
In other words, the posterior robustness cannot be attained by any finite mixture of $t$-distributions.

Theorem \ref{thm:robust1} shows the uniform convergence on any compact set of the unnormalized posterior density based on (\ref{EHE2}) with $\de = 0$ and all observations to the corresponding one based on non-outlying observations. 
In order to rigorously prove convergence in distribution, we have to justify an interchange of limit and integral concerning the normalizing constant for each model. 
The set of three assumptions (A.1)-(A.3) in Theorem~2.1 is an example that justifies such computation.

\begin{proof}[Proof of Theorem \ref{thm:robust1}]
The normalized ratio of $\pit _{\de } ( \be , \si | \mathcal{D} )$ to $\pit _{\de } ( \be , \si | \mathcal{D} ^{*} )$ is 
\begin{align}
{\pit _{\de } ( \be , \si | \mathcal{D} ) \over \pit _{\de } ( \be , \si | \mathcal{D} ^{*} )} / \prod_{i \in \mathcal{L}} f( y_i ) &= \prod_{i \in \mathcal{L}} {f(( y_i - x_i^t \beta ) / \si ) / \si \over f( y_i )} \text{.} \non 
\end{align}
It is sufficient to show that 
\begin{align}
{f(( y_i - x_i^t \beta ) / \si ) / \si \over f( y_i )} &\to \si ^{2 \de } \non 
\end{align}
uniformly in $( \beta , \si ) \in K$ as $\om \to \infty $ for every $i \in \mathcal{L}$. 
Fix $i \in \mathcal{L}$. 
Let $M = \sup_{( \beta , \si ) \in K} | x_i^t \beta | \in [0, \infty )$. 
Let $\underline{\si } = \inf_{( \beta , \si ) \in K} \si \in (0, \infty )$ and $\overline{\si } = \sup_{( \beta , \si ) \in K} \si \in (0, \infty )$. 
Assume without loss of generality that 
$\om $ is sufficiently large so that $| y_i | \ge 2 M + 1$. 

We first consider the case of $s = 1$. 
Then 
\begin{align}
{f(( y_i - x_i^t \beta ) / \si ) / \si \over f( y_i )} &= {f_1 (( y_i - x_i^t \beta ) / \si ) / \si \over f_1 ( y_i )} \non \\
&= {1 \over \si } \frac{ \int_{0}^{\infty } {\rm{N}} (( y_i - x_i^t \beta ) / \si | 0, u) H(u; \ga , \de ) du }{ \int_{0}^{\infty } {\rm{N}} ( y_i | 0, u) H(u; \ga , \de ) du } \non \\
&= {| y_i - x_i^t \beta | \over \si^2 |y_i|} \frac{ \int_{0}^{\infty } v^{- 1 / 2} e^{- 1 / (2 v)} H(( | y_i - x_i^t \beta |^2 / \si ^2 ) v | \ga , \de ) dv }{ \int_{0}^{\infty } v^{- 1 / 2} e^{- 1 / (2 v)} H( | y_i |^2 v | \ga , \de ) dv } \text{,} \non 
\end{align}
where the last equality follows by making the change of variables $u = (| y_i - x_i^t \beta | / \si )^2 v$ in the numerator and by making the change of variables $u = | y_i |^2 v$ in the denominator. 
Therefore, 
\begin{align}
\Big| {f(( y_i - x_i^t \beta ) / \si ) / \si \over f( y_i )} - \si ^{2 \de } \Big| &\le \overline{\si } ^{2 \de } \frac{ \int_{0}^{\infty } v^{- 1 / 2} e^{- 1 / (2 v)} H( | y_i |^2 v | \ga , \de ) G(v) dv }{ \int_{0}^{\infty } v^{- 1 / 2} e^{- 1 / (2 v)} H( | y_i |^2 v | \ga , \de ) dv } \text{,} \non 
\end{align}
where 
\begin{align}
G(v) &= G(v; \beta , \si , \ga , \de , y_i , x_i ) = \Big| {| y_i - x_i^t \beta | \over \si ^{2 (1 + \de )} |y_i|} {H(( | y_i - x_i^t \beta |^2 / \si ^2 ) v | \ga , \de ) \over H( | y_i |^2 v | \ga , \de )} - 1 \Big| \non \\
&= \Big| {| y_i - x_i^t \beta | \over | y_i |} \Big( {1 + | y_i |^2 v \over \si ^2 + | y_i - x_i^t \beta |^2 v} \Big) ^{1 + \de } \Big[ {1 + \log (1 + | y_i |^2 v ) \over 1 + \log \{ 1 + ( | y_i - x_i^t \beta |^2 / \si ^2 ) v \} } \Big] ^{1 + \ga } - 1 \Big| \non 
\end{align}
for $v > 0$. 
Note that 
\begin{align}
F_1 (v) &\le {| y_i - x_i^t \beta | \over | y_i |} \Big( {1 + | y_i |^2 v \over \si ^2 + | y_i - x_i^t \beta |^2 v} \Big) ^{1 + \de } \Big[ {1 + \log (1 + | y_i |^2 v ) \over 1 + \log \{ 1 + ( | y_i - x_i^t \beta |^2 / \si ^2 ) v \} } \Big] ^{1 + \ga } \le F_2 (v) \text{,} \non 
\end{align}
where 
\begin{align}
&F_1 (v) = {| y_i | - M \over | y_i |} \Big\{ {1 + | y_i |^2 v \over \overline{\si } ^2 + (| y_i | + M)^2 v} \Big\} ^{1 + \de } \Big( {1 + \log (1 + | y_i |^2 v ) \over 1 + \log [1 + \{ ( | y_i | + M)^2 / \underline{\si } ^2 \} v]} \Big) ^{1 + \ga } \text{,} \non \\
&F_2 (v) = {| y_i | + M \over | y_i |} \Big\{ {1 + | y_i |^2 v \over \underline{\si } ^2 + (| y_i | - M)^2 v} \Big\} ^{1 + \de } \Big( {1 + \log (1 + | y_i |^2 v ) \over 1 + \log [1 + \{ ( | y_i | - M)^2 / \overline{\si } ^2 \} v]} \Big) ^{1 + \ga } \text{.} \non 
\end{align}
Then 
\begin{align}
G(v) &\le | F_1 (v) - 1| + | F_2 (v) - 1| \text{.} \non 
\end{align}
Therefore, 
\begin{align}
&\Big| {f(( y_i - x_i^t \beta ) / \si ) / \si \over f( y_i )} - \si ^{2 \de } \Big| \non \\
&\le \overline{\si } ^{2 \de } \frac{ \int_{0}^{\infty } v^{- 1 / 2} e^{- 1 / (2 v)} \tilde{H} (v) \{ | F_1 (v) - 1| + | F_2 (v) - 1| \} dv }{ \int_{0}^{\infty } v^{- 1 / 2} e^{- 1 / (2 v)} \tilde{H} (v) dv } \text{,} \label{tposp1} 
\end{align}
where 
\begin{align}
\tilde{H} (v) &= {H( | y_i |^2 v | \ga , \de ) \over H( | y_i |^2 | \ga , \de )} \text{.} \non 
\end{align}
The right-hand side of (\ref{tposp1}) is independent of $( \beta , \si )$. 
We have that $\lim_{\om \to \infty } (| F_1 (v) - 1 | + | F_2 (v) - 1 |) = 0$ for each $v > 0$ and that for $| y_i | \ge 1$, 
\begin{align}
v^{- 1 / 2} e^{- 1 / (2 v)} \tilde{H} (v) &= v^{- 1 / 2} \Big( {1 + | y_i |^2 \over 1 + | y_i |^2 v} \Big) ^{1 + \de } \Big\{ {1 + \log (1 + | y_i |^2 ) \over 1 + \log (1 + | y_i |^2 v)} \Big\} ^{1 + \ga } e^{- 1 / (2 v)} \non \\
&\begin{cases} \displaystyle 
\le 2^{1 + \de } v^{- 1 / 2 - 1 - \de } \max \{ 1, v^{- (1 + \ga )} \} e^{- 1 / (2 v)} \\ \displaystyle \to v^{- 1 / 2 - 1 - \de } e^{- 1 / (2 v)} \quad \text{as $\om \to \infty $} \end{cases} \non 
\end{align}
for all $v > 0$ by Lemma \ref{lem:IL}. 
Furthermore, 
\begin{align}
| F_1 (v) - 1| + | F_2 (v) - 1| &\le 2 + | F_1 (v) | + | F_2 (v) | \le 2 \{ 1 + F_2 (v) \} \non 
\end{align}
and, since $| y_i | \ge 2 M + 1 > M$, we have 
\begin{align}
F_2 (v) &= {| y_i | + M \over | y_i |} \Big\{ {1 + | y_i |^2 v \over \underline{\si } ^2 + (| y_i | - M)^2 v} \Big\} ^{1 + \de } \Big( {1 + \log (1 + | y_i |^2 v ) \over 1 + \log [1 + \{ ( | y_i | - M)^2 / \overline{\si } ^2 \} v]} \Big) ^{1 + \ga } \non \\
&\le 2 \Big\{ {1 \over \underline{\si } ^2} + {| y_i |^2 \over (| y_i | - M)^2} \Big\} ^{1 + \de } \Big( 1 + {\log {1 + | y_i |^2 v \over 1 + \{ ( | y_i | - M)^2 / \overline{\si } ^2 \} v]} \over 1 + \log [1 + \{ ( | y_i | - M)^2 / \overline{\si } ^2 \} v]} \Big) ^{1 + \ga } \non \\
&\le 2 \Big({1 \over \underline{\si } ^2} + 4 \Big) ^{1 + \de } \Big[ 1 + \Big| \log {1 + | y_i |^2 v \over 1 + \{ ( | y_i | - M)^2 / \overline{\si } ^2 \} v} \Big| \Big] ^{1 + \ga } \text{,} \non 
\end{align}
where 
\begin{align}
&\Big| \log {1 + | y_i |^2 v \over 1 + \{ ( | y_i | - M)^2 / \overline{\si } ^2 \} v} \Big| \non \\
&= \Big| \int_{( | y_i | - M)^2 / (| y_i | \overline{\si } )^2}^{1} {| y_i |^2 v \over 1 + | y_i |^2 v t} dt \Big| \le \int_{\min \{ 1, ( | y_i | - M)^2 / (| y_i | \overline{\si } )^2 \} }^{\max \{ 1, ( | y_i | - M)^2 / (| y_i | \overline{\si } )^2 \} } {1 \over t} dt \non \\
&\le {\max \{ 1, ( | y_i | - M)^2 / (| y_i | \overline{\si } )^2 \} - \min \{ 1, ( | y_i | - M)^2 / (| y_i | \overline{\si } )^2 \} \over \min \{ 1, ( | y_i | - M)^2 / (| y_i | \overline{\si } )^2 \} } \non \\
&= {| (| y_i | \overline{\si } )^2 - ( | y_i | - M)^2 | \over \min \{ (| y_i | \overline{\si } )^2 , ( | y_i | - M)^2 \} } \le {(| y_i | \overline{\si } )^2 \over ( | y_i | - M)^2} + {( | y_i | - M)^2 \over (| y_i | \overline{\si } )^2} \le (2 \overline{\si } )^2 + (1 / \overline{\si } )^2 \text{.} \non 
\end{align}
Thus, by the dominated convergence theorem, the right-hand side of (\ref{tposp1}) converges to zero as $\om \to \infty $. 

Next we consider the case of $s \in (0, 1)$. 
Then we have 
\begin{align}
{f(( y_i - x_i^t \beta ) / \si ) / \si \over f( y_i )} &= {f_1 (( y_i - x_i^t \beta ) / \si ) / \si \over f_1 ( y_i )} \frac{ \displaystyle s + (1 - s) {f_0 (( y_i - x_i^t \beta ) / \si ) \over f_1 (( y_i - x_i^t \beta ) / \si )} }{ \displaystyle s + (1 - s) {f_0 ( y_i ) \over f_1 ( y_i )} } \text{.} \non 
\end{align}
Therefore, 
\begin{align}
\Big| {f(( y_i - x_i^t \beta ) / \si ) / \si \over f( y_i )} - \si ^{2 \de } \Big| &\le \overline{\si } ^{2 \de } \Big| {f(( y_i - x_i^t \beta ) / \si ) / \si \over f( y_i ) \si ^{2 \de }} - 1 \Big| \non \\
&\le \overline{\si } ^{2 \de } \Big[ \Big\{ \Big| {f_1 (( y_i - x_i^t \beta ) / \si ) / \si \over f_1 ( y_i ) \si ^{2 \de }} - 1 \Big| + 1 \Big\} \non \\
&\quad \times \Big\{ \Big| \frac{ \displaystyle s + (1 - s) {f_0 (( y_i - x_i^t \beta ) / \si ) \over f_1 (( y_i - x_i^t \beta ) / \si )} }{ \displaystyle s + (1 - s) {f_0 ( y_i ) \over f_1 ( y_i )} } - 1 \Big| + 1 \Big\} - 1 \Big] \text{.} \non 
\end{align}
By the result for $s = 1$, 
\begin{align}
\sup_{( \beta , \si ) \in K} \Big| {f_1 (( y_i - x_i^t \beta ) / \si ) / \si \over f_1 ( y_i ) \si ^{2 \de }} - 1 \Big| &\le {1 \over \underline{\si } ^{2 \de }} \sup_{( \beta , \si ) \in K} \Big| {f_1 (( y_i - x_i^t \beta ) / \si ) / \si \over f_1 ( y_i )} - \si ^{2 \de } \Big| \to 0 \non 
\end{align}
as $\om \to \infty $. 
On the other hand, 
\begin{align}
\Big| \frac{ \displaystyle s + (1 - s) {f_0 (( y_i - x_i^t \beta ) / \si ) \over f_1 (( y_i - x_i^t \beta ) / \si )} }{ \displaystyle s + (1 - s) {f_0 ( y_i ) \over f_1 ( y_i )} } - 1 \Big| &\le \Big| \frac{ \displaystyle s }{ \displaystyle s + (1 - s) {f_0 ( y_i ) \over f_1 ( y_i )} } - 1 \Big| + {1 - s \over s} {f_0 (( y_i - x_i^t \beta ) / \si ) \over f_1 (( y_i - x_i^t \beta ) / \si )} \text{.} \label{tposp2} 
\end{align}
by Lemma \ref{lem:asymptotic}, the first term on the right side of (\ref{tposp2}) converges to zero as $\om \to \infty $. 
Since $f_0 (z) = f_0 (|z|)$ and $f_1 (z) = f_1 (|z|)$ are nonincreasing functions of $|z|$ and since $M \le | y_i | / 2 \le | y_i |$, it follows that 
\begin{align}
{f_0 (( y_i - x_i^t \beta ) / \si ) \over f_1 (( y_i - x_i^t \beta ) / \si )} &\le {f_0 ((| y_i | - M) / \overline{\si } ) \over f_1 ((| y_i | + M) / \underline{\si } )} = {f_0 ((| y_i | - M) / \overline{\si } ) \over f_1 ((| y_i | - M) / \overline{\si } )} {f_1 ((| y_i | - M) / \overline{\si } ) \over f_1 ((| y_i | + M) / \underline{\si } )} \non \\
&\le {f_0 ((| y_i | - M) / \overline{\si } ) \over f_1 ((| y_i | - M) / \overline{\si } )} {f_1 (| y_i | / (2 \overline{\si } )) \over f_1 (| y_i | / ( \underline{\si } / 2))} \text{,} \non 
\end{align}
where 
\begin{align}
\lim_{\om \to \infty } {f_0 ((| y_i | - M) / \overline{\si } ) \over f_1 ((| y_i | - M) / \overline{\si } )} = 0 \text{.} \non 
\end{align}
Furthermore, 
\begin{align}
{f_1 (| y_i | / (2 \overline{\si } )) \over f_1 (| y_i | / ( \underline{\si } / 2))} &= \frac{ \int_{0}^{\infty } {\rm{N}} (| y_i | / (2 \overline{\si } ) | 0, u) H(u; \ga , \de ) du }{ \int_{0}^{\infty } {\rm{N}} (| y_i | / ( \underline{\si } / 2) | 0, u) H(u; \ga , \de ) du } \non \\
&= {\underline{\si } \over 4 \overline{\si }} \frac{ \int_{0}^{\infty } {\rm{N}} (| y_i | | 0, v) H(v / (2 \overline{\si } )^2 ; \ga , \de ) dv }{ \int_{0}^{\infty } {\rm{N}} (| y_i | | 0, v) H(v / ( \underline{\si } / 2)^2 ; \ga , \de ) dv } \non \\
&\to \Big( {4 \overline{\si } \over \underline{\si }} \Big) ^{1 + 2 \de } \non 
\end{align}
as $\om \to \infty $ by Lemma \ref{lem:ratio} since 
\begin{align}
{H(v / (2 \overline{\si } )^2 ; \ga , \de ) \over H(v / ( \underline{\si } / 2)^2 ; \ga , \de )} &= \Big\{ {1 + v / ( \underline{\si } / 2)^2 \over 1 + v / (2 \overline{\si } )^2} \Big\} ^{1 + \de } \Big[ {1 + \log \{ 1 + v / ( \underline{\si } / 2)^2 \} \over 1 + \log \{ 1 + v / (2 \overline{\si } )^2 \} } \Big] ^{1 + \ga } \to \Big( {4 \overline{\si } \over \underline{\si }} \Big) ^{2 (1 + \de )} \non 
\end{align}
as $v \to \infty $ by Lemma \ref{lem:IL}. 
Thus, we conclude that 
\begin{align}
\sup_{( \beta , \si ) \in K} \Big| {f(( y_i - x_i^t \beta ) / \si ) / \si \over f( y_i )} - \si ^{2 \de } \Big| \to 0 \non 
\end{align}
as $\om \to \infty$. 
\end{proof}

\section{Posterior Moments of $\be $ and $\si ^2$}
Here we prove Proposition~2.2, the existence of posterior moments of $(\be,\sigma ^2)$. The proof is given for a slightly generalized model as given below.

Let $f(z)$, $z \in \mathbb{R}$, be a symmetric bounded error density. 
For each $k = 1, \dots , p$, let $\pi _k ( \th )$, $\th \in \mathbb{R}$, be a proper prior density and let $\nu _k \in \{ 0, 1 \} $. 
Let $a_{\si } , b_{\si } > 0$. 
Suppose that for $i = 1, \dots , n$ and $k = 1, \dots , p$, 
\begin{align}
&y_i \sim {1 \over \si } f \Big( {y_i - x_{i}^{t} \be \over \si } \Big) \text{,} \quad \be _k \sim {1 \over \si ^{\nu _k}} \pi _k \Big( {\be _k \over \si ^{\nu _k}} \Big) \text{,} \quad \si \sim 2 \si {\rm{IG}} ( \si ^2 | a_{\si } , b_{\si } ) \text{.} \non 
\end{align}

\begin{prp}
Let $k_0 = 1, \dots , p$. 
Suppose that $\sup_{\th \in \mathbb{R}} \{ | \th |^c \pi _{k_0} ( \th ) \} < \infty $ for $0 < c \le n$. 
Then $E[ | \be _{k_0} |^c | \mathcal{D} ] < \infty $. 
\end{prp}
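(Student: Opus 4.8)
The plan is to prove that the unnormalized posterior integral $I:=\int_{\mathbb{R}^p\times(0,\infty)}|\be_{k_0}|^c\,\widetilde\pi(\be,\si\mid\mathcal{D})\,d(\be,\si)$ is finite, where $\widetilde\pi(\be,\si\mid\mathcal{D})=\pi_\si(\si)\prod_{k=1}^p\si^{-\nu_k}\pi_k(\be_k/\si^{\nu_k})\prod_{i=1}^n\si^{-1}f\big((y_i-x_i^t\be)/\si\big)$ and $\pi_\si(\si)\propto\si^{-2a_\si-1}e^{-b_\si/\si^2}$; since the posterior is proper (its normalizing constant is finite by the cruder estimate below with every likelihood factor replaced by $\sup f<\infty$, and is clearly positive), $I<\infty$ gives $E[|\be_{k_0}|^c\mid\mathcal{D}]<\infty$. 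The first move is to absorb the weight $|\be_{k_0}|^c$ into the prior of $\be_{k_0}$: the hypothesis provides $C_{k_0}<\infty$ with $|\th|^c\pi_{k_0}(\th)\le C_{k_0}$ for all $\th$, so evaluating at $\th=\be_{k_0}/\si^{\nu_{k_0}}$ yields $|\be_{k_0}|^c\,\si^{-\nu_{k_0}}\pi_{k_0}(\be_{k_0}/\si^{\nu_{k_0}})\le C_{k_0}\,\si^{\nu_{k_0}(c-1)}$, trading the polynomial in $\be_{k_0}$ for a power of $\si$.

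Next I would integrate out the coordinates of $\be$. For $\be_{k_0}$ I would retain one likelihood factor indexed by some $i_0$ with $x_{i_0,k_0}\ne0$ (such $i_0$ exists whenever the $k_0$-th column of the design is not identically zero, which I assume — it holds in particular when $X$ has full column rank) and bound the remaining $n-1$ factors by $\sup f$. The linear change of variables $s=(y_{i_0}-x_{i_0}^t\be)/\si$ in the retained factor gives $\int_{\mathbb{R}}\si^{-1}f\big((y_{i_0}-x_{i_0}^t\be)/\si\big)\,d\be_{k_0}=|x_{i_0,k_0}|^{-1}$, a constant free of $\si$ and of the other coefficients. What is left is a constant multiple of $\int\si^{\nu_{k_0}(c-1)-(n-1)}\pi_\si(\si)\prod_{k\ne k_0}\si^{-\nu_k}\pi_k(\be_k/\si^{\nu_k})\,d\big((\be_k)_{k\ne k_0},\si\big)$; since each map $\be_k\mapsto\si^{-\nu_k}\pi_k(\be_k/\si^{\nu_k})$ is a proper density, integrating the $p-1$ remaining coefficients contributes only the factor $1$.

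This reduces everything to the scalar integral $\int_0^\infty\si^{\nu_{k_0}(c-1)-(n-1)}\pi_\si(\si)\,d\si$. The factor $e^{-b_\si/\si^2}$ renders the $\si\to0$ end harmless, so the only question is the $\si\to\infty$ behaviour; the substitution $t=\si^{-2}$ converts the integral into a Gamma integral that converges precisely when $\nu_{k_0}(c-1)<(n-1)+2a_\si$. This is the one genuinely delicate point and the sole place the hypothesis $0<c\le n$ is used: because $\nu_{k_0}\in\{0,1\}$ we have $\nu_{k_0}(c-1)\le\max\{c-1,0\}$, and $\max\{c-1,0\}\le n-1<(n-1)+2a_\si$ follows from $c\le n$, $n\ge1$ and $a_\si>0$. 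In other words, the main obstacle is balancing the possible growth $\si^{\nu_{k_0}(c-1)}$ produced by a heavy prior tail on $\be_{k_0}$ (when $c>1$) against the decay $\si^{-(n-1)}$ extracted from the $n-1$ discarded likelihood factors together with the inverse-gamma tail of $\pi_\si$; exploiting $\nu_{k_0}\le1$ and splitting on the sign of $c-1$ is exactly what makes the required inequality hold uniformly. The change of variables and the properness of the posterior are routine and I would handle them briefly.
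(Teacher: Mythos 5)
Your proposal follows essentially the same route as the paper's proof: absorb the weight $|\be_{k_0}|^c$ into the prior via $\sup_\th\{|\th|^c\pi_{k_0}(\th)\}$ to trade it for $\si^{\nu_{k_0}(c-1)}$, retain a single likelihood factor (with nonzero design entry in the $k_0$-th coordinate) to integrate out $\be_{k_0}$ by a change of variables yielding $1/|x_{i_0,k_0}|$, bound the other $n-1$ factors by $\sup_z f(z)/\si$, integrate the remaining coefficients against their proper priors, and check that the resulting one-dimensional $\si$-integral against the inverse-gamma prior converges because $\nu_{k_0}(c-1)\le n-1$. Your handling of the choice of $i_0$ and of the exact convergence threshold $\nu_{k_0}(c-1)<(n-1)+2a_\si$ is slightly more explicit than the paper's, but the argument is the same.
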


\begin{proof}
We have 
\begin{align}
&p( \mathcal{D} ) E[ | \be _{k_0} |^c | \mathcal{D} ] \non \\
&= \int_{\mathbb{R} ^p \times (0, \infty )} \Big( 2 \si {\rm{IG}} ( \si ^2 | a_{\si } , b_{\si } ) | \be _{k_0} |^c {1 \over \si ^{\nu _{k_0}}} \pi _{k_0} \Big( {\be _{k_0} \over \si ^{\nu _{k_0}}} \Big) \Big[ \prod_{\substack{1 \le k \le p \\ k \neq k_0}} \Big\{ {1 \over \si ^{\nu _k}} \pi _k \Big( {\be _k \over \si ^{\nu _k}} \Big) \Big\} \Big] \non \\
&\quad \times {| x_{1, k_0} | \over \si } f \Big( {y_1 - x_{1}^{t} \be \over \si } \Big) {1 \over | x_{1, k_0} |} \Big[ \prod_{i = 2}^{n} \Big\{ {1 \over \si } f \Big( {y_i - x_{i}^{t} \be \over \si } \Big) \Big\} \Big] \Big) d( \be , \si ) \non \\
&\le \int_{\mathbb{R} ^p \times (0, \infty )} \Big( 2 \si {\rm{IG}} ( \si ^2 | a_{\si } , b_{\si } ) \Big[ \prod_{\substack{1 \le k \le p \\ k \neq k_0}} \Big\{ {1 \over \si ^{\nu _k}} \pi _k \Big( {\be _k \over \si ^{\nu _k}} \Big) \Big\} \Big] {| x_{1, k_0} | \over \si } f \Big( {y_1 - x_{1}^{t} \be \over \si } \Big) \non \\
&\quad \times \si ^{\nu _{k_0} (c - 1)} [ \sup_{\th \in \mathbb{R}} \{ | \th |^c \pi _{k_0} ( \th ) \} ] {1 \over | x_{1, k_0} |} \Big\{ {\sup_{z \in \mathbb{R}} f(z) \over \si } \Big\} ^{n - 1} \Big) d( \be , \si ) \non \\
&= \int_{0}^{\infty } 2 \si {\rm{IG}} ( \si ^2 | a_{\si } , b_{\si } ) \si ^{\nu _{k_0} (c - 1)} [ \sup_{\th \in \mathbb{R}} \{ | \th |^c \pi _{k_0} ( \th ) \} ] {1 \over | x_{1, k_0} |} {\{ \sup_{z \in \mathbb{R}} f(z) \} ^{n - 1} \over \si ^{n - 1}} d\si \text{,} \non 
\end{align}
which is finite since $\nu _{k_0} (c - 1) \le n - 1$ by assumption. 
\end{proof}

\begin{prp}
Suppose that $d \le n$. 
Then $E[ \si ^d | \mathcal{D} ] < \infty $. 
\end{prp}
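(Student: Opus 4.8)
The plan is to bound the unnormalized posterior moment directly, along the same lines as the proof of the previous proposition but more simply, since here no prior density on the coefficients has to be absorbed into a likelihood factor. Starting from
\begin{align}
p( \mathcal{D} ) E[ \si ^d | \mathcal{D} ] &= \int_{\mathbb{R} ^p \times (0, \infty )} \si ^d \, 2 \si \, {\rm{IG}} ( \si ^2 | a_{\si } , b_{\si } ) \Big[ \prod_{k = 1}^{p} \Big\{ {1 \over \si ^{\nu _k}} \pi _k \Big( {\be _k \over \si ^{\nu _k}} \Big) \Big\} \Big] \prod_{i = 1}^{n} \Big\{ {1 \over \si } f \Big( {y_i - x_{i}^{t} \be \over \si } \Big) \Big\} \, d( \be , \si ) \text{,} \non
\end{align}
I would first use the boundedness of $f$ to replace each of the $n$ likelihood factors by $\{ \sup_{z \in \mathbb{R}} f(z) \} / \si $, which bounds the whole likelihood product by $\{ \sup_{z \in \mathbb{R}} f(z) \} ^{n} / \si ^{n}$.

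After this substitution the $\be $-dependence sits only in the prior, and since each $\pi _k$ is a proper density, $\int_{\mathbb{R} ^p} \prod_{k = 1}^{p} \{ \si ^{- \nu _k} \pi _k ( \be _k / \si ^{\nu _k} ) \} \, d\be = 1$ for every fixed $\si $. Hence
\begin{align}
p( \mathcal{D} ) E[ \si ^d | \mathcal{D} ] &\le \{ \textstyle \sup_{z \in \mathbb{R}} f(z) \} ^{n} \int_{0}^{\infty } \si ^{d - n} \, 2 \si \, {\rm{IG}} ( \si ^2 | a_{\si } , b_{\si } ) \, d\si = \{ \textstyle \sup_{z \in \mathbb{R}} f(z) \} ^{n} \, E\big[ ( \si ^2 )^{(d - n)/2} \big] \text{,} \non
\end{align}
where the final expectation is taken with respect to $\si ^2 \sim {\rm{IG}} ( a_{\si } , b_{\si } )$.

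Finally, since $d \le n$ we have $(d - n)/2 \le 0$, so this is a non-positive moment of an inverse-gamma variable, equivalently the non-negative moment $E[ ( \si ^{- 2} )^{(n - d)/2} ]$ of the gamma variable $\si ^{- 2} \sim {\rm{Ga}} ( a_{\si } , b_{\si } )$, which is finite because every non-negative moment of a gamma distribution is finite. This yields $E[ \si ^d | \mathcal{D} ] < \infty $. I do not expect any genuine obstacle here; the only point worth flagging is that the hypothesis $d \le n$ is used exactly once and is essential, as it is what lets the $\si ^{- n}$ coming from the likelihood bound absorb the factor $\si ^d$, so that the remaining integral is a convergent negative moment of the inverse-gamma prior rather than a positive one (the latter would require the extra restriction $d/2 < a_{\si }$).
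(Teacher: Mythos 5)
Your proposal is correct and follows essentially the same route as the paper's own proof: bound the likelihood product by $\{\sup_z f(z)\}^n/\si^n$, integrate out $\be$ using that the $\pi_k$ are proper densities, and observe that the remaining one-dimensional integral is a non-positive moment of the inverse-gamma prior, finite precisely because $d\le n$. The paper states this last finiteness simply as "finite by assumption," while you spell it out via the gamma moments of $\si^{-2}$, which is a harmless elaboration of the same step.
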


\begin{proof}
We have 
\begin{align}
&p( \mathcal{D} ) E[ \si ^d | \mathcal{D} ] \non \\
&= \int_{\mathbb{R} ^p \times (0, \infty )} \si ^d 2 \si {\rm{IG}} ( \si ^2 | a_{\si } , b_{\si } ) \Big[ \prod_{k = 1}^{p} \Big\{ {1 \over \si ^{\nu _k}} \pi _k \Big( {\be _k \over \si ^{\nu _k}} \Big) \Big\} \Big] \Big[ \prod_{i = 1}^{n} \Big\{ {1 \over \si } f \Big( {y_i - x_{i}^{t} \be \over \si } \Big) \Big\} \Big] d( \be , \si ) \non \\
&\le \int_{\mathbb{R} ^p \times (0, \infty )} \si ^d 2 \si {\rm{IG}} ( \si ^2 | a_{\si } , b_{\si } ) \Big[ \prod_{k = 1}^{p} \Big\{ {1 \over \si ^{\nu _k}} \pi _k \Big( {\be _k \over \si ^{\nu _k}} \Big) \Big\} \Big] {\{ \sup_{z \in \mathbb{R}} f(z) \} ^n \over \si ^n} d( \be , \si ) \non \\
&= \int_{0}^{\infty } \si ^d 2 \si {\rm{IG}} ( \si ^2 | a_{\si } , b_{\si } ) {\{ \sup_{z \in \mathbb{R}} f(z) \} ^n \over \si ^n} d\si \text{,} \non 
\end{align}
which is finite by assumption. 
\end{proof}

\section{Additional experiment in simulation study}

\subsection{Sensitivity analysis}

To evaluate the effect of hyperparameters on the posterior inference, we repeated the posterior analysis with different choice of hyperparameters. For the shape parameter of $H$-distribution, we additionally considered $\ga = 0.5$ and $\ga = 0.2$, in addition to our choice in the main text, $\ga = 1$. For the degree-of-freedom parameter of the single $t$-distribution and the finite mixture, we considered $\nu = 2.1$ as the ``heaviest'' $t$-distribution with finite mean and variance. The result of posterior analysis is reported in Table~\ref{tab:sim-p20-add1} and \ref{tab:sim-p20-add2} in the same style of Table~1 in the main text. 
It is observed that the EH methods with two different values of $\gamma$ perform almost in the same way as the EH method with $\gamma=1$.

\subsection{Regression with less predictors}

The LPTN models are estimated by the random-walk Metropolis-Hastings algorithm, which requires many iterations in posterior sampling for convergence. While keeping the fairness in the number of iterations, we conduct another experiment that favors the LPTN models by partly eliminating the convergence issue in the LPTN models. The additional simulation study is based on the same settings in Section 4, except that the number of predictors is now $p=10$.

The results are summarized in Tables~\ref{tab:sim-ap1} and \ref{tab:sim-ap2}. The IFs of the LPTN models are improved, but still significantly higher than the others. The LPTN model with $\rho = 0.9$ improves the accuracy of point and interval estimations and is now competitive with the proposed models, while the other LPTN model with $\rho = 0.7$ still provides interval estimates with lower coverage probabilities. This result illustrates the difficulty in tuning the hyperparameters in the class of LPTN distributions, which contrasts the proposed model with no hyperparameter that is sensitive to the posterior result.

\subsection{Computational time with large sample size}

We also measured the actual computation time of the five methods (EH, LP1, T3, MT and N) under different sample sizes. 
We considered four scenarios of $n$, that is, $n\in\{300, 1200, 2100, 3000\}$.
For each $n$, synthetic data is generated using the model with $(100\omega,\mu)=(5, 10)$, and 3000 posterior samples are generated for each method. 
To assess computation time that takes account of sampling efficiency, we compute ${\rm CPT}\times{\rm IF}$, where CPT is the actual computation time to generate 3000 posterior samples and IF is the inefficiency factor.
Note that this quantity can be regarded as computation time to generate 3000 independent posterior samples. 
The experiment was performed on a PC with 3.2 GHz 8-Core Intel Xeon W 8 Core Processor with approximately 32GB RAM.
The results are reported in Table \ref{tab:sim-cpt}.
It is observed that the EH and LP1 methods take more computation time than the others, which would be reasonable price to pay for their posterior robustness.
Comparing EH and LP1, EH is computationally more efficient than LP1.

\begin{table}[htp!]
\caption{\footnotesize Average values of RMSE and IF of the proposed extremely-heavy tailed (EH) distribution with $\gamma=0.5$ and $\gamma=0.2$, and its adaptive version (aEH) with three different priors for $\gamma$, $t$-distribution (T) with $\nu=2.1$ degrees of freedom and two component mixture of normal and $t$-distribution (MT) with $\nu=2.1$ degrees of freedom, based on 500 replications in 9 combinations of $(100\omega, \mu)$ with $p=20$.
All values are multiplied by 100.
\label{tab:sim-p20-add1}}
\begin{center}
\begin{tabular}{ccccccccccccc}
\hline
& & EH & EH & aEH & aEH & aEH &T & MT\\
 & $(100\omega, \mu)$  & {\footnotesize $\gamma=0.5$} & {\footnotesize $\gamma=0.2$} & {\footnotesize ${\rm Ga}(10,100)$} & {\footnotesize ${\rm Ga}(1,1)$} & {\footnotesize ${\rm Ga}(10,10)$} &
 {\footnotesize $\nu=2.1$} & {\footnotesize $\nu=2.1$} \\
 \hline
 & (0, --) & 6.32 & 6.33 & 6.32 & 6.34 & 6.33 & 7.03 & 6.33 \\
 & (5, 5) & 6.99 & 7.23 & 6.94 & 7.47 & 7.14 & 7.25 & 6.99 \\
 & (10, 5) & 10.79 & 12.41 & 9.58 & 8.64 & 8.55 & 8.03 & 7.96 \\
 & (5, 10) & 6.54 & 6.53 & 6.56 & 6.80 & 6.74 & 7.08 & 6.78 \\
RMSE & (10, 10) & 6.85 & 6.81 & 6.91 & 7.57 & 7.44 & 7.39 & 7.30 \\
 & (5, 15) & 6.54 & 6.52 & 6.56 & 6.76 & 6.73 & 7.08 & 6.80 \\
 & (10, 15) & 6.87 & 6.81 & 6.92 & 7.36 & 7.30 & 7.28 & 7.19 \\
 & (5, 20) & 6.48 & 6.46 & 6.49 & 6.67 & 6.64 & 7.02 & 6.72 \\
 & (10, 20) & 6.84 & 6.79 & 6.89 & 7.23 & 7.20 & 7.21 & 7.12 \\
\hline
 & (0, --) & 0.98 & 0.98 & 0.99 & 1.43 & 1.07 & 2.61 & 1.07 \\
 & (5, 5) & 1.74 & 1.89 & 1.79 & 4.84 & 3.84 & 2.42 & 1.99 \\
 & (10, 5) & 2.75 & 2.63 & 2.98 & 5.30 & 6.10 & 2.26 & 2.11 \\
 & (5, 10) & 1.42 & 1.26 & 1.54 & 3.45 & 3.06 & 2.35 & 1.92 \\
IF & (10, 10) & 1.87 & 1.54 & 2.23 & 5.29 & 4.87 & 2.11 & 1.97 \\
 & (5, 15) & 1.40 & 1.25 & 1.53 & 3.09 & 2.81 & 2.33 & 1.90 \\
 & (10, 15) & 1.86 & 1.54 & 2.19 & 4.55 & 4.29 & 2.09 & 1.95 \\
 & (5, 20) & 1.40 & 1.24 & 1.54 & 2.90 & 2.70 & 2.34 & 1.91 \\
 & (10, 20) & 1.86 & 1.55 & 2.17 & 4.14 & 4.00 & 2.08 & 1.93 \\

 \hline
\end{tabular}
\end{center}
\end{table}

\begin{table}[htp!]
\caption{\footnotesize Average values of CP and AL of the proposed extremely-heavy tailed (EH) distribution with $\gamma=0.5$ and $\gamma=2$, and its adaptive version (aEH) with three different priors for $\gamma$, $t$-distribution (T) with $\nu=2.1$ degrees of freedom and two component mixture of normal and $t$-distribution (MT) with $\nu=2.1$ degrees of freedom, based on 500 replications in 9 combinations of $(100\omega, \mu)$ with $p=20$.
All values are multiplied by 100.
\label{tab:sim-p20-add2}}
\begin{center}
\begin{tabular}{ccccccccccccc}
\hline
& & EH & EH & aEH & aEH & aEH &T & MT\\
 & $(100\omega, \mu)$  & {\footnotesize $\gamma=0.5$} & {\footnotesize $\gamma=0.2$} & {\footnotesize ${\rm Ga}(10,100)$} & {\footnotesize ${\rm Ga}(1,1)$} & {\footnotesize ${\rm Ga}(10,10)$} &
 {\footnotesize $\nu=2.1$} & {\footnotesize $\nu=2.1$} \\
 \hline
 & (0, --) & 94.8 & 94.8 & 94.9 & 94.9 & 94.9 & 92.2 & 94.8 \\
 & (5, 5) & 94.8 & 94.4 & 94.7 & 92.8 & 94.0 & 93.6 & 94.6 \\
 & (10, 5) & 93.1 & 92.2 & 93.4 & 92.1 & 91.9 & 93.7 & 94.0 \\
 & (5, 10) & 95.0 & 94.9 & 95.0 & 94.4 & 94.6 & 94.4 & 95.3 \\
CP & (10, 10) & 94.8 & 94.8 & 94.7 & 93.1 & 93.4 & 95.7 & 95.9 \\
 & (5, 15) & 95.1 & 95.0 & 94.9 & 94.2 & 94.3 & 94.1 & 95.1 \\
 & (10, 15) & 94.5 & 94.6 & 94.4 & 93.4 & 93.4 & 95.8 & 96.0 \\
 & (5, 20) & 95.0 & 95.0 & 95.0 & 94.6 & 94.7 & 94.7 & 95.6 \\
 & (10, 20) & 94.7 & 94.6 & 94.6 & 94.0 & 93.9 & 96.2 & 96.5 \\
 \hline
 & (0, --) & 24.7 & 24.7 & 24.7 & 24.6 & 24.7 & 24.7 & 24.7 \\
 & (5, 5) & 27.0 & 27.6 & 26.9 & 27.1 & 26.8 & 27.0 & 27.0 \\
 & (10, 5) & 33.6 & 36.7 & 31.6 & 30.6 & 29.9 & 30.3 & 30.3 \\
 & (5, 10) & 25.8 & 25.7 & 25.8 & 26.1 & 26.0 & 27.1 & 27.1 \\
AL & (10, 10) & 26.9 & 26.7 & 27.1 & 27.7 & 27.6 & 30.3 & 30.2 \\
 & (5, 15) & 25.6 & 25.5 & 25.7 & 25.8 & 25.8 & 27.0 & 27.0 \\
 & (10, 15) & 26.7 & 26.5 & 26.8 & 27.2 & 27.1 & 30.1 & 30.1 \\
 & (5, 20) & 25.7 & 25.6 & 25.7 & 25.9 & 25.9 & 27.2 & 27.2 \\
 & (10, 20) & 26.7 & 26.5 & 26.7 & 27.0 & 27.0 & 30.3 & 30.3 \\
 & (0, --) & 0.98 & 0.98 & 0.99 & 1.43 & 1.07 & 2.61 & 1.07 \\
\hline
\end{tabular}
\end{center}
\end{table}

\begin{table}[htp!]
\caption{\footnotesize Average values of RMSE and IF of the proposed extremely-heavy tailed distribution with fixed $\gamma$ (EH) and estimated gamma (aEH), log-Pareto normal distribution with $\rho=0.9$ (LP1) and $\rho=0.7$ (LP2), Cauchy distribution (C), $t$-distribution with 3 degrees of freedom (T3) and estimated degrees of freedom (T), based on 500 replications in 9 combinations of $(100\omega, \mu)$ with $p=10$.
All values are multiplied by 100.
\label{tab:sim-ap1}}
\begin{center}
\begin{tabular}{ccccccccccccc}
\hline
 & $(100\omega, \mu)$ & & EH & aEH & LP1 & LP2 & C & T3 & T & MT & N \\
 \hline
  & (0, --) &  & 6.18 & 6.18 & 6.41 & 7.66 & 7.71 & 6.64 & 6.42 & 6.19 & 6.18 \\
 & (5, 5) &  & 6.68 & 6.72 & 6.85 & 8.07 & 7.76 & 7.01 & 7.39 & 6.60 & 11.78 \\
 & (10, 5) &  & 8.14 & 8.09 & 8.42 & 8.67 & 8.14 & 8.28 & 10.16 & 8.80 & 18.73 \\
 & (5, 10) &  & 6.39 & 6.44 & 6.48 & 7.82 & 7.73 & 6.72 & 7.09 & 6.34 & 21.12 \\
RMSE & (10, 10) &  & 6.82 & 6.95 & 6.80 & 8.01 & 7.76 & 7.21 & 10.28 & 8.11 & 35.68 \\
 & (5, 15) &  & 6.44 & 6.47 & 6.55 & 7.80 & 7.72 & 6.69 & 6.93 & 6.40 & 30.92 \\
 & (10, 15) &  & 6.87 & 6.95 & 6.75 & 7.99 & 7.81 & 7.02 & 10.65 & 7.41 & 53.56 \\
 & (5, 20) &  & 6.37 & 6.40 & 6.46 & 7.72 & 7.72 & 6.61 & 6.74 & 6.33 & 40.57 \\
 & (10, 20) &  & 6.76 & 6.85 & 6.69 & 8.02 & 7.71 & 6.83 & 10.58 & 11.06 & 70.79 \\
\hline
 & (0, --) &  & 1.02 & 1.02 & 27.99 & 41.03 & 4.32 & 2.09 & 1.84 & 0.99 & 0.98 \\
 & (5, 5) &  & 2.25 & 2.67 & 27.42 & 39.60 & 4.05 & 1.95 & 1.83 & 1.33 & 0.98 \\
 & (10, 5) &  & 3.72 & 4.63 & 27.63 & 38.83 & 3.79 & 1.85 & 1.89 & 2.05 & 0.98 \\
 & (5, 10) &  & 2.16 & 2.49 & 27.59 & 40.12 & 4.00 & 1.90 & 1.81 & 1.28 & 0.98 \\
IF & (10, 10) &  & 3.43 & 4.10 & 27.25 & 39.16 & 3.71 & 1.72 & 2.03 & 1.54 & 0.98 \\
 & (5, 15) &  & 2.17 & 2.46 & 27.63 & 40.04 & 4.04 & 1.88 & 1.81 & 1.28 & 0.98 \\
 & (10, 15) &  & 3.45 & 4.00 & 27.37 & 39.40 & 3.69 & 1.67 & 2.14 & 1.59 & 0.98 \\
 & (5, 20) &  & 2.16 & 2.41 & 27.73 & 40.11 & 4.04 & 1.87 & 1.80 & 1.26 & 0.98 \\
 & (10, 20) &  & 3.45 & 3.89 & 27.41 & 39.63 & 3.66 & 1.66 & 2.22 & 1.61 & 0.98 \\
 \hline
\end{tabular}
\end{center}
\end{table}

\begin{table}[htp!]
\caption{\footnotesize Average values of CP and AL of 95\% credible intervals based on the proposed extremely-heavy tailed distribution with fixed $\gamma$ (EH) and estimated gamma (aEH), log-Pareto normal distribution with $\rho=0.9$ (LP1) and $\rho=0.7$ (LP2), Cauchy distribution (C), $t$-distribution with 3 degrees of freedom (T3) and estimated degrees of freedom (T), based on 500 replications in 9 combinations of $(100\omega, \mu)$ with $p=10$.
All values are multiplied by 100.
\label{tab:sim-ap2}}
\begin{center}
\begin{tabular}{ccccccccccccc}
\hline
 & $(100\omega, \mu)$ & & EH & aEH & LP1 & LP2 & C & T3 & T & MT & N \\
 \hline
  & (0, --) &  & 94.4 & 94.4 & 92.7 & 84.8 & 87.5 & 92.7 & 93.7 & 94.5 & 94.7 \\
 & (5, 5) &  & 94.4 & 94.3 & 93.2 & 85.7 & 89.3 & 94.3 & 95.2 & 94.6 & 87.7 \\
 & (10, 5) &  & 93.4 & 92.6 & 92.5 & 86.4 & 90.5 & 93.2 & 93.1 & 93.7 & 86.2 \\
 & (5, 10) &  & 95.0 & 94.9 & 93.9 & 85.8 & 89.5 & 95.3 & 97.5 & 95.0 & 86.2 \\
CP & (10, 10) &  & 94.3 & 93.8 & 94.6 & 86.7 & 91.2 & 96.5 & 97.5 & 94.5 & 86.0 \\
 & (5, 15) &  & 94.8 & 94.3 & 93.4 & 85.5 & 90.1 & 95.4 & 98.2 & 94.6 & 86.2 \\
 & (10, 15) &  & 94.2 & 94.1 & 94.5 & 86.3 & 91.1 & 97.2 & 98.5 & 94.6 & 85.7 \\
 & (5, 20) &  & 94.7 & 94.4 & 94.1 & 86.0 & 89.8 & 95.7 & 98.7 & 95.0 & 86.2 \\
 & (10, 20) &  & 94.6 & 94.3 & 94.3 & 86.4 & 91.1 & 97.3 & 99.3 & 94.8 & 86.2 \\
 \hline
 & (0, --) &  & 23.9 & 23.9 & 23.5 & 22.7 & 23.8 & 23.9 & 24.2 & 23.9 & 23.9 \\
 & (5, 5) &  & 25.7 & 25.7 & 25.7 & 24.3 & 25.2 & 26.6 & 29.3 & 25.3 & 35.0 \\
 & (10, 5) &  & 28.3 & 28.2 & 29.9 & 26.8 & 27.1 & 30.6 & 36.4 & 28.4 & 42.9 \\
 & (5, 10) &  & 25.0 & 25.0 & 25.1 & 23.6 & 25.2 & 26.8 & 32.8 & 24.9 & 56.7 \\
AL & (10, 10) &  & 26.2 & 26.3 & 27.0 & 25.1 & 26.9 & 31.1 & 48.2 & 26.5 & 75.1 \\
 & (5, 15) &  & 24.9 & 24.9 & 24.9 & 23.5 & 25.3 & 26.9 & 35.0 & 24.9 & 81.4 \\
 & (10, 15) &  & 26.2 & 26.3 & 26.7 & 24.7 & 27.0 & 31.4 & 58.5 & 26.4 & 109.5 \\
 & (5, 20) &  & 24.9 & 24.9 & 24.8 & 23.4 & 25.3 & 26.9 & 35.8 & 24.8 & 105.8 \\
 & (10, 20) &  & 26.0 & 26.0 & 26.4 & 24.8 & 26.8 & 31.3 & 66.8 & 27.3 & 144.4 \\
 \hline
\end{tabular}
\end{center}
\end{table}

\begin{table}[htp!]
\caption{\footnotesize Computation time (seconds) multiplied by inefficiency factors of the five methods (EH, LP1, T3, MT and N) under four cases of $n$.
\label{tab:sim-cpt}}
\begin{center}
\begin{tabular}{ccccccccccccc}
\hline
&& \multicolumn{4}{c}{$n$}\\
 && 300 & 1200 & 2100 & 3000 \\
 \hline
EH && 17.0 & 73.4 & 132.7 & 198.5 \\
LP1 && 26.2 & 101.1 & 172.1 & 239.4 \\
T3 && 2.5 & 5.0 & 8.0 & 10.9 \\
MT && 2.0 & 4.9 & 7.9 & 10.8 \\
N && 1.1 & 2.0 & 2.9 & 3.9 \\
 \hline
\end{tabular}
\end{center}
\end{table}

\begin{figure}[!htb]
\centering
\includegraphics[width=14cm,clip]{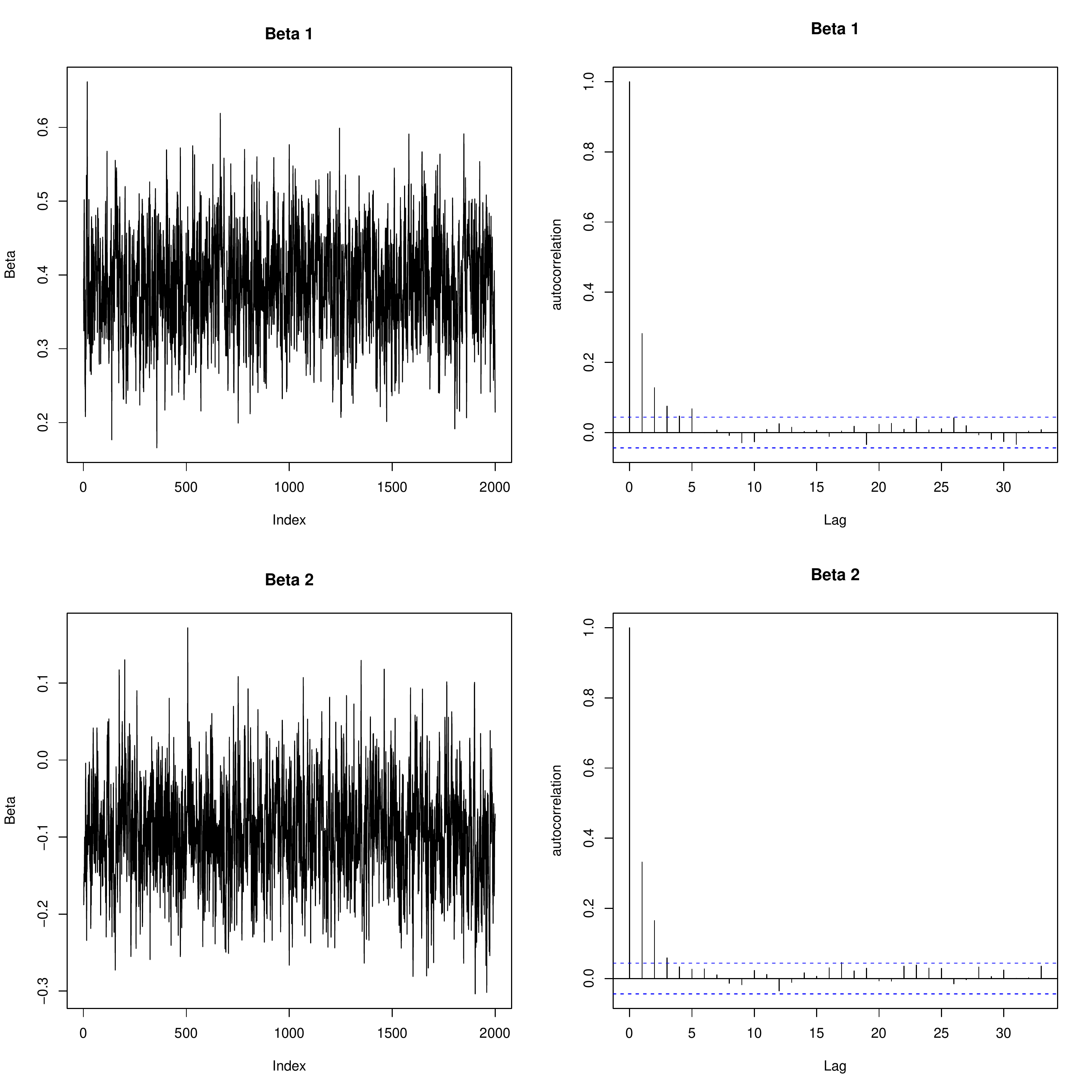}
\caption{Sample paths (Left) and autocorrelation (Right) of the posterior samples of $\beta_2$ and $\beta_3$ in the EH model applied to a simulated data with $p=20$, $\mu=5$ and $\omega=0.05$.
}
\end{figure}

\section*{References}

\begin{itemize}
\item[{[1]}]
Carvalho, C., Polson, N.G. and Scott, J.G. (2010). 
The horseshoe estimator for sparse signals. 
\textit{Biometrika}, \textbf{97}, 465--480. 
\item[{[2]}]
Gagnon, P., Desgagne, P. and Bedard, M. (2019). 
A New Bayesian Approach to Robustness Against Outliers in Linear Regression. 
\textit{Bayesian Analysis}, \textbf{15}, 389--414. 
\item[{[3]}]
Hamura, Y., Irie, K. and Sugasawa, S. (2020). 
Shrinkage with robustness: Log-adjusted priors for sparse signals. 
\textit{arXiv preprint arXiv:2001.08465}
\end{itemize}

\end{document}